\theoremstyle{plain}
\newtheorem{theorem}{Theorem}
\newtheorem{corollary}[theorem]{Corollary}
\newtheorem{lemma}[theorem]{{\sc Lemma}}
\newtheorem{claim}[theorem]{{\sc Claim}}
\newtheorem{example}[theorem]{{\sc Example}}
\newtheorem{lclaim}{{\sc Claim}}[theorem]
\newtheorem{tlemma}{{\sc Lemma}}[theorem]
\newcommand{\mf}[1]{\mathbf{#1}}
\newcommand{\mprod}{\!\times\!}
\newcommand{\sqprod}{\!\times^{\textit{sq}}}    
\newcommand{\hh}{\mathbf{h}}
\newcommand{\vv}{\mathbf{v}}
\newcommand{\sizeh}{\textit{h\_size}}
\newcommand{\sizev}{\textit{v\_size}}
\newcommand{\rank}{\textit{rank}\,}
\newcommand{\ub}[2]{\textit{ub}^{#1}({#2})}
\newcommand{\weight}{\textit{weight}\,}
\newcommand{\nor}[1]{r_{#1}}
\newcommand{\noi}[1]{i_{#1}}
\newcommand{\noc}[1]{n_{#1}}
\newcommand{\nok}[1]{k_{#1}}
\newcommand{\con}[1]{\Gamma^{#1}}
\newcommand{\nN}{\mathbb{N}}
\newcommand{\nNp}{\mathbb{N}^+}
\newcommand{\infy}{\aleph_0}
\newcommand{\maxz}[1]{\textit{max}\,({#1})}
\newcommand{\minz}[1]{\textit{min}\,({#1})}
\newcommand{\maxS}[1]{\textit{max}\,({#1})}
\newcommand{\minS}[1]{\textit{min}\,({#1})}
\newcommand{\gf}{\mathcal{G}_{\F}}
\newcommand{\hfub}{\mathcal{H}_{\F}}
\newcommand{\hfubm}{\mathcal{H}_{\F}^{\textit{b}}}
\newcommand{\efub}{E^\textit{ub}_\F}
\newcommand{\tfub}{\mathcal{T}_{\F}}
\newcommand{\gfp}{\mathcal{G}_{\F}^+}
\newcommand{\scc}{\mathcal{S}}
\newcommand{\swall}{\textit{Sw}_{\F}}
\newcommand{\Pub}[2]{P_\textit{ub}^{#1}({#2})}
\newcommand{\hset}[1]{\textit{Set}({#1})}
\newcommand{\wset}[1]{\textit{Gen}({#1})}
\newcommand{\hsetx}[2]{\textit{Set}^{#1}({#2})}
\newcommand{\hsetlb}[1]{\textit{Set}_{\textit{lb}}({#1})}
\newcommand{\treec}[1]{\tfub({#1})}
\newcommand{\xlb}{X_{\textit{lb}}}
\newcommand{\ylb}{Y_{\textit{lb}}}
\newcommand{\yub}{Y^\textit{b}}
\newcommand{\Log}{{\sf Logic\_of}\,}
\newcommand{\Diff}{\mf{Diff}}
\newcommand{\K}{\mf{K}}
\newcommand{\Sfive}{\mf{S5}}
\newcommand{\comm}{[\Diff,\Diff]}
\newcommand{\sxd}{\Sfive\mprod\Diff}
\newcommand{\ssqxd}{\Sfive\sqprod\Diff}
\newcommand{\dxd}{\Diff\mprod\Diff}
\newcommand{\dsqxd}{\Diff\sqprod\Diff}
\newcommand{\cluster}{bi-cluster}
\newcommand{\clusters}{bi-clusters}
\newcommand{\grid}{grid of bi-clusters}
\newcommand{\grids}{grids of bi-clusters}
\newcommand{\bcno}[1]{no{#1}}
\newcommand{\bcinf}[1]{inf{#1}}
\newcommand{\bcvhsw}{v2hsw}
\newcommand{\bchvsw}{h2vsw}
\newcommand{\bceqsw}{$=$sw}
\newcommand{\bchstrict}{hstrict}
\newcommand{\bcvstrict}{vstrict}
\newcommand{\bchvstrict}{hvstrict}
\newcommand{\bcfree}{free}
\newcommand{\Dh}{\Diamond_\hh}
\newcommand{\Bh}{\Box_\hh}
\newcommand{\Dv}{\Diamond_\vv}
\newcommand{\Bv}{\Box_\vv}
\newcommand{\Rh}{R_\hh}
\newcommand{\Rv}{R_\vv}
\newcommand{\Rhp}{R_\hh^+}
\newcommand{\Rvp}{R_\vv^+}
\newcommand{\urel}[1]{\forall_{#1}}
\newcommand{\drel}[1]{\ne_{#1}}
\newcommand{\Qh}{Q_\hh}
\newcommand{\Qv}{Q_\vv}
\newcommand{\numin}{\nu_{\textit{min}}^{\F}}
\newcommand{\smin}{\xi_{\textit{min}}^{\F}}
\newcommand{\srdf}{\mathsf{sRdf}_2}
\newcommand{\srdfsq}{\mathsf{sRdf}_2^{\textit{sq}}}
\newcommand{\crs}{\mathsf{Crs}_n}
\newcommand{\F}{{\mathfrak F}}
\newcommand{\G}{{\mathfrak G}}
\newcommand{\Hh}{{\mathfrak H}}
\newcommand{\M}{{\mathfrak M}}
\newcommand{\N}{{\mathfrak N}}
\newcommand{\C}{\mathfrak C}
\newcommand{\Cc}{\mathcal{C}}
\newcommand{\rr}{%
\setlength{\unitlength}{.07cm}
\begin{picture}(5,3)
\thicklines
\put(0,1.2){\circle{3}}
\put(3,1.2){\circle{3}}
\end{picture}}
\newcommand{\ii}{%
\setlength{\unitlength}{.07cm}
\begin{picture}(5,3)
\thicklines
\put(0,1.2){\circle*{3.5}}
\put(3,1.2){\circle*{3.5}}
\end{picture}}
\newcommand{\ri}{%
\setlength{\unitlength}{.07cm}
\begin{picture}(5,3)
\thicklines
\put(0,1.2){\circle{3}}
\put(3,1.2){\circle*{3.5}}
\end{picture}}
\newcommand{\ir}{%
\setlength{\unitlength}{.07cm}
\begin{picture}(5,3)
\thicklines
\put(0,1.2){\circle*{3.5}}
\put(3,1.2){\circle{3}}
\end{picture}}
\newcommand{\ixx}{%
\setlength{\unitlength}{.07cm}
\begin{picture}(5,3)
\thicklines
\put(0,1.2){\circle*{3.5}}
\put(2.1,-.3){{\tiny *}}
\put(3,1.2){\circle{3}}
\end{picture}}
\newcommand{\xxi}{%
\setlength{\unitlength}{.07cm}
\begin{picture}(5,3)
\thicklines
\put(0,1.2){\circle{3}}
\put(-.9,-.3){{\tiny *}}
\put(3,1.2){\circle*{3.5}}
\end{picture}}
\newcommand{\avar}[1]{\mathsf{a}_{#1}}
\newcommand{\bvar}[1]{\mathsf{b}_{#1}}
\newcommand{\avarr}{\mathsf{a}}
\newcommand{\bvarr}{\mathsf{b}}
\newcommand{\cvarr}{\mathsf{c}}
\newcommand{\dvarr}{\mathsf{d}}
\newcommand{\Cvar}[1]{\mathsf{c}_{#1}}
\newcommand{\commf}{\mathsf{comm\_pse}}
\newcommand{\alphaC}{\mathsf{clash}_{\C}}
\newcommand{\betaf}{\mathsf{path}}
\newcommand{\deltaf}{\mathsf{large}}
\newcommand{\atp}[1]{\hat{\mathsf{a}}_{#1}}
\newcommand{\btp}[1]{\hat{\mathsf{b}}_{#1}}
\newcommand{\ctp}[1]{\hat{\mathsf{c}}_{#1}}
\newcommand{\ddtp}{\hat{\mathsf{d}}}
\newcommand{\cctp}{\hat{\mathsf{c}}}
\newcommand{\nof}{\mathsf{impossible}_\C}
\newcommand{\pathf}[1]{\mathsf{path}_{\F}^{#1}}
\newcommand{\firstf}[1]{\mathsf{small}_{\F}^{#1}}
\newcommand{\lastf}[1]{\mathsf{large}_{\F}^{#1}}
\newcommand{\badpathf}[1]{\mathsf{bad\_path}_{\F}^{#1}}
\newcommand{\xvar}{\mathsf{x}}
\newcommand{\yvar}{\mathsf{y}}
\newcommand{\hvar}[2]{\mathsf{a}^{#1}_{#2}}
\newcommand{\cvar}[2]{\mathsf{c}^{{#1}{#2}}}
\newcommand{\solf}{\mathsf{solution}_{\F}}
\newcommand{\xvarbar}{\overline{\mathsf{x}}}
\newcommand{\yvarbar}{\overline{\mathsf{y}}}
\newcommand{\zvarbar}{\overline{\mathsf{z}}}
\newcommand{\ctpp}[2]{\hat{\mathsf{c}}^{{#1}{#2}}}
\newcommand{\treef}[1]{\mathsf{tree}({#1})}
\newcommand{\treefi}[2]{\mathsf{tree}_{#1}({#2})}
\newcommand{\antefone}{\mathsf{upper\_bound}_\F}
\newcommand{\anteftwo}{\mathsf{switch}_\F}
\newcommand{\swf}[1]{\mathsf{switch}_\F^{#1}}
\newcommand{\antefthree}{\mathsf{lower\_bound}_\F}
\newcommand{\lbf}[1]{\mathsf{lower\_bound}_\F^{#1}}
\newcommand{\consf}{\mathsf{out}_\F}
\newcommand{\finalf}[1]{\mathsf{out}_{\F}({#1})}
\newcommand{\badsqf}{\mathsf{square\_bad}_\F}
\newcommand{\solfsub}{\mathsf{solution}_{\F^-}}
\begin{document}


\title{Non-finitely axiomatisable modal product logics with\\ infinite canonical axiomatisations}

\author{Christopher Hampson$^a$, Stanislav Kikot$^b$, Agi Kurucz$^a$ and S\'ergio Marcelino$^c$\\
\ \\
{\small ${}^a$Department of Informatics, King's College London, U.K.}\\
{\small ${}^b$Institute for Information Transmission Problems, Moscow, Russia}\\
{\small ${}^c$Instituto de Telecomunica\c{c}\~oes, Portugal}}

\date\

\maketitle

\begin{abstract}
Our concern is the axiomatisation problem for modal and algebraic logics that correspond to various 
fragments of two-variable first-order logic with counting quantifiers.
In particular,
we consider modal products with $\Diff$, the propositional unimodal logic of the difference operator.
We show that the two-dimensional product logic $\dxd$ is non-finitely axiomatisable, but can be axiomatised by infinitely
many Sahlqvist axioms. We also show that its `square' version 
(the modal counterpart 
of the substitution and equality free fragment of two-variable first-order logic with counting to two) is 
non-finitely axiomatisable over $\dxd$, but can be axiomatised by adding infinitely
many Sahlqvist  axioms. 
These are the first examples of products of finitely axiomatisable modal logics that are not finitely axiomatisable,
but axiomatisable by explicit infinite sets of canonical axioms.
\end{abstract}





\section{Introduction}

Ever since their introduction \cite{Segerberg73,Shehtman78,Gabbay&Shehtman98},
\emph{products of modal logics\/}---proposi\-tional multimodal logics determined by classes of product 
frames---have been extensively studied; see \cite{gkwz03} for a comprehensive exposition and 
further references.
In this paper we consider the problem of finding explicit infinite `nice' axiomatisations for non-finitely axiomatisable two-dimensional modal product logics. By `nice' here we mean formulas to which both the canonicity and
first-order correspondence properties of Sahlqvist formulas apply.

Canonicity is an important tool for proving Kripke completeness of propositional multimodal logics
\cite{Blackburnetal01,Goldblatt87}.
A modal logic is \emph{canonical} if it is valid in all its canonical frames.
The analogous algebraic notion of \emph{canonical extension} is central in
the theory of Boolean algebras with operators (BAOs) \cite{Jonsson&Tarski51}.
 A variety of BAOs is \emph{canonical} if it is closed under taking canonical extensions.
%
A modal formula is \emph{canonical} if the modal logic axiomatised by it is canonical.
Though in general canonicity of a formula is an undecidable `semantical' property \cite{Kracht96},
there exist known syntactical classes of canonical formulas, such as Sahlqvist formulas 
\cite{Sahlqvist75},
and their generalisations by Goranko and Vakarelov \cite{gv06}.

While any set of canonical formulas always axiomatises a canonical logic, Hodkinson and
Venema \cite{hv05} show that there are canonical logics that are \emph{barely canonical}
in the sense that every axiomatisation for such logics must contain infinitely many non-canonical
axioms. Further examples of barely canonical elementarily generated logics are given
in \cite{goldblatthodkinson07,BulianH13,Kikot15}.
Kikot \cite{Kikot15} also obtained the
following general dichotomy result: If a class of Kripke frames is definable by first-order formulas of the
form $\displaystyle\forall x_0\exists x_1\dots\exists x_n\bigwedge R_\lambda(x_i, x_j)$,  then the modal logic generated by such a class is either barely canonical or can be 
axiomatised by a single generalised Sahlqvist formula.
In this paper we show some elementarily generated modal logics that are outside of the scope of
this dichotomy.

It is well known that the two-dimensional (2D) modal product logic $\Sfive\mprod\Sfive$
has a finite axiomatisation with Sahlqvist axioms, describing two commuting $\Sfive$-modalities \cite{Henkinetal85}. 
($\Sfive\mprod\Sfive$ is the modal counterpart of the substitution and equality free fragment of  two-variable first-order logic, where only relational atomic formulas of the form $R(x_0,x_1)$ are allowed.)
On the other hand, for $n\geq 3$ the 
$n$-dimensional product logic $\Sfive^n$ is non-finitely axiomatisable \cite{Johnson69}
and barely canonical (even though it is canonical and recursively enumerable \cite{Henkinetal85}).
There are also known examples of recursively enumerable (even decidable) 2D products 
of finitely axiomatisable modal logics that are not finitely axiomatisable, such as ${\bf K4.3}\mprod\Sfive$ \cite{Kurucz&Marcelino12}.
However, so far no canonical axiomatisations for non-finitely axiomatisable products of
finitely axiomatisable logics have been known.

Instead of $\Sfive$ (the modal logic of all equivalence relations),
here we consider modal products with the finitely axiomatisable \cite{Segerberg80} logic $\Diff$ of all non-equality frames $(W,\drel{W})$.
An arbitrary frame for $\Diff$ is a \emph{pseudo-equivalence} relation: its equivalence classes might contain
both reflexive and irreflexive points. 
(In particular, equivalence relations are frames for $\Diff$, and so $\Diff\subseteq\Sfive$.)
It is easy to see that, unlike equivalence relations, the class of pseudo-equivalence relations is not Horn-definable. Therefore, the general theorem of Gabbay and Shehtman \cite{Gabbay&Shehtman98} on axiomatising
2D products of Horn-definable logics by their commutator does not apply to $\dxd$. 
However, as pseudo-equivalence relations form an elementary class, it does follow from general results \cite{goldblatt89,Kurucz10,Gabbay&Shehtman98}
that $\dxd$ is canonical and recursively enumerable.

We show that the 2D product logic $\dxd$ is non-finitely axiomatisable, but can be axiomatised by infinitely
many Sahlqvist axioms. We also show that its `square' version \mbox{$\dsqxd$} (the modal counterpart 
of the substitution and equality free fragment of two-variable first-order logic with counting to two) is 
non-finitely axiomatisable over $\dxd$, but can be axiomatised by adding infinitely 
many axioms that are generalised Sahlqvist \`a la Goranko and Vakarelov \cite{gv06}.
This way we give the first examples of products of finitely axiomatisable modal logics that are not finitely axiomatisable
but axiomatisable by explicit infinite sets of canonical axioms. 
By the correspondence theorem for (generalised) Sahlqvist formulas it follows that the classes
of all frames for both $\dxd$ and $\dsqxd$ are elementary
(unlike the frames for $\Diff^n$ and $\Sfive^n$ whenever $n\geq 3$, see \cite{hh09,Kurucz10}).
As $\Diff$-modalities are `self-reversive', it also follows \cite{gv01} that $\dsqxd$  in fact can be axiomatised by infinitely many Sahlqvist axioms.

Our results can also be formulated in an algebraic logic setting.
Given the full Boolean set algebra $\mathfrak{B}(U\times V)$ of all subsets of the Cartesian
product $U\times V$ of some non-empty sets $U,V$, one can define two additional unary operations 
$C^{\ne}_0$, $C^{\ne}_1$ on it by taking,
for every $X\subseteq U\times V$,
\begin{align*}
C^{\ne}_0 (X)  & = \bigl\{ (u,v) : \mbox{there is  $u'\ne u$ with $(u',v)\in X$}\bigr\}\\
C^{\ne}_1 (X) &  = \bigl\{ (u,v) : \mbox{there is $v'\ne v$ with $(u,v')\in X$}\bigr\}.
\end{align*}
Just like usual cylindrifications are algebraisations of the existential quantifier in first-order logic,
these \emph{strict-cylindrifications} algebraise the `there is a different' first-order quantifier.
We define $\srdf$ as the variety generated by all set algebras of this kind,
and $\srdfsq$ as the variety generated by those ones where $U=V$ for the Boolean unit $U\times V$.
Members of $\srdf$ ($\srdfsq$) might be referred to as \emph{two-dimensional rectangularly} (\emph{square}) \emph{representable diagonal-free strict-cylindric algebras\/}.
It follows from general considerations that both $\srdf$ and $\srdfsq$ are canonical varieties.
We show the following:
\begin{itemize}
\item
The equational theory of $\srdf$ is non-finitely axiomatisable, but it has an infinite Sahlqvist axiomatisation.
\item
The equational theory of $\srdfsq$ is non-finitely axiomatisable over that of $\srdf$, but it has an infinite generalised Sahlqvist axiomatisation.
\end{itemize}
While our varieties are the first such among `full rectangular' algebraisations of finite variable fragments of classical first-order logic, a similarly behaving `non-rectangular' algebraisation has been known.
Andr\'eka and N\'emeti \cite[5.5.12]{Henkinetal85} showed that the equational theory of 
the variety $\crs$ of $n$-dimensional 
\emph{relativised cylindric algebras}
is non-finitely axiomatisable whenever $n\geq 3$, 
while Resek and Thompson \cite{ResekThompson91,Monk2000} gave an infinite Sahlqvist axiomatisation for it, for any $n$.


\section{Our results and proof methods}\label{results}

\subsection{Non-finite axiomatisability}\label{proofmethodnonfin}


\begin{theorem}\label{co:nonfinax}
For any Kripke complete logic $L$ with $\K\subseteq L\subseteq\Sfive$, 
$L\mprod\Diff$ is not axiomatisable using finitely many propositional variables.
Thus, $\dxd$ is not finitely axiomatisable.
\end{theorem}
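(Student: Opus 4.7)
The plan is to prove the stronger claim: for every $n \geq 1$ there exists a frame $\F_n$ and a formula $\psi_n$ using more than $n$ propositional variables such that $\psi_n$ is valid in every frame of $L \mprod \Diff$ but refuted in $\F_n$, while every formula in at most $n$ variables that is valid in $L \mprod \Diff$ remains valid in $\F_n$. Since any finite axiomatisation uses only finitely many propositional variables, this yields both parts of the theorem.

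First, I would construct $\F_n$ as a two-dimensional frame whose horizontal relation is a simple $L$-frame---for instance, a reflexive singleton or small reflexive cluster, which is a frame for every $L$ with $\K \subseteq L \subseteq \Sfive$---and whose vertical relation is a pseudo-equivalence relation built from bi-clusters, i.e.\ classes mixing reflexive and irreflexive points. The arrangement of reflexive versus irreflexive points across the grid is chosen to violate a global commutativity-style condition forced by the product construction, so that $\F_n$ is not a frame for $L\mprod\Diff$. An explicit formula $\psi_n$ (a theorem of $L\mprod\Diff$, provable using only commutativity and basic $\Diff$-reasoning) is then exhibited which is refuted on $\F_n$ and, by design of $\F_n$, necessarily uses more than $n$ variables to witness the refutation.

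The main obstacle, and the heart of the argument, is showing that $\F_n$ validates every $n$-variable theorem of $L \mprod \Diff$. The approach is a colouring and folding argument: given an assignment of $n$ propositional variables on $\F_n$, its points are partitioned into at most $2^n$ types, and by a pigeonhole or Ramsey-style selection---using that $\F_n$ is taken sufficiently large relative to $n$---one extracts a homogeneous substructure which is the p-morphic (or bisimilar) image of a genuine $L\mprod\Diff$-frame $\G$. Any refutation on $\F_n$ of an $n$-variable formula then lifts to a refutation on $\G$, contradicting its assumed validity in $L\mprod\Diff$. The delicate part is the interaction with $\Diff$: already the variable-free formula $\Box_\vv\bot$ distinguishes reflexive from irreflexive points, so the colouring of bi-clusters must be refined with this distinction in mind, and the horizontal $L$-component must be chosen liberally enough to be sound for the whole range $\K\subseteq L\subseteq\Sfive$ uniformly, so that the folded frame $\G$ is guaranteed to be an $L\mprod\Diff$-frame regardless of which $L$ one has fixed.
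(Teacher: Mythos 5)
Your overall architecture matches the paper's: for each $m$ one builds a frame that refutes some validity of $L\mprod\Diff$ yet validates every $m$-variable theorem, the latter shown by colouring points with their $2^m$ variable-types and folding same-coloured points together, via a p-morphism of models, onto a frame that genuinely is a frame for $L\mprod\Diff$ (this is Lemma~\ref{l:nonfinax}); and your observation that the horizontal component should be taken universal, so that the ``good'' frame is a p-morphic image of a product $(U,\urel{U})\mprod(V,\drel{V})$ and hence works uniformly for all $\K\subseteq L\subseteq\Sfive$, is exactly how Theorem~\ref{t:nonfinax} covers the whole range of $L$. Your worry about the direction of the transfer is also resolved the same way as in the paper: the folding is a surjective p-morphism of \emph{models}, so a refutation on the bad frame is pushed forward, not restricted to a substructure.

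There is, however, a genuine gap in what you propose to refute. You say the bad frame ``violates a global commutativity-style condition'' and that the witnessing formula $\psi_n$ is ``provable using only commutativity and basic $\Diff$-reasoning''. This cannot work: the conjunction $\commf$ of commutativity with Segerberg's two $\Diff$-axioms for each modality is a single formula in very few variables, so any frame refuting a consequence of $\comm$ is already distinguishable from product frames with a bounded number of variables, and your indistinguishability claim would fail. The bad frames must be genuine frames for $\comm$ (commuting pseudo-equivalence relations, i.e.\ grids of bi-clusters), and the obstruction must live strictly beyond $\comm$. In the paper it is a \emph{counting} obstruction: $\F_k$ consists of two $\Rh$-linked columns of $k$ and $k+1$ points respectively, all $\Rv$-irreflexive; in any p-morphic preimage $(W_\hh,\Qh)\mprod(W_\vv,\Qv)$ with $\Qv$ a pseudo-equivalence, the $(k+1)$-column forces $k+1$ pairwise $\Qv$-related points in a vertical fibre, and mapping that fibre into a $k$-element column of $\Rv$-irreflexive points is impossible by pigeonhole. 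The good frame $\G_k$ is obtained by merging two (resp.\ three) same-coloured points of each column into a single $\Rv$-\emph{reflexive} point, turning the columns into switch bi-clusters realisable over $(U,\urel{U})\mprod(V,\drel{V})$ with $|U|\geq 2\cdot|V|$. Until you identify this size mismatch (or an equivalent non-$\comm$ obstruction) and check that the merged frame really is a p-morphic image of such a product, the construction of $\F_n$ and $\psi_n$ is not pinned down and the argument does not go through. (A smaller point: the paper never exhibits $\psi_n$ explicitly; it invokes the fact that a finite frame is a frame for $\Log\Cc$ iff it is a p-morphic image of a member of $\Cc$, for $\Cc$ closed under ultraproducts and generated subframes, so the semantic failure suffices.)
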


\begin{theorem}\label{t:sqnonfinax}
$\dsqxd$ is not axiomatisable over $\dxd$ using finitely many propositional variables.
\end{theorem}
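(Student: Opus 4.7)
The plan is to show, for each fixed $n \in \nN$, the existence of a frame $\F_n$ that validates $\dxd$ and also every formula of $\dsqxd$ using at most $n$ propositional variables, yet refutes some specific formula $\psi \in \dsqxd$. Since any $n$-variable axiom set extending $\dxd$ to $\dsqxd$ would have to be valid in $\F_n$, and hence all its multimodal consequences would be valid in $\F_n$ as well, the refutation of $\psi$ in $\F_n$ rules out any such bounded-variable axiomatisation; letting $n$ range over $\nN$ then yields the theorem.

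For $\F_n$ I would take a \emph{rectangular} product of two $\Diff$-frames, $\F_n = (W^h_n, \neq_{W^h_n}) \mprod (W^v_n, \neq_{W^v_n})$, with $|W^h_n| \neq |W^v_n|$ and both sides chosen very large in $n$. As a genuine product of two $\Diff$-frames, $\F_n$ automatically belongs to the frame class of $\dxd$. For the refuting formula $\psi$ I would use a short counting formula expressing a row/column size match valid in every square $\Diff \sqprod \Diff$ frame but failing in a sufficiently lopsided $\F_n$: something of the shape ``if $k$ propositional variables horizontally separate $k$ distinct points of some row, then they vertically separate $k$ distinct points of some column'', easily refuted by a pigeonhole assignment whenever one dimension of $\F_n$ is strictly shorter than the other. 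The exact choice of $\psi$ is minor bookkeeping, since squareness is visible through $\Diff$-counting in both dimensions.

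The principal obstacle is showing that every $n$-variable $\dsqxd$-validity lifts to $\F_n$. My approach is the standard one: given any valuation $\mathfrak{v}$ on $\F_n$ in $n$ variables, construct a square $\Diff \sqprod \Diff$ model $(\G_n, \mathfrak{w})$ together with a bounded back-and-forth system whose levels correspond to modal depth and whose corresponding points satisfy the same $n$-variable modal formulas. The key combinatorial observation is that $n$ variables colour the grid with at most $2^n$ colours, so only boundedly many row-profiles and column-profiles occur. By duplicating each such profile the appropriate number of times, saturated above the counting threshold imposed by a prescribed modal depth, one can build a square $\G_n$ in which row- and column-multiplicities match those of $\F_n$ up to that threshold; a Hintikka-style game, tracking row-types and column-types jointly, completes the indistinguishability argument.

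The bulk of the technical work lies in verifying this last step: $\Diff$-modalities can detect exact multiplicities below any fixed bound, so one must handle the commutation between horizontal and vertical modalities carefully through the back-and-forth, keep row-type and column-type counts consistent under each move, and arrange the thresholds in $\G_n$ to dominate every modal depth relevant to a candidate $n$-variable axiomatisation. Once this indistinguishability is in place, no extension of $\dxd$ by $n$-variable axioms can force $\psi$, proving the theorem.
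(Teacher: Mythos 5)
Your overall strategy---for each $n$, exhibit a frame for $\dxd$ that validates every $n$-variable formula of $\dsqxd$ yet refutes some fixed $\psi\in\dsqxd$---is exactly the pattern the paper uses, and the logical skeleton (frame validity is closed under substitution, modus ponens and necessitation, so such a frame defeats any $n$-variable axiomatisation over $\dxd$) is sound. The problem is your choice of witness frame and the argument for the crucial indistinguishability step. You take $\F_n$ to be an actual non-square product $(W^h_n,\drel{W^h_n})\mprod(W^v_n,\drel{W^v_n})$ and propose to transfer $n$-variable validities from square products by observing that ``$n$ variables colour the grid with at most $2^n$ colours, so only boundedly many row-profiles and column-profiles occur'' and then duplicating profiles. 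This is false: a row-profile of a column is a function from the $|W^v_n|$ rows to the $2^n$ point-colours, so there are up to $(2^n)^{|W^v_n|}$ possible column-profiles, and an adversarial valuation can make all columns pairwise distinct. No pigeonhole collapse or profile duplication is then available, and since distinct points of a product of difference frames can only be identified by a p-morphism of models when they carry the same colour \emph{and} the identification respects the grid structure, the proposed construction of a matching square model does not get off the ground. What remains is an Ehrenfeucht--Fra\"iss\'e-style equivalence between arbitrary $n$-coloured $N\mprod M$ and $K\mprod K$ difference-product structures for the two-variable counting-to-two language---a substantive combinatorial claim that your sketch neither states precisely nor proves, and that is not implied by the paper's lemmas (validity transfers \emph{onto} p-morphic images, not back onto their preimages). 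A secondary worry: a single square $\G_n$ cannot in general be depth-$d$-equivalent to $\F_n$ for all $d$ at once; you would need to let $\G_n$ depend on the depth of each candidate axiom, which your phrasing about ``dominating every modal depth'' glosses over.

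The paper sidesteps all of this by working with tiny \grids{} instead of full products: $\Hh_k$ consists of just two horizontally adjacent \clusters{} of $k$ mutually $\Rh$- and $\Rv$-related $\Rv$-irreflexive points. Because every point of a \cluster{} is related to every other, the pigeonhole applies to the $2^m$ \emph{point}-colours (not column-profiles): for $k>2^m$ any $m$-generated model on $\Hh_k$ has three indistinguishable points in each \cluster, which can be merged into the reflexive point of the corresponding \cluster{} of $\G_k$, giving a genuine p-morphism of models onto a model over $\G_k$; since $\G_k$ is a p-morphic image of the square product $(\omega,\drel{\omega})\mprod(\omega,\drel{\omega})$, this transfers \emph{all} $m$-variable formulas of $\dsqxd$ to $\Hh_k$ at every modal depth simultaneously, while $\Hh_k$ is a frame for $\dxd$ (being a p-morphic image of a suitably lopsided rectangular product) but not for $\dsqxd$. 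If you want to rescue your version, you should replace your full product $\F_n$ by such a collapsed \grid{} (or prove the much stronger EF-equivalence for products directly, which would be new work well beyond ``minor bookkeeping'').
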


After providing the necessary definitions in \S\ref{s:prelim} and some general tools in \S\ref{grids},
Theorems~\ref{co:nonfinax} and \ref{t:sqnonfinax} are proved in \S\ref{nonfinax}.
In our proofs, we will use the following pattern.
We show that every axiomatisation of a logic $L$ must contain infinitely many propositonal
variables by providing two infinite sequences of frames $\F_k$ and $\G_k$ such that
\begin{itemize}
\item
every $\F_k$ is a frame for $L$, while every $\G_k$ is not,
\item
but if $k$ is sufficiently large compared to $m$, then we cannot distinguish between $\F_k$ 
and $\G_k$ using $m$ many propositional variables.
\end{itemize}


\subsection{Infinite canonical axiomatisations}\label{proofmethod}

\begin{theorem}\label{t:axdxd}
\begin{itemize}
\item[{\rm (i)}]
There is an infinite axiomatisation for $\dxd$ consisting of Sahlqvist formulas. 

\item[{\rm (ii)}]
The class of all frames for $\dxd$ is elementary.

\item[{\rm (iii)}]
For every countable rooted frame $\F$, $\F$ is a frame for $\dxd$ iff 
 $\F$ is the p-morphic image of some product of two difference frames.
\end{itemize}
\end{theorem}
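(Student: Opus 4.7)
The plan is to prove all three parts together by exhibiting an explicit family of Sahlqvist axioms and establishing the representation theorem (iii), from which both (i) and (ii) will follow. The candidate axioms consist of: Sahlqvist axioms for $\Diff$ applied separately to $\Bh$ and $\Bv$ (encoding that each relation is a pseudo-equivalence), the commutativity axioms $\Dh\Dv p \to \Dv\Dh p$ and $\Dv\Dh p \to \Dh\Dv p$, together with an infinite family $\{\alpha_n : n \geq 1\}$ of Sahlqvist axioms whose first-order correspondents express that certain ``almost-grid'' configurations of size $n$ are forced to close up to genuine grid configurations. Soundness---validity of each such axiom in every product $\F_1 \mprod \F_2$ of $\Diff$-frames---amounts to a combinatorial check on the corresponding first-order property, while being Sahlqvist is a purely syntactic verification at the level of each formula.

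The heart of the argument is the representation theorem in (iii): given a countable rooted frame $\F$ validating all these axioms, one must construct a product $\G_\hh \mprod \G_\vv$ of $\Diff$-frames together with a surjective p-morphism onto $\F$. The strategy is to use the $\Diff$-axioms on each modality, together with commutativity, to organise the points of $\F$ into a \grid, with the $\Rh$-equivalence classes forming rows, the $\Rv$-equivalence classes forming columns, and each row-column intersection being a \cluster\ (a small set of points that may be reflexive or irreflexive in each of the two relations). The infinite family of grid axioms guarantees that row and column indices can be chosen coherently throughout $\F$ and that every row-column combination that should exist actually does. One then defines $\G_\hh$ whose points encode the columns together with labels reflecting the internal reflexivity pattern of the bi-clusters, and similarly $\G_\vv$ from the rows; each is equipped with the genuine difference relation, making them $\Diff$-frames. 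The p-morphism sends a pair to the unique point in the corresponding bi-cluster determined by the labels, with the forth conditions immediate from the product structure and the back conditions being exactly what the grid axioms were designed to provide.

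Part (ii) is then immediate from (i) by the Sahlqvist correspondence theorem. Completeness of the axiomatisation in (i) follows from (iii) by a standard canonical-frame argument: if a formula is not derivable from the axioms, then by canonicity of Sahlqvist formulas it is refuted on the canonical frame, and by passing to the rooted generated subframe containing the refuting world we obtain a countable rooted frame for the axioms, which by (iii) is a p-morphic image of a product, so the formula is refuted in $\dxd$. The main obstacle is the simultaneous design and analysis of the grid axioms: they must be expressive enough to force the combinatorial closure conditions needed for the representation construction, yet syntactically restricted so that each individual axiom falls within the Sahlqvist class. In particular, one must carefully handle the interaction between reflexive and irreflexive points within each bi-cluster, because the product of two pseudo-equivalences naturally produces several distinct patterns of local reflexivity inside each bi-cluster, and the axioms must guarantee that exactly the right patterns occur in $\F$ for a coherent realisation as a product to be possible.
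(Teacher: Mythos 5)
Your meta-level skeleton (Sahlqvist axioms, canonicity plus correspondence plus a L\"owenheim--Skolem reduction to countable rooted frames, then a representation theorem for those) is exactly the pattern the paper follows, so the overall architecture is fine. The gap is in what the infinite axiom family has to say. The grid structure itself --- that the $\Rh$- and $\Rv$-classes organise the frame into rows and columns with every row--column intersection present --- is already forced by the single finite axiom $\commf$ (commuting pseudo-equivalences); this is the content of Lemma~\ref{l:grid}. So an infinite family whose correspondents ``force almost-grids to close up to genuine grids'' is aimed at a problem a finite axiom already solves, and cannot be where the non-finite axiomatisability lives. The actual obstructions to a countable \grid{} being a p-morphic image of a product of difference frames are: (a) the presence of an \emph{impossible} \cluster{} (one containing an $\Rh$-irreflexive point and an $\Rv$-irreflexive point but no doubly reflexive point, in the combinations of Table~\ref{tt:finclusters}); and (b) the unsolvability of the system $\con{\F}$ of cardinality constraints that the individual \clusters{} impose on the sizes of the two product coordinates --- a \cluster{} of strict type pins a coordinate to an exact finite value, a switch \cluster{} forces one coordinate to be at least (twice) the other, and these constraints propagate along paths in the digraph $\gf$; unsolvability is witnessed by a \emph{bad path}. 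Designing Sahlqvist formulas that detect (a) and, especially, (b) is the technical heart of the proof, and nothing in your proposal addresses it.

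A second, independent problem is your p-morphism construction: sending a pair to ``the unique point in the corresponding bi-cluster determined by the labels'' of its row and column cannot work, because \clusters{} are not singletons and a surjective p-morphism from a product of difference frames onto a \cluster{} does not factor through separate labels on the two coordinates. If it did, each point's preimage inside the relevant block would be a combinatorial rectangle $U'\mprod V'$, and then the backward condition fails for any $\Rh$- or $\Rv$-irreflexive point of the \cluster{} (one needs, for each fixed column, a \emph{different} row hitting that point, which forces a Latin-square-like, genuinely two-dimensional preimage pattern). The paper handles this by constructing the p-morphism \cluster{}-by-\cluster{} via Latin squares and a step-by-step game (Lemma~\ref{l:clusterpm}), and then gluing these local p-morphisms using the ``fitting'' condition of Lemma~\ref{l:fits} --- which is precisely why the global size constraints in (b) arise and why infinitely many axioms are needed to exclude their failure.
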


\begin{theorem}\label{t:sqax}
\begin{itemize}
\item[{\rm (i)}]
$\dsqxd$ can be axiomatised by adding infinitely many generalised Sahlqvist formulas to $\dxd$.

\item[{\rm (ii)}]
The class of all frames for $\dsqxd$ is elementary.

\item[{\rm (iii)}]
For every countable rooted frame $\F$, $\F$ is a frame for $\dsqxd$ iff 
 $\F$ is the p-morphic image of some product of two difference frames of the same size.
\end{itemize}
\end{theorem}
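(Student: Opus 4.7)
The plan is to parallel the proof of Theorem~\ref{t:axdxd}, strengthening the axiomatisation of $\dxd$ by infinitely many formulas that force the two components of the underlying product frame to be equi-sized. Concretely, for each $n\geq 1$ I would introduce a modal formula $\xi_n$ in the bimodal language of $\Dh,\Dv$ whose intended meaning on product frames is ``the horizontal component has exactly $n$ points if and only if the vertical component has exactly $n$ points''. Since the difference modality can count ``up to $n$'' via $n$ mutually exclusive propositional variables, one can arrange each $\xi_n$ to be generalised Sahlqvist \`a la Goranko and Vakarelov, which gives (i); the upgrade to Sahlqvist noted in the remark of the excerpt then follows from the self-reversiveness of $\Diff$-modalities \cite{gv01}. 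Claim (ii) follows immediately: the correspondence theorem supplies first-order equivalents of each $\xi_n$, and intersecting these with the elementary class for $\dxd$ from Theorem~\ref{t:axdxd}(ii) defines the frames for $\dsqxd$.

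For (iii), the right-to-left direction is routine: a square product of difference frames trivially validates every $\xi_n$ (both sides of the biconditional hold or fail together by construction), and p-morphisms preserve validity. The substantive direction is left-to-right. Given a countable rooted frame $\F$ for $\dsqxd$, Theorem~\ref{t:axdxd}(iii) already furnishes a p-morphism $f : \G_1 \times \G_2 \twoheadrightarrow \F$ for some difference frames $\G_1,\G_2$, and the task reduces to modifying the representation so that $|\G_1|=|\G_2|$. If both components can be chosen countably infinite we are done; otherwise, when (say) the horizontal component has finite size $n$, the validity of $\xi_n$ on $\F$ (together with the first-order correspondent of $\xi_n$) should force the vertical ``generator'' to be choosable of the same size~$n$.

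The main obstacle lies in this last step: coupling the cardinalities of the two components during the construction. The grid-building machinery behind Theorem~\ref{t:axdxd}(iii) produces horizontal and vertical generators by essentially independent processes driven by $\F$, with no a priori reason for the sizes to match; the role of the $\xi_n$ is precisely to supply the missing coupling. I would need to refine that construction so that whenever a finite cardinality is forced on one side (as witnessed by valuations detected by the correspondent of some $\xi_n$), an equi-sized witness can be chosen on the other. I expect this to be a careful but localised modification of the tools from \S\ref{grids} rather than a fundamentally new construction.
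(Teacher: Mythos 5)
There is a genuine gap, and it sits exactly where you locate the ``main obstacle'': the proposed axioms $\xi_n$ cannot do the job, because ``the horizontal component has exactly $n$ points'' is not a property of the frame $\F$ at all. A frame for $\dxd$ is only a p-morphic \emph{image} of a product of difference frames, and by Lemma~\ref{l:fits} it constrains the sizes of the possible preimage components merely through a system $\con{\F}$ of equations and inequalities (e.g.\ $x=6$, $y\geq 2x$, coming from strict and switch \clusters); a single \cluster{} of type (\bcfree) with $n$ points, for instance, is the image of \emph{every} product with both components of size $\geq 2n$, so no exact size is determined. Being a frame for $\dsqxd$ is the \emph{global} property that this system admits a solution with $\sum_{x\in X}\xi(x)=\sum_{y\in Y}\xi(y)$. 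The counterexamples in Figs.~\ref{f:sqbad} and \ref{f:bothways} show why no family of ``size-$n$ iff size-$n$'' statements can detect its failure: in Fig.~\ref{f:sqbad} the constraints $x_2=6$, $y_1=6$, $y_1\geq 2x_1$, $y_2\geq y_1$ force $\sum_x\leq 9<12\leq\sum_y$ for every solution even though each individual variable admits a range of values, and in Fig.~\ref{f:bothways} there are solutions with $\sum_x<\sum_y$ and others with $\sum_x>\sum_y$ but never equality. (There is also a secondary problem: a frame-valid biconditional between an ``at least $n$'' and an ``at most $n$'' condition is not of (generalised) Sahlqvist shape, and counting with $\Diff$ forces negated variables under boxes into the antecedent.)

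Consequently the left-to-right direction of (iii) is not a ``localised modification'' of the tools of \S\ref{grids}. The paper's axioms are indexed not by integers but by (finite presentations of) \emph{square-bad} \grids: Lemma~\ref{l:badsqgrid} first classifies the ways in which $\con{\F}$ can have solutions but no balanced one, and for each case a generalised Sahlqvist implication $\badsqf$ is built whose antecedent $\solf$ encodes the whole finite grid---via an acyclic unravelling of the constraint graph, ``interaction'' formulas for switch \clusters, and lower-bound formulas---so that any valuation satisfying it in a product $(U,\drel{U})\mprod(V,\drel{V})$ yields disjoint witness sets realising a solution of $\con{\F}$ inside $U$ and $V$ (Lemma~\ref{l:solgen}); the consequent then asserts the existence of a point outside these sets, which must hold when $|U|=|V|$ precisely because the solution cannot be balanced. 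Your proposal would need to be replaced by (not refined into) a construction of this kind; as it stands, steps (i) and (iii), and hence (ii), are unsupported.
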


As each generalised Sahlqvist formula is axiomatically equivalent to
a Sahlqvist formula with inverse modalities \cite{gv01,gv06}  and
$\Diff$-modalities are `self-reversive', we have
the following (see \S\ref{dxdsqsahl} for more detail):
\begin{corollary}\label{co:sqS}
There is an infinite axiomatisation for $\dsqxd$ consisting of Sahlqvist formulas. 
\end{corollary}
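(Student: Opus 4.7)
The corollary is an immediate amalgamation of Theorem~\ref{t:axdxd}(i), Theorem~\ref{t:sqax}(i), and the two auxiliary facts singled out in the paragraph preceding the statement. My plan is to proceed in three short steps, starting from the axiomatisation already in hand and then massaging its extra axioms into genuine Sahlqvist shape.

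First, I would take as a base axiomatisation the union of the infinite Sahlqvist axiomatisation of $\dxd$ provided by Theorem~\ref{t:axdxd}(i) with the infinite family of generalised Sahlqvist formulas that Theorem~\ref{t:sqax}(i) adds on top to obtain $\dsqxd$. This already axiomatises $\dsqxd$, so the entire task reduces to replacing the added generalised Sahlqvist formulas by Sahlqvist formulas (in the original bi-modal language) without changing the set of theorems. Second, I would invoke the Goranko--Vakarelov theorem \cite{gv01,gv06}: every generalised Sahlqvist formula is axiomatically equivalent, in the minimal tense-type extension of the underlying polymodal logic, to a Sahlqvist formula written in the language enriched with the inverse modalities $\Dh^{-1}$ and $\Dv^{-1}$. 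Applying this to each added axiom one by one produces an axiomatically equivalent family of Sahlqvist-with-inverses formulas.

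The third step is to eliminate the inverse modalities. Because $\Diff$-modalities are self-reversive, the equivalences $\Dh^{-1} p \leftrightarrow \Dh p$ and $\Dv^{-1} p \leftrightarrow \Dv p$ are derivable already in the fusion of two copies of $\Diff$ (they follow from the Segerberg-style axiom $p \to \Bh\Dh p$ of $\Diff$ and its $\vv$-analogue), hence a fortiori in $\dxd$. Substituting $\Dh$ for $\Dh^{-1}$ and $\Dv$ for $\Dv^{-1}$ uniformly in each formula yields an axiomatically equivalent formula in the original bi-modal language. The one point worth checking explicitly---and the only possible obstacle in an otherwise routine argument---is that this substitution preserves the Sahlqvist syntactic pattern: in the Sahlqvist class with inverse modalities the symbol $\Dh^{-1}$ occurs at exactly the same kinds of positions and with the same polarity as an ordinary diamond $\Dh$, so the rewrite produces a formula of the required antecedent/consequent shape. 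Together with the base from step one, the resulting set is an infinite Sahlqvist axiomatisation of $\dsqxd$, as required.
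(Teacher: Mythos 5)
Your proposal is correct and follows essentially the same route as the paper: both pass from the generalised Sahlqvist axioms of Theorem~\ref{t:sqax}(i) to Sahlqvist formulas with inverse modalities via Goranko--Vakarelov \cite{gv01,gv06}, and then eliminate the inverses using the self-reversiveness of the $\Diff$-modalities (the paper cites \cite[Thm.~4.10]{gv01} for this step, which packages the substitution argument you spell out).
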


Theorems~\ref{t:axdxd} and \ref{t:sqax} are proved in the respective \S\ref{dxd} and \S\ref{dxdsq}.
 In our proofs, we will use the following pattern.
In order to axiomatise $\Log\Cc$ for some class $\Cc$ of frames, we define a recursive
set $\Sigma$ of (generalised) Sahlqvist formulas, and prove that the following hold:

 \begin{itemize}
 \item[(ax1)]
 All formulas in $\Sigma$ are valid in every frame in $\Cc$.
 \item[(ax2)]
 For every countable rooted frame $\F$ that is not the p-morphic image of some frame in $\Cc$, there is some $\phi_\F\in\Sigma$ such that
 $\phi_\F$ is not valid in $\F$.
 \end{itemize}
 Then it follows that $\Log\Cc$ is axiomatised by $\Sigma$.
  Indeed, let $L$ be the smallest bimodal logic containing $\Sigma$. Then we clearly have 
 $L\subseteq\Log\Cc$ by (ax1).
 On the other hand, by the (generalised) Sahlqvist completeness theorem, $L$ is canonical,
 and so Kripke complete. By the (generalised) Sahlqvist correspondence theorem,
 the class of all frames for $L$ is an elementary class. Then it is easy to see by
 a L\"owenheim--Skolem type argument (see e.g.\ \cite[Thm.~1.6]{gkwz03}) that $L$ is the logic of its countable
 frames, and so by a standard modal logic argument $L$ is the logic of its countable rooted frames. 
 Now take some $\psi\notin L$. Then there is some countable rooted frame $\F$ such
 that $\F$ is a frame for $L$, but $\psi$ is not valid in $\F$.
 By (ax2), $\F$ is the p-morphic image of some frame in $\Cc$, and so $\psi\notin\Log\Cc$.
 
 Now the following two statements clearly follow from the above:
 %
 \begin{itemize}
 \item
 The class of all frames for $\Log\Cc$ is elementary.
 \item
 For every countable rooted frame $\F$, $\F$ is a frame for $\Log\Cc$ iff 
 $\F$ is the p-morphic image of some frame in $\Cc$.
 \end{itemize}
 

\section{Preliminaries and basic definitions}\label{s:prelim}

Our notation and terminology are mostly standard.
We denote the cardinality of a set $S$ by $|S|$. 
Natural numbers are considered as finite cardinals, and we use the usual multiplication operation
and ordering relations $<$ and $\leq$ among them and the infinite cardinal $\infy=|\omega|$.
We call $S$ \emph{countable} if $|S|\leq\infy$.
We denote the set of natural numbers by $\nN$ and its positive members by $\nNp$.
We will also use the usual functions
$\min(X)$, $\max(X)$ and $\sup(X)$ with respect to $<$, for $X\subseteq\nN\cup\{\infy\}$.


\subsection{Digraphs}

We assume that the reader is familiar with the basic notions about digraphs 
$\mathcal{G}=(N_{\mathcal{G}},\to)$
(see \cite{Berge73} for reference).
Below we summarise the notions used in the paper.
%
We call a node (vertex) in $N_{\mathcal{G}}$  an \emph{initial node} if it has no incoming $\to$ edges,
and a \emph{final node} if it has no outgoing $\to$ edges.
A digraph $\mathcal{G}^-$ is called a \emph{subgraph of\/} $\mathcal{G}$ if its nodes and edges are subsets
of the nodes and edges of $\mathcal{G}$, respectively. If the edges of a subgraph $\mathcal{G}^-$
consists of all the edges of $\mathcal{G}$ whose endpoints are nodes in $\mathcal{G}^-$,
then $\mathcal{G}^-$ is called an \emph{induced subgraph of\/} $\mathcal{G}$.
Given two nodes $z,z'$, a (\emph{directed})
\emph{path in $\mathcal{G}$ from $z$ to $z'$} is a finite sequence of subsequent edges,
the first one starting in $z$ and the last one ending in $z'$. The \emph{length} of a path is the number of
edges in it (we also consider paths of length $0$).
We call a path \emph{simple} if it does not contain the same edge twice.
A \emph{cycle} is a path starting and ending at the same node. 
$\mathcal{G}$ is called \emph{acyclic} if it does not contain any cycles.
A \emph{strongly connected component\/} is a maximal subgraph $\mathcal{S}$ such that
for all nodes $z$ and $z'$ in $\mathcal{S}$ there is a path from $z$ to $z'$.
A finite sequence $z_0,\dots,z_m$ of nodes is an \emph{undirected path between} $z_0$ \emph{and} $z_m$ \emph{in} $\mathcal{G}$ if for every $i<m$, either $z_i\to z_{i+1}$ or $z_{i+1}\to z_i$ is an edge in $\mathcal{G}$.

An acyclic digraph $\mathcal{G}$ is called a \emph{directed rooted tree} (or \emph{tree\/}, for short) 
if there is some inital node $r$ (the \emph{root\/}) such that for every node $z$ in $\mathcal{G}$ there is a unique path from $r$ to $z$. For each $z$, the length of this unique path is the \emph{height of\/} $z$.
If $z\to z'$ is an edge in a tree, then $z'$ is called a \emph{child of\/} $z$. A \emph{leaf} in a tree is a node
without children, that is, a final node.

Given a finite digraph $\mathcal{G}$ and a node $r$ in it, the 
\emph{tree unravelling of $\mathcal{G}$ with root\/} $r$ is the directed rooted tree $\mathcal{T}_{\mathcal{G},r}=(T_{\mathcal{G},r},\Rightarrow)$, where $T_{\mathcal{G},r}$ is the set of all paths in $\mathcal{G}$ starting at $r$ (with the length 0 path being the root of $\mathcal{T}_{\mathcal{G},r}$),  and $P\Rightarrow P'$ iff $P'$ can be obtained from $P$ by adding an additional $\to$ edge to its endpoint.
It is customary to identify each path $P\in T_{\mathcal{G},r}$ with a distinct \emph{copy} of its endpoint, in particular,
to identify the root of $\mathcal{T}_{\mathcal{G},r}$ with $r$.




\subsection{Unimodal and bimodal logics}\label{s:bimodal}

In what follows we assume that the reader is familiar with the basic notions in propositional multimodal logic
and its possible world semantics (see \cite{Blackburnetal01,cz} for reference).
Below we
summarise the necessary notions and notation for the bimodal case only, but we will use them
throughout for the unimodal case as well.
We define  \emph{bimodal formulas} by the following grammar:
\[
\phi:=\ p\mid\top\mid\bot\mid\neg\phi\mid\phi_1\land\phi_2\mid\phi_1\lor\phi_2\mid\Bh\phi\mid\Bv\phi\mid\Dh\phi\mid\Dv\phi,
\]
where $p$ ranges over a countably infinite set of propositional variables.
We use the usual abbreviations $\to$, $\leftrightarrow$, 
and also
%
\[
\Diamond_i^+\phi:=\ \phi\lor\Diamond_i\phi,\hspace*{3cm} \Box_i^+\phi:=\ \phi\land\Box_i\phi,
%
\]
for $i=\hh,\vv$. (The subscripts are indicative of the 2D intuition: $\hh$ for `horizontal' and
$\vv$ for `vertical'.)
%
%
Bimodal formulas are evaluated in bimodal \emph{frames}: relational structures of the form
$\F=(W,\Rh,\Rv)$, having two binary relations $\Rh$ and $\Rv$ on a non-empty set $W$.
A (\emph{Kripke}) \emph{model on\/} $\F$ is 
a function $\M$ mapping propositional variables to subsets of $W$. 
(With a slight abuse of notation, we identify the pair $(\F,\M)$ with $\M$.)
Given $m\in\nN$, we call a Kripke model $\M$ $m$-\emph{generated} if
there are at most $m$ different propositional variables $p$ such that
$\M(p)\ne\emptyset$.
The \emph{truth relation}
`$\M,w\models\phi$', connecting points in models and formulas, is defined as usual by induction 
on $\phi$. If $\M,w\models\phi$ for some model $\M$ on $\F$ and some point $w$ in $\M$, then
we say that $\phi$ \emph{is satisfied in\/} $\M$, and \emph{satisfiable in\/} $\F$.
Given a set $\Sigma$ of bimodal formulas, we write $\M\models\Sigma$ if
we have $\M,w\models\phi$,
for every $\phi\in\Sigma$ and every $w\in W$. (We write
just $\M\models\phi$ for $\M\models\{\phi\}$.) We say that $\phi$ is \emph{valid in} $\F$,
if $\M\models\phi$ for every model $\M$ based on $\F$.  If every formula
in a set $\Sigma$ is valid in $\F$, then we say that $\F$ is a \emph{frame for}~$\Sigma$.


The usual operations on unimodal frames and models can be defined on their bimodal counterparts
 as well. In particular, 
 given two frames $\F=(F,R_\hh^{\F},R_\vv^{\F})$ and $\G=(G,R_\hh^{\G},R_\vv^{\G})$,
a function $f:F\to G$ is called a \emph{p-morphism from} $\F$ \emph{to}
$\G$ if it satisfies the following conditions, for all $u,v\in F$,
$y\in G$, $i=\hh,\vv$:
\begin{itemize}
\item
$uR_i^{\F}v\ $ implies $f(u)R_i^{\G}f(v)$ (that is, $f$ is a \emph{homomorphism}),

\item
$f(u)R_i^{\G}y$ implies that there is some $v\in F$ such that $f(v)=y$ and $uR_i^{\F}v$
(the \emph{backward condition}).
\end{itemize}
If $f$ is onto then we say that $\G$ \emph{is a p-morphic image of} $\F$. 
Similarly to the unimodal case, validity of bimodal formulas in frames is preserved 
under taking p-morphic images.
For any model $\M$ on $\F$ and model $\N$ on $\G$, a p-morphism from $\F$ to $\G$ 
is called a \emph{p-morphism from} $\M$ \emph{to} $\N$ whenever, for all propositional
variables $p$ and points $x$ in $\F$, $x\in\M(p)$ iff $f(x)\in\N(p)$.
If $f$ is onto then we say that $\N$ \emph{is a p-morphic image of} $\M$. 
%

%
Given two frames $\F=(F,R_\hh^{\F},R_\vv^{\F})$ and $\G=(G,R_\hh^{\G},R_\vv^{\G})$,
$\F$ is a \emph{subframe of} $\G$ if $F\subseteq G$ and $R_i^{\F}=R_i^{\G}\cap(F\times F)$, for $i=\hh,\vv$.
Given some $x\in F$, the \emph{subframe} $\F^x$ \emph{of} $\F$ 
\emph{generated by point} $x$ is the subframe of $\F$ with the following set $F^x$ of
points:
%
\[
F^x=\{y\in F :  y\mbox{ is accessible from }x
\mbox{ along the reflexive and transitive closure of $R_\hh^{\F}\cup R_\vv^{\F}$}\}.
\]
%
We say that a frame $\F$ is \emph{rooted} if $\F=\F^r$ for some point $r$. Such a point $r$ is
called a \emph{root in\/} $\F$.


A set $L$ of bimodal formulas is called a (normal) \emph{bimodal logic} (or \emph{logic}, for short)
if it contains all propositional tautologies and the formulas
$\Box_i(p\to q)\to(\Box_i p\to\Box_i q)$, for $i=\hh,\vv$,
and is closed  under the rules of Substitution, Modus Ponens and 
Necessitation $\varphi/\Box_i\varphi$, for $i=\hh,\vv$. Given a class $\Cc$ of frames, we always
obtain a logic by taking
\[
\Log\Cc=\{\phi :\phi\mbox{ is a bimodal formula valid in every member of $\Cc$}\}.
\]
We say that $\Log\Cc$ is the \emph{logic of} $\Cc$. It is well known that
\begin{equation}\label{rooted}
\Log\Cc=\Log\{\mbox{rooted frames in $\Cc$}\}.
\end{equation}
A logic $L$ is called
\emph{Kripke complete} if $L=\Log\Cc$ for some class $\Cc$.
Given a bimodal logic $L$ and a recursive set $\Sigma$ of bimodal 
formulas, we say that $\Sigma$ \emph{axiomatises} $L$ if
$L$ is the smallest bimodal logic containing $\Sigma$. 
A logic $L$ is called \emph{finitely axiomatisable\/} whenever there is some finite $\Sigma$ axiomatising $L$.


\subsubsection{Sahlqvist and generalised Sahlqvist formulas}\label{s:Sahlqvist}

Below we recall the definition of Sahlqvist formulas \cite{Sahlqvist75}, and generalised (monadic) Sahlqvist formulas of 
Goranko and Vakarelov \cite[Def.24]{gv06} for our bimodal language.

A bimodal formula is
\emph{positive} (\emph{negative}) if every occurrence of a propositional variable in it is under the scope of an even (odd) number of negations $\neg$.
A \emph{boxed atom} is a formula $\Box_{i_1}\dots\Box_{i_n}p$ 
where $n\in\nN$, $i_1,\dots,i_n\in\{\hh,\vv\}$ and $p$ is a propositional variable.
A \emph{Sahlqvist antecedent} is a formula built up from $\top$, $\bot$, boxed atoms, and negative 
formulas, using $\lor$, $\land$, $\Dh$ and $\Dv$.

A \emph{boxed formula} is a formula of the form
\[
\Box^1\Bigl(\psi_1\to\Box^2\bigl(\psi_2\to\dots\Box^n(\psi_n\to\Box^0 p)\dots\bigr)\Bigr),
\] 
where $n\in\nN$, each $\Box^i$ is a finite (possibly empty) sequence of boxes $\Bh$ and $\Bv$,
each $\psi_i$ is a positive formula,
and $p$ is a propositional variable. The variable $p$ is called the \emph{head} of the
boxed formula, and all variables in any of the $\psi_i$ are called \emph{inessential variables\/}.
A \emph{potential generalised Sahlqvist antecedent} is a formula $\phi$ built up from $\top$, $\bot$, 
boxed formulas, and negative 
formulas, using $\lor$, $\land$, $\Dh$ and $\Dv$.
Given such a formula $\phi$, the \emph{dependency digraph of\/} $\phi$ is a digraph 
$\mathcal{D}(\phi)=(N_\phi,\Rrightarrow)$, where $N_\phi$ is the set of heads of the boxed
formulas in $\phi$, and $q\Rrightarrow p$ iff $q$ is an inessential variable in a boxed formula with head $p$. If $\mathcal{D}(\phi)$ is acyclic, then $\phi$ is called a \emph{generalised Sahlqvist antecedent\/}.

A  (\emph{generalised}) \emph{Sahlqvist implication} is of the form $\phi\to\psi$, where $\phi$ is a 
(generalised) Sahlqvist antecedent and $\psi$ is a positive formula. 
A (\emph{generalised}) \emph{Sahlqvist formula} is a formula that is built up from (generalised) Sahlqvist implications by freely applying
$\Bh$, $\Bv$ and $\land$, and by applying $\lor$ only between formulas that do not share any
propositional variables.

The (\emph{generalised}) \emph{Sahlqvist completeness theorem} says that every logic axiomatised by (generalised) Sahlqvist
formulas is canonical, and so Kripke complete. 
The (\emph{generalised}) \emph{Sahlqvist  correspondence theorem} says that every 
(generalised) Sahlqvist formula has a first-order correspondent.
(A first-order formula $A$ in the language having equality and binary predicate symbols $\Rh$ and $\Rv$
is called a \emph{correspondent} of a bimodal formula $\phi$, whenever for every frame $\F$,
$A$ is valid in $\F$ iff $\phi$ is valid in $\F$.) 
Kracht \cite{Kracht96} gives a syntactical description of
first-order correspondents of Sahlqvist formulas.
Kracht's characterisation is extended to generalised Sahlqvist formulas  
by Kikot  \cite{Kikot09}.


\subsubsection{Some unimodal logics}

The following well-known unimodal logics are mentioned in the paper:
\begin{align*}
\K & = \Log\{\mbox{all unimodal frames}\},\\
\Sfive & = \Log\{\mbox{all unimodal equivalence frames}\}.
\end{align*}
In order to avoid extensive use of $\times$, we denote the universal relation $W\times W$
on any non-empty set $W$ by $\urel{W}$.
By \eqref{rooted},
\[
\Sfive=\Log\bigl\{(W,\urel{W}) : \mbox{$W$ is a non-empty set}\bigr\}.
\]

Most of the paper is about two-dimensional modal product logics (see \S\ref{s:product} below) where one or both component
logics is the much-studied unimodal `logic of elsewhere'  $\Diff$
\cite{rijke92,Gargov&Goranko93,Gargov&Passy&Tinchev87}. This logic was introduced by 
Von Wright \cite{vonWright79} as the set  of unimodal formulas that are valid in all
\emph{difference frames}, that is, in frames $(W,\drel{W})$, where $\drel{W}$ is the
non-equality relation on some non-empty set $W$.
Segerberg \cite{Segerberg80} axiomatised $\Diff$ by the Sahlqvist formulas
%
\begin{align}
\label{symmf}
& p\to\Box\Diamond p,\\
\label{pstf}
& \Diamond\Diamond p\to (p\lor\Diamond p).
\end{align}
So an arbitrary frame for $\Diff$ is a \emph{pseudo-equivalence relation}, that is,
it may contain both reflexive and irreflexive points, but it is always symmetric and 
\emph{pseudo-transitive}:
\begin{equation}\label{pstrans}
\forall x, y, z\;\bigl( R(x,y)\land R(y,z)\to (x=z \lor R(x,z))\bigr). 
\end{equation}
In particular, equivalence relations are frames for $\Diff$, and so $\Diff\subseteq\Sfive$.
It is straightforward to see that every rooted frame $(W,R)$ for $\Diff$ is a p-morphic image of 
any  difference frame $(U,\drel{U})$ for which
\[
|U|\geq  2\cdot |\{w \in W : w\mbox{ is $R$-reflexive}\}| + |\{w \in W : w\mbox{ is $R$-irreflexive}\}|.
\]
In particular,
\begin{equation}\label{difftosfivepm}
\mbox{if $|U|\geq 2\cdot |W|$ then $(W,\urel{W})$ is a p-morphic image of $(U,\drel{U})$.}
\end{equation}
Note that one can express the \emph{universal\/}, the \emph{at least two} and the \emph{precisely one} modalities with the help of a difference modality:
\begin{align*}
\forall\phi:\quad& \phi\land\Box\phi,\\
\Diamond^{\geq 2}\phi:\quad& \Diamond(\phi\land\Diamond\phi),\\
\Diamond^{=1}\phi:\quad& (\phi\lor\Diamond\phi)\land\neg\Diamond(\phi\land\Diamond\phi).
\end{align*}
%


%
%


\subsubsection{Bimodal product frames and logics}\label{s:product}

Given unimodal frames 
$\F_\hh=(W_\hh,\Rh)$ and $\F_\vv=(W_\vv,\Rv)$, their (\emph{modal}) \emph{product} is defined to be
the bimodal frame
\[
\F_\hh\mprod\F_\vv= ( W_\hh\mprod W_\vv,\overline{R}_\hh,\overline{R}_\vv),
\]
where $W_\hh\mprod W_\vv$ is the Cartesian product of $W_\hh$ and $W_\vv$
and, for all $x,x'\in W_\hh$, $y,y'\in W_\vv$,
\begin{gather*}
(x,y) \overline{R}_\hh (x',y')\quad \text{ iff }\quad x\Rh x'\mbox{ and }y=y',\\
(x,y) \overline{R}_\vv (x',y')\quad \text{ iff }\quad y\Rv y' \mbox{ and }x=x'.
\end{gather*}
%
%
%
It is easy to see that both taking point-generated subframes and p-morphic images
commute with the product construction:
\begin{align}
\label{pgenprod}
& \mbox{For any $x_\hh$ in $\F_\hh$, $x_\vv$ in $\F_\vv$,
$(\F_\hh\mprod\F_\vv)^{(x_\hh,x_\vv)}=\F_\hh^{x_\hh}\mprod\F_\vv^{x_\vv}$.}\\
\nonumber
& \mbox{If $\F_i$ is a p-morphic image of $\G_i$ for $i\in\{\hh,\vv\}$}\\
\label{pmprod}
& \hspace*{3cm}\mbox{then $\F_\hh\mprod\F_\vv$ is a p-morphic image of $\G_\hh\mprod\G_\vv$.}
\end{align}

Given  Kripke complete unimodal logics $L_\hh$ and $L_\vv$ in the respective unimodal languages
having $\Dh,\Bh$ and $\Dv,\Bv$, 
their  \emph{product} is defined as the (Kripke complete) bimodal logic
\[
\Log\bigl\{\F_\hh\mprod \F_\vv : \mbox{$\F_i$ is a frame for $L_i$, for $i=\hh,\vv$}\bigr\}.
\]
%
%
%
%
%
%
%
%
%
In particular,
%
$\dxd=\Log\bigl\{\F\mprod\G : \F,\G\mbox{ are frames for $\Diff$}\bigr\}$.
%
We call a frame of the form $(U,\drel{U})\mprod(V,\drel{V})$, for some non-empty sets $U,V$,
a \emph{product of difference frames\/}.
Then, by \eqref{rooted}, \eqref{pgenprod} and \eqref{pmprod}, we have that 
%
\begin{equation}
\label{dxdbyprod}
\dxd=\Log\{\mbox{products of difference frames}\}.
\end{equation}
If $|U|=|V|>0$ then we call $(U,\drel{U})\mprod(V,\drel{V})$, 
a \emph{square product of difference frames\/}.
We define the `square' version of $\dxd$ as
%
\[
\dsqxd  =\Log\{\mbox{square products of difference frames}\}.
\]
Then by \eqref{dxdbyprod}, we have
\begin{equation}
\label{prodinsq}
\dxd\subseteq\dsqxd.
\end{equation}
As Theorem~\ref{t:sqnonfinax} shows, there is an infinite gap between these two logics.
(Note that it is easy to reduce the validity problem of both $\dxd$ and $\dsqxd$ to that of
two-variable first-order logic with counting, and so by the decidability of the latter  \cite{gor97}, both 
$\dxd$ and $\dsqxd$ are decidable.)

It is easy to see that the classes of (isomorphic copies of) products of difference frames and of
square products of difference frames are both closed under ultraproducts. Thus, by a general result
of \cite{goldblatt89}, both $\dxd$ and $\dsqxd$ are canonical logics. Note that while (isomorphic copies of) products of difference frames form a (finitely axiomatisable) elementary class by Corollary~\ref{co:prodelem} below,
it is not hard to show that the class of square products of difference frames is not closed under 
elementary equivalence,
and so is not elementary (cf.\ \cite[Thm.~4.1.12]{Chang&Keisler90}).

Let $\commf$ be the conjunction of the Sahlqvist formulas $(\Bh\Bv p\leftrightarrow\Bv\Bh p)$ and
\eqref{symmf}--\eqref{pstf} for both $\Bh$ and $\Bv$ (with the first-order correspondent 
saying that $\Rh$ and $\Rv$ are commuting pseudo-equivalence relations). 
Then the logic $\comm$ axiomatised by $\commf$ is canonical, and so Kripke complete.
It is
straightforward to see that $\commf$ is valid in every product of difference frames,
and so 
\begin{equation}\label{dxdcomm}
 \comm\subseteq\dxd.
 \end{equation}
As Theorem~\ref{co:nonfinax} shows, there is an infinite gap between these two logics.


\section{Rooted frames for $\comm$}\label{grids}

In this section, we have a closer look at rooted frames for $\comm$, that is,
rooted frames of the form $\F=(W,\Rh,\Rv)$, where
$\Rh$ and $\Rv$ are commuting pseudo-equivalence relations.

%

We begin with the simplest rooted frames of this kind.
A frame $\C=(C,\Rh,\Rv)$ is called a \emph{\cluster\/}, if $\drel{C}$ is a subset of $R_j$ for both $j=\hh,\vv$.
It is straightforward to see that a \cluster{} is a rooted frame for $\comm$.
For $j=\hh,\vv$, a point $c$ in $\C$ is called $R_j$-\emph{irreflexive} ($R_j$-\emph{reflexive})
if $c\neg R_jc$  ($cR_jc$) holds.
So there can be four kinds of points in a \cluster:
both $\Rh$- and $\Rv$-reflexive (denoted by \rr),
$\Rh$-irreflexive and  $\Rv$-reflexive \mbox{(\ir),}
$\Rh$-reflexive and  $\Rv$-irreflexive (\ri), and 
both $\Rh$- and $\Rv$-irreflexive (\ii).
We use $\ \ixx$ to indicate when a point is $\Rh$-irreflexive and it does not matter whether it is
$\Rv$-reflexive or $\Rv$-irreflexive.
Similarly, $\ \xxi$ will be used whenever a point is $\Rv$-irreflexive and it does not matter whether it is
$\Rh$-reflexive or $\Rh$-irreflexive.
(An example of a \cluster{} is depicted in Fig.~\ref{f:no}.)
In what follows we often identify a \cluster\ with its domain. In particular,
for every \cluster\ $\C=(C,\Rh,\Rv)$, we denote by $|\C|$ the cardinality of its domain. We also let 
\begin{align*}
\sizeh(\C) & =  2\cdot |\{w \in \C : w\mbox{ is $\Rh$-reflexive}\}| + |\{w \in \C : w\mbox{ is $\Rh$-irreflexive}\}|,
\\
\sizev(\C) & =  2\cdot |\{w \in \C : w\mbox{ is $\Rv$-reflexive}\}| + |\{w \in \C : w\mbox{ is $\Rv$-irreflexive}\}|.
\end{align*}
%


Next, let $\F=(W,\Rh,\Rv)$ be an arbitrary rooted frame for $\comm$, and let $\Rhp$ and $\Rvp$ be the respective reflexive closures of $\Rh$ and $\Rv$. It is easy to see that 
$\Rhp$ and $\Rvp$ are commuting equivalence relations.
We define an equivalence relation $\sim$ on $W$
by taking, for all $u,v\in W$,
\[
u\sim v\quad\mbox{iff}\quad u\Rhp v\mbox{ and  }u\Rvp v.
\]
For each $u\in W$, let $[u]$ denote its $\sim$-class, and let $W^\sim=\{[u] : u\in W\}$.
We say (with a slight abuse of notation)
that $\F$ \emph{is} (\emph{represented as}) \emph{a \grid} $(X,Y,g)$ whenever 
$g: X\mprod Y\to W^\sim$ is a bijection for some sets $X$, $Y$ 
such that the following hold for all $x\ne x'\in X$ and $y\ne y'\in Y$:
\begin{itemize}
\item[(gc1)]
$u\Rh v$ for all $u\in g(x,y)$ and $v\in g(x',y)$;
\item[(gc2)]
$u\Rv v$ for all $u\in g(x,y)$ and $v\in g(x,y')$.
\end{itemize}
Observe that a single \cluster{} is a special case of a \grid{} when $|X|=|Y|=1$.

Given two \grids{} $\F=(X,Y,g)$ and $\F^\star=(X^\star,Y^\star,g^\star)$, we say that
$\F$ is a \emph{subgrid of} $\F^\star$ if $X\subseteq X^\star$, $Y\subseteq Y^\star$, and
$g=g^\star|_{X\mprod Y}$.
For each $(x,y)\in X\mprod Y$, we will denote by $\F^{xy}$ the subgrid of 
$\bigl(\{x\},\{y\},g|_{\{x\}\mprod\{y\}}\bigr)$ of $\F$. 
Observe that $\F^{xy}$ is always a \cluster. For any \cluster{} $\C$, we say that $\F$ \emph{contains\/} $\C$,
if $\C$ is isomorphic to $\F^{xy}$ for some $x,y$.

Throughout, we draw \grids{} by depicting each \cluster{} as a rectangular box, depicting $X$ (and $\Rh$ between \clusters) horizontally and $Y$
(and $\Rv$ between \clusters) vertically; see, for example, Figs.~\ref{f:badgrids2} and \ref{f:badgrids1}.


%
%
%

%
%

\begin{lemma}\label{l:grid}
Every rooted frame $\F$ for $\comm$ is a \grid.
\end{lemma}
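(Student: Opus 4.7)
The plan is to construct the grid explicitly from the equivalence structure induced by $\Rhp$ and $\Rvp$. First I would verify the preliminary observation (claimed in the excerpt) that the reflexive closures $\Rhp$ and $\Rvp$ are commuting equivalence relations: symmetry of $\Rh, \Rv$ lifts to $\Rhp, \Rvp$, pseudo-transitivity \eqref{pstrans} combined with the added reflexive pairs gives transitivity of each $R_i^+$, and the commutation identity $\Rh \circ \Rv = \Rv \circ \Rh$ in $\comm$ lifts to $\Rhp \circ \Rvp = \Rvp \circ \Rhp$ (case analysis on whether the witness equals either endpoint). A quick calculation then shows that the composition $\equiv \,:=\, \Rhp \circ \Rvp = \Rvp \circ \Rhp$ is itself an equivalence relation.

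Next I would exploit that $\F$ is rooted. The reflexive--transitive closure of $\Rh \cup \Rv$ is contained in $\equiv$, so rootedness forces $\equiv$ to be the total relation on $W$: for every $v,w \in W$ there exists some $u$ with $v\, \Rvp\, u\, \Rhp\, w$, hence $u\,\Rvp\, v$ and $u\, \Rhp\, w$ by symmetry. I would then set
\[
X \;=\; W/\Rvp, \qquad Y \;=\; W/\Rhp,
\]
and, for each $u \in W$, write $x(u) = [u]_{\Rvp}$ and $y(u) = [u]_{\Rhp}$. Define $g: X \mprod Y \to W^\sim$ by
\[
g(x,y) \;=\; [u]_{\sim} \quad\text{for any } u \in W \text{ with } x(u)=x,\ y(u)=y.
\]

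The main content is to show $g$ is a well-defined bijection. Existence of such a $u$ for every pair $(x,y)$ is exactly the consequence of rootedness noted above. Well-definedness follows because if $x(u)=x(u')$ and $y(u)=y(u')$, then $u\, \Rvp\, u'$ and $u\, \Rhp\, u'$, so $u\sim u'$. Injectivity: if $g(x,y)=g(x',y')$, pick representatives $u \sim u'$, and then $u\, \Rhp\, u'$ gives $y=y'$ while $u\, \Rvp\, u'$ gives $x=x'$. Surjectivity is immediate from the definition. Finally, the grid conditions (gc1) and (gc2) follow at once: for $u \in g(x,y)$ and $v \in g(x',y)$ with $x\neq x'$, we have $u\, \Rhp\, v$ but $u \not\Rvp v$, so in particular $u \neq v$, hence $u\, \Rh\, v$; (gc2) is symmetric.

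The only step that uses a nontrivial global property of $\F$ is the existence of $u$ with prescribed $\Rvp$- and $\Rhp$-classes, and this is precisely where commutativity and rootedness are combined. I expect this to be the main (and really the only) obstacle; once the rectangle-filling property is established, the verification of (gc1) and (gc2) is a short case check using symmetry of $\Rh, \Rv$ together with the fact that distinct $\sim$-classes within a common $\Rhp$-class cannot also share a $\Rvp$-class.
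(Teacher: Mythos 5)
Your proof is correct and follows essentially the same route as the paper's: both arguments rest on the observations that $\Rhp$ and $\Rvp$ are commuting equivalence relations whose composition is made total by rootedness, and both define $g$ by intersecting a `column' with a `row'. The only (cosmetic) difference is that you index $X$ and $Y$ by the quotients $W/\Rvp$ and $W/\Rhp$, whereas the paper uses the $\sim$-classes lying in the $\Rhp$-row and the $\Rvp$-column of a fixed root $r$; these index sets are in canonical bijection, and the well-definedness, injectivity, surjectivity and (gc1)--(gc2) checks are the same in both versions.
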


\begin{proof}
Suppose $\F=(W,\Rh,\Rv)$ is a rooted frame $\F$ for $\comm$, that is,
$\Rh$ and $\Rv$ are commuting pseudo-equivalence relations. 
Take any $r\in W$. 
As $\Rhp$ and $\Rvp$ are commuting equivalence relations,
it is easy to see that  $r$ is a root in $\F$, and
for all $v,w\in W$, if $r\Rhp v$ and $r\Rvp w$ then there is $u$ with $v\Rvp u$ and $w\Rhp u$.
So we let
\[
X=\bigl\{[v]\in W^\sim : r\Rhp v\bigr\}
\quad\mbox{ and }\quad
Y=\bigl\{[w]\in W^\sim : r\Rvp w\bigr\},
\]
and define a function $g:X\times Y\to W^\sim$ by taking, for all $[v]\in X$, $[w]\in Y$,
\[
g\bigl([v],[w]\bigr)=[u]
\quad\mbox{ iff }\quad
v\Rvp u\mbox{ and }w\Rhp u.
\]
As both $\Rhp$ and $\Rvp$ are equivalence relations, for all $v$, $w$, $u$, $v'$, $w'$, $u'$, if
$v\sim v'$, $w\sim w'$, $v\Rvp u$, $w\Rhp u$, $v'\Rvp u'$ and $w'\Rhp u'$ then $u\sim u'$ follows, and
so $g$ is well-defined. It is easy to see that $g$ is injective and both (gc1) and (gc2) hold.
Finally, we show that $g$ is surjective: Take some $[u]\in W^\sim$.  Then there exist
$n\in\nN$ and $u_0,\dots,u_n$ such that $u_0=r$, $u_n=u$ and for each $i<n$, either $u_i \Rhp u_{i+1}$
or $u_i\Rvp u_{i+1}$. As $\Rhp$ and $\Rvp$ are commuting equivalence relations,
it follows that there are $v,w$ such that 
$r\Rhp v\Rvp u$ and $r\Rvp w\Rhp u$, and so $[v]\in X$, $[w]\in Y$ and $g\bigl([v],[w]\bigr)=[u]$, as required.
\end{proof}

\begin{corollary}\label{co:prodelem}
For every frame $\F=(W,\Rh,\Rv)$, $\F$ is isomorphic to a product of difference frames iff
$\Rh$ and $\Rv$ are commuting irreflexive pseudo-equivalence relations and all \clusters{} in $\F$
are singletons.
\end{corollary}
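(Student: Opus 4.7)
The forward direction is a straightforward verification. If $\F$ is isomorphic to $(U,\drel{U})\mprod(V,\drel{V})$, I would read off from the product definitions that the two lifted relations are irreflexive (since the component non-equality relations are), symmetric, and pseudo-transitive (by a coordinate-wise case analysis), and that they commute. For the \cluster{} condition, two distinct points $(x,y)\ne(x',y')$ lying in a common \cluster{} would need to be horizontally related (forcing $y=y'$) and vertically related (forcing $x=x'$) simultaneously, which is impossible; so every \cluster{} is a singleton.

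For the backward direction, since any nonempty product of difference frames is necessarily rooted, the real content is to show that every rooted frame $\F=(W,\Rh,\Rv)$ meeting the hypotheses is isomorphic to such a product. I would apply Lemma~\ref{l:grid} to represent $\F$ as a \grid{} $(X,Y,g)$; as each \cluster{} $\F^{xy}$ is a singleton, I write $g(x,y)=\{w_{xy}\}$ and define $\phi\colon X\mprod Y\to W$ by $\phi(x,y)=w_{xy}$. Since $g$ is a bijection onto $W^{\sim}$ and each $\sim$-class is a singleton, $\phi$ is itself a bijection. It then remains to verify that $w_{xy}\Rh w_{x'y'}$ iff $x\ne x'$ and $y=y'$, and the symmetric statement for $\Rv$.

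The grid conditions (gc1), (gc2) supply one direction immediately. For the converse I would case-split on where $w_{x'y'}$ sits when $w_{xy}\Rh w_{x'y'}$: the case $x=x'$, $y=y'$ contradicts irreflexivity of $\Rh$; the case $x=x'$, $y\ne y'$ combines with the $\Rv$-edge provided by (gc2) to put $w_{xy}$ and $w_{xy'}$ into the same $\sim$-class, contradicting that they sit in distinct singleton \clusters; and the diagonal case $x\ne x'$, $y\ne y'$ uses (gc1) to obtain $w_{xy}\Rh w_{x'y}$, whence symmetry and pseudo-transitivity applied to the chain $w_{x'y}\Rh w_{xy}\Rh w_{x'y'}$ force either $w_{x'y}=w_{x'y'}$ (impossible since the two \clusters{} differ) or $w_{x'y}\Rh w_{x'y'}$; combined with the $\Rv$-edge from (gc2), this triggers the same $\sim$-collapse of two distinct \clusters. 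Only the case $x\ne x'$, $y=y'$ survives, as required; the argument for $\Rv$ is entirely symmetric.

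I expect the diagonal case $x\ne x'$, $y\ne y'$ to be the main obstacle, since it is the only one where neither irreflexivity nor a single appeal to a grid condition yields a contradiction; it genuinely requires combining pseudo-transitivity of $\Rh$ with the singleton-\cluster{} hypothesis to force the collapse.
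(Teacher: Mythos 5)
Your proof is correct and follows the route the paper intends: the corollary is stated without an explicit proof as an immediate consequence of Lemma~\ref{l:grid}, and your argument---reading off irreflexivity, pseudo-transitivity, commutativity and singleton $\sim$-classes for the forward direction, and for the converse taking the grid representation with singleton \clusters{} and verifying $w_{xy}\Rh w_{x'y'}$ iff $x\ne x'$ and $y=y'$ by the four-way case split---is exactly the intended verification. In particular your handling of the diagonal case (combining (gc1), symmetry and pseudo-transitivity to force two distinct $\sim$-classes to collapse) is sound, and your observation that the substance lies in the rooted case is consistent with the paper, since the notion of the \clusters{} contained in $\F$ is only defined via the grid representation of a rooted frame.
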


Because of the proof-pattern described in \S\ref{proofmethod}, we are particularly interested
in those countable \grids{} that are p-morphic images of some product of difference frames.
The following lemma provides a general characterisation for them.

\begin{lemma}\label{l:fits}
A countable \grid{} $\F$ is a p-morphic image of a product of two difference frames iff $\F$ is such that
\begin{itemize}
\item
each of its \clusters{} is the  p-morphic image of a product of two
difference frames, and
\item
the sizes of the product preimages for each \cluster{} `fit'.
\end{itemize} 
More precisely, for any countable \grid{} $\F=(X,Y,g)$, we have the following:
\begin{itemize}
\item[{\rm (i)}]
If $h:(U,\drel{U})\mprod(V,\drel{V})\to\F$ is an onto p-morphism, then 
there exists a function $\xi_h:(X\cup Y)\to\bigl(\nNp\cup\{\infy\}\bigr)$ such that
for every $(x,y)\in X\times Y$, the \cluster{} $\F^{xy}$ in $\F$ is a p-morphic image of
$(U_{x},\drel{U_{x}})\mprod (V_{y},\drel{V_{y}})$ for some sets $U_{x},V_{y}$ with 
$|U_{x}|=\xi_h(x)$ and $|V_{y}|=\xi_h(y)$.

\item[{\rm (ii)}]
If $\xi:(X\cup Y)\to\bigl(\nNp\cup\{\infy\}\bigr)$ is a function such that
for every $(x,y)\in X\times Y$, the \cluster{} $\F^{xy}$ in $\F$ is a p-morphic image of
$(U_{xy},\drel{U_{xy}})\mprod (V_{xy},\drel{V_{xy}})$ for some sets $U_{xy},V_{xy}$ with 
$|U_{xy}|=\xi(x)$ and $|V_{xy}|=\xi(y)$, then there is an onto p-morphism
$h_\xi:(U,\drel{U})\mprod(V,\drel{V})\to\F$ for some sets $U,V$ with
$|U|=\sum_{x\in X}\xi(x)$ and $|V|=\sum_{y\in Y}\xi(y)$.
\end{itemize}
%
\end{lemma}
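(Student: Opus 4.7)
The common key fact I will repeatedly invoke is that in any rooted $\comm$-frame written as a grid $\F=(X,Y,g)$, any two $\Rv$-related points lie in clusters sharing the same column index $x$, and symmetrically $\Rh$-related points lie in clusters sharing the same row index $y$. This follows immediately from the construction in Lemma~\ref{l:grid}: since $\Rhp$ and $\Rvp$ are commuting equivalence relations, the columns (resp.\ rows) of the grid are exactly the $\Rvp$-equivalence (resp.\ $\Rhp$-equivalence) classes of $W$, and $\Rv\subseteq\Rvp$, $\Rh\subseteq\Rhp$.

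For part (i), suppose $h:(U,\drel{U})\mprod(V,\drel{V})\to\F$ is an onto p-morphism. For each $(u,v)\in U\mprod V$, let $\bigl(\hat x(u,v),\hat y(u,v)\bigr)$ be the coordinates of the cluster of $\F$ containing $h(u,v)$. Picking any $v'\in V$ with $v'\ne v$ (trivial if $|V|=1$), the relation $(u,v)\,\overline{R}_\vv\,(u,v')$ forces $h(u,v)\,\Rv\,h(u,v')$, so the key fact gives $\hat x(u,v)=\hat x(u,v')$; thus $\hat x$ factors through $U$ as a surjective map $\hat x:U\to X$, and symmetrically $\hat y:V\to Y$ is surjective. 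Put $U_x:=\hat x^{-1}(x)$ and $V_y:=\hat y^{-1}(y)$, obtaining partitions of $U$ and $V$ into nonempty classes with the identity $h^{-1}[g(x,y)]=U_x\mprod V_y$. The restriction $h|_{U_x\mprod V_y}$ is then an onto p-morphism $(U_x,\drel{U_x})\mprod(V_y,\drel{V_y})\to\F^{xy}$, its forward and backward conditions transferring directly from those of $h$ via this identity. Setting $\xi_h(x):=|U_x|$, $\xi_h(y):=|V_y|$ completes part (i) (with $X\cup Y$ interpreted as a disjoint union, since the root cluster lies in $X\cap Y$ and may demand different row and column sizes).

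For part (ii), I would glue the given cluster-level p-morphisms into a single global one. Disjointly fix $U_x$ of size $\xi(x)$ for $x\in X$ and $V_y$ of size $\xi(y)$ for $y\in Y$; after renaming the hypothesised data via size-matching bijections, take the cluster p-morphism to be $h_{xy}:(U_x,\drel{U_x})\mprod(V_y,\drel{V_y})\to\F^{xy}$. Set $U:=\bigsqcup_{x\in X}U_x$, $V:=\bigsqcup_{y\in Y}V_y$, and define $h_\xi(u,v):=h_{xy}(u,v)$ whenever $(u,v)\in U_x\mprod V_y$. Surjectivity of $h_\xi$ is immediate; the forward conditions split by the partition into within-cluster cases (handled by forward for $h_{xy}$) and across-cluster cases (handled by (gc1) or (gc2)); within-cluster backward conditions are handled by backward for $h_{xy}$. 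The main obstacle, and only genuinely delicate step, is the across-cluster backward condition for $\Rh$: given $(u,v)\in U_x\mprod V_y$ and $q\in g(x',y)$ with $x'\ne x$ and $h_\xi(u,v)\,\Rh\,q$, one needs $u'\in U_{x'}$ with $h_{x'y}(u',v)=q$, where the second coordinate must be \emph{this specific} $v$, so mere surjectivity of $h_{x'y}$ does not suffice. I would isolate this as an auxiliary \emph{column-surjectivity} lemma: for any onto p-morphism $f:(U',\drel{U'})\mprod(V',\drel{V'})\to\C$ onto a cluster $\C$ and any fixed $v\in V'$, the map $f(\,\cdot\,,v):U'\to\C$ is surjective. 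Its one-line proof: given $q\in\C$, pick any $u_0\in U'$; if $f(u_0,v)=q$ we are done, otherwise $f(u_0,v)\ne q$ are both in $\C$, so $f(u_0,v)\,\Rh\,q$ by the cluster property $\drel{\C}\subseteq\Rh$, and backward for $f$ supplies $u_1\ne u_0$ with $f(u_1,v)=q$. Applying this to $f=h_{x'y}$ closes the $\Rh$-case, and the $\Rv$-case is symmetric; finally $|U|=\sum_{x\in X}\xi(x)$ and $|V|=\sum_{y\in Y}\xi(y)$ by construction.
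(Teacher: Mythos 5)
Your proof is correct and follows essentially the same route as the paper: in (i) you restrict $h$ to the same sets $U_x\mprod V_y$ (your $\hat x,\hat y$ factorisation is just an explicit justification of why the paper's $U_x,V_y$ partition $U$ and $V$), and in (ii) you glue the cluster-level p-morphisms over disjoint unions exactly as the paper does. The only difference is that you spell out what the paper dismisses as ``straightforward to check'', in particular isolating the column-surjectivity observation needed for the across-cluster backward condition, which is a correct and worthwhile addition.
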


\begin{proof}
(i):
We let 
\begin{align*}
& U_x=\{u\in U: \mbox{there exist $y\in Y$, $v\in V$ with $h(u,v)\in\F^{xy}$}\},\ \ \mbox{for every $x\in X$,}\\
& V_y=\{v\in V: \mbox{there exist $x\in X$, $u\in U$ with $h(u,v)\in\F^{xy}$}\},\ \ \mbox{for every $y\in Y$}.
\end{align*}
Then it is straightforward to see that, for every $(x,y)\in X\times Y$, the restriction of
$h$ to $U_x\times V_y$ is a p-morphism onto $\F^{xy}$. So we can define $\xi_h$ by taking $\xi_h(x)=|U_x|$, for $x\in X$, and
$\xi_h(y)=|V_y|$, for $y\in Y$,  as required.

(ii):
For every $(x,y)\in X\times Y$,
suppose that $h^{xy}:(U_{xy},\drel{U_{xy}})\mprod (V_{xy},\drel{V_{xy}})\to\F^{xy}$
is an onto p-morphism.
As for every $x\in X$, we have $|U_{xy}|=\xi(x)=|U_{xy'}|$ for any $y,y'\in Y$, we may assume that
$U_{xy}$ and $U_{xy'}$ are the same set $U_x$. Similarly, for every $y\in Y$, we may assume that $V_{xy}$ and $V_{x'y}$ are
the same set $V_y$. We may also assume that all these sets are disjoint. Now let $U=\bigcup_{x\in X} U_x$,
$V=\bigcup_{y\in Y} V_y$, and let the function $h_\xi:U\times V\to \F$ be defined by taking $h_\xi(u,v)=h^{xy}(u,v)$
whenever $u\in U_x$ and $v\in V_y$. Then it is straightforward to check that $h_\xi$ is a p-morphism
from $(U,\drel{U})\mprod(V,\drel{V})$ onto $\F$.
\end{proof}


\subsection{`Good' and `bad' \clusters}\label{biclusters}

By Lemma~\ref{l:fits}, if a \grid{} is not the p-morphic image of a product of difference frames, 
then it is because its \clusters{} are not p-morphic images
of `fitting' product preimages.
In this subsection, we have a closer look at individual \clusters{} first: which of them can or cannot be obtained
as the p-morphic image of some product of difference frames, and what size-restrictions we have on possible 
product preimages.
We distinguish fifteen types of finite \clusters, depending on whether they contain $R_i$-reflexive points or not, for $i=\hh,\vv$ (see Table~\ref{tt:finclusters}).
In particular, finite \clusters{} of types (\bcno{1})--(\bcno{4}) will be called \emph{impossible \clusters} 
throughout. Lemma~\ref{l:clusterpm} below claims that every countable \cluster{} that is not 
impossible can be obtained as the p-morphic image of any product of difference frames 
validating some constraints.
(In \S\ref{sformulas} we will show that the converse of Lemma~\ref{l:clusterpm} also holds in the sense 
that whenever a countable \cluster{} $\C$ is a p-morphic image of a product of difference frames,
then $\C$ is not impossible, and the described constraints hold for the preimage product frame, see Corollary~\ref{co:hassolution}.)

\begin{lemma}\label{l:clusterpm}
\begin{itemize}
\item[{\rm (i)}]
Every countably infinite \cluster{} $\C$ is a p-morphic image of 
\mbox{$(\omega,\drel{\omega})\mprod(\omega,\drel{\omega})$.}

\item[{\rm (ii)}]
For every finite \cluster{} $\C$, if $\C$ is not an impossible \cluster, then
$\C$ is the p-morphic image of $(U,\drel{U})\mprod(V,\drel{V})$ for any countable sets
$U$, $V$ such that
the constraints of Table~\ref{tt:finclusters} hold for $x=|U|$ and $y=|V|$.
\end{itemize}
\end{lemma}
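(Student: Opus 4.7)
The plan is to first extract the necessary local structure that any onto p-morphism $h\colon(U,\drel{U})\mprod(V,\drel{V})\to\C$ must possess, and then use it as a blueprint for explicit constructions. Assume $|\C|\geq 2$ and fix $v\in V$; let $S_v=\{h(u,v):u\in U\}$. Since any two distinct points of $\C$ are $\Rh$-related, one application of the backward condition starting from any $c\in S_v$ shows $c'\in S_v$ for every $c'\in \C\setminus\{c\}$, hence $S_v=\C$. The forward $\Rh$-condition then forces every $\Rh$-irreflexive point of $\C$ to have exactly one preimage in the row (two would force $c\Rh c$), while the backward condition at $c\Rh c$ forces every $\Rh$-reflexive point to have at least two. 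The symmetric statements hold for each column. Counting preimages of a single point both by rows and by columns yields the basic arithmetic constraints $|U|\geq\sizeh(\C)$ and $|V|\geq\sizev(\C)$, together with the sharper ones: $|U|\geq 2|V|$ whenever $\C$ contains an \ri\ point; $|V|\geq 2|U|$ whenever $\C$ contains an \ir\ point; $|U|=|V|$ whenever $|\C|\geq 2$ and $\C$ contains an \ii\ point; $|U|=\sizeh(\C)$ whenever $\C$ has no $\Rh$-reflexive point at all (and dually for $|V|$).

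Combining these constraints, one finds that precisely four finite-cluster configurations admit no positive integer solution for $(|U|,|V|)$, namely the types (\bcno{1})--(\bcno{4}) of Table~\ref{tt:finclusters}, which justifies their label ``impossible''. For each of the eleven remaining types, the next step is to exhibit an explicit colouring of the grid $U\mprod V$ by points of $\C$ that meets the row and column count conditions above whenever $(|U|,|V|)$ satisfies the table's constraints. The constructions follow a handful of recurring patterns: the preimages of an \ii\ point trace the graph of a bijection between $U$ and $V$; the preimages of an \ri\ point form a union of column-singletons with row-distribution prescribed by $|U|\geq 2|V|$; and the \rr\ points absorb any remaining cells. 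Routine inspection then verifies that the resulting map is indeed an onto p-morphism; the boundary cases $|\C|=1$ are handled by direct check, since the ``$S_v=\C$'' argument requires $|\C|\geq 2$.

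For part (i), with $\C$ countably infinite and $|U|=|V|=\infy$, the constructions simplify substantially because $\infy$ absorbs all the counting constraints. Enumerating $\C$ and partitioning $\omega\mprod\omega$ into appropriately sized infinite blocks, one assigns to every point of $\C$ the required number of preimages in every row and every column, following the same row/column templates as for (ii).

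The main obstacle is the combinatorial case analysis for part (ii): although each individual cluster type admits a pattern that is straightforward in isolation, simultaneously meeting all row and column counts for clusters containing several reflexivity types at once---and verifying that the local patterns for distinct point-types fit together on the same grid---is the technical heart of the argument. A secondary subtlety is the boundary $|\C|=1$, where the row/column image reasoning breaks down and must be treated separately.
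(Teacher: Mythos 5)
Your opening paragraph (the row/column counting that every onto p-morphism must satisfy) is correct and matches the paper's \eqref{xtwoy}--\eqref{xsamey}, but it is the \emph{converse} of what the lemma asserts (the paper records that direction separately, as Corollary~\ref{co:hassolution}). The lemma itself is the sufficiency statement, i.e.\ the constructions, and those are exactly what your proposal does not supply. You describe per-point ``patterns'' (\ii{} preimages as graphs of bijections, column-singletons for irreflexive directions, \rr{} points ``absorbing any remaining cells'') and then concede that making these patterns coexist on a single grid ``is the technical heart of the argument.'' That concession is the gap. Concretely: letting \rr{} points absorb leftover cells does not by itself give each of several \rr{} points at least two preimages in \emph{every} row \emph{and every} column; nothing in the proposal handles arbitrary (non-minimal, possibly infinite) $|U|,|V|$ satisfying the table's constraints, e.g.\ $|U|=2|\C|+1$ for a \bcfree{} \cluster; and the finite infinity \clusters{} --- which force $|U|=|V|=\infy$ while mixing up to four reflexivity types --- require interleaving a bijection graph, a $2$-to-$1$ graph and its transpose, and a doubly-$\geq 2$ set into a partition of $U\mprod V$, which is asserted rather than shown.

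The paper closes this gap with two devices you would need analogues of. For the finite strict, switch and free types it takes a Latin square over a multiset $S$ listing each point of $\C$ with multiplicity $1$ or $2$ according to its reflexivity in the relevant direction; a Latin square automatically yields the exact per-row and per-column counts, and non-minimal component sizes are then obtained by first mapping onto a product with a \emph{universal} component of minimal size and composing via \eqref{difftosfivepm} and \eqref{pmprod}. For countably infinite \clusters{} and for finite infinity \clusters{} it abandons explicit colourings altogether in favour of a step-by-step game $\mathbb{G}(\C)$ in which $\exists$ extends finite partial homomorphisms, using a $\ \rr$ point as a universal filler; a suitably scheduled play has the desired p-morphism as its union. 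Your direct block-partition of $\omega\mprod\omega$ could perhaps be made to work, but it is a genuinely harder route for the mixed-type cases and would need its own verification.
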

%
%
\begin{table}
\begin{center}
\begin{tabular}{|c|cccc|l|}
\hline
type of $\C$ & \ii & \ri & \ir & \rr & constraints on $x\times y$ size\\
&&&&& p-morphic preimage\\
\hline\hline
\mbox{(\bcno{1})} & -- & + & + & - & no such $x,y$\\[3pt]
\hline
\mbox{(\bcno{2})} & + & -- & + & - & no such $x,y$\\[3pt]
\hline
\mbox{(\bcno{3})} & + & + & -- & -- & no such $x,y$\\[3pt]
\hline
\mbox{(\bcno{4})} & + & + & + & -- & no such $x,y$\\[3pt]
\hline
\mbox{(\bcinf{1})} & -- & + & + & + & $x=\infy,\ y=\infy$\\[3pt]
\hline
\mbox{(\bcinf{2})} & + & -- & + & + & $x=\infy,\ y=\infy$\\[3pt]
\hline
\mbox{(\bcinf{3})} & + & + & -- & + & $x=\infy,\ y=\infy$\\[3pt]
\hline
\mbox{(\bcinf{4})} & + & + & + & + & $x=\infy,\ y=\infy$\\[3pt]
\hline
\mbox{(\bchvsw)} & -- & + & -- & + & $x\geq 2y,\ y\geq\sizev(\C)$\\
$x\to^2 y$ &&&&& \\[3pt]
\hline
\mbox{(\bcvhsw)} & -- & -- & + & + & $x\geq\sizeh(\C),\ y\geq 2x$\\
$y\to^2 x$ &&&&& \\[3pt]
\hline
\mbox{(\bceqsw)} & + & -- & -- & + & $x\geq y,\ y\geq x$,\\
$x\leftrightarrow^1 y$ &&&&& $x\geq\sizeh(\C),\ y\geq\sizev(\C)$\\[3pt]
\hline
\mbox{(\bchstrict)} & -- & -- & + & -- & $x=\sizeh(\C)=|\C|$,\\
&&&&& $y\geq\sizev(\C)=2\cdot |\C|$\\[3pt]
\hline 
\mbox{(\bcvstrict)} & -- & + & -- & -- & $x\geq\sizeh(\C)=2\cdot |\C|$,\\
&&&&& $y=\sizev(\C)=|\C|$\\[3pt]
\hline
\mbox{(\bchvstrict)} & + & -- & -- & -- & $x=\sizeh(\C)=|\C|$,\\ 
&&&&& $y=\sizev(\C)=|\C|$\\[3pt]
\hline
\mbox{(\bcfree)} & -- & -- & -- & + & $x\geq\sizeh(\C)=2\cdot |\C|$,\\ 
&&&&& $y\geq\sizev(\C)=2\cdot |\C|$\\[3pt]
\hline 
\end{tabular}

\bigskip
\begin{tabular}{l|l}
\underline{Notation in table:} & \underline{Terminology:}\\[4pt]
\rr\,=\,both $\Rh$- and $\Rv$-reflexive & \emph{impossible} \cluster
s: (\bcno{1})--(\bcno{4})\\
\ri\,=\,$\Rh$-reflexive, $\Rv$-irreflexive & \emph{infinity} \clusters: (\bcinf{1})--(\bcinf{4})\\
\ir\,=\,$\Rh$-irreflexive, $\Rv$-reflexive & \emph{switch} \clusters: (\bchvsw),\\
\ii\,=\,both $\Rh$- and $\Rv$-irreflexive & \hspace*{2.2cm}(\bcvhsw), and (\bceqsw) \\
+\;=\,there is such a point in $\C$ & \emph{strict} \clusters: (\bchstrict),\\
--\;=\,there isn't such a point in $\C\quad$ &  \hspace*{1.6cm}(\bcvstrict), and (\bchvstrict) 
\end{tabular}
\end{center}
\caption{Possible finite \clusters{} in a \grid.}\label{tt:finclusters}
\end{table}

\begin{proof}
We begin with a useful tool.
Given a \cluster{} $\C=(C,\Rh,\Rv)$,
we define an $\C$-\emph{network} to be a homomorphism 
$f:(U,\drel{U})\mprod (V,\drel{V})\to\C$, for some finite non-empty sets $U$ and $V$.
Given $\C$-networks $f_1:(U_1,\drel{U_1})\mprod (V_1,\drel{V_1})\to\C$ and 
$f_2:(U_2,\drel{U_2})\mprod (V_2,\drel{V_2})\to\C$, we write $f_1\subseteq f_2$ whenever
$U_1\subseteq U_2$, $V_1\subseteq V_2$ and $f_2|_{U_1\mprod V_1}=f_1$.
We define a \emph{game} $\mathbb{G}(\C)$ between two players, $\forall$ and $\exists$.
They build a countable sequence of $\C$-networks
$
f_0\subseteq f_1\subseteq \dots\subseteq f_k\subseteq\dots .
$
In round~0, $\forall$ picks any point $r$ in $\C$, and $\exists$ responds with 
$U_0=\{u_0\}$, $V_0=\{v_0\}$, and $f_0(u_0,v_0)=r$.
In round $k$ ($k\in\nNp$), some sequence $f_0\subseteq\dots\subseteq f_{k-1}$  of $\C$-networks
has already been built. $\forall$ picks a pair $(c^\ast,z)$ where
$c^\ast\in C$ and $z\in U_{k-1}\cup V_{k-1}$.
%
%
There are two cases:
\begin{itemize}
\item
$z\in V_{k-1}$. Then $\exists$ can respond in two ways:
 If either $c^\ast$ is $\Rh$-irreflexive and there is $u\in U_{k-1}$ with $f_{k-1}(u,z)=c^\ast$, 
 or $c^\ast$ is $\Rh$-reflexive and there are $u,u'\in U_{k-1}$, $u\ne u'$ with $f_{k-1}(u,z)=f_{k-1}(u',z)=c^\ast$, 
then she responds with $f_{k}=f_{k-1}$. Otherwise, 
she responds (if she can) with some \mbox{$\C$-network} $f_k\supseteq f_{k-1}$ 
such that $V_k=V_{k-1}$, $U_k=U_{k-1}\cup\{u^+\}$ for some fresh point $u^+$,
and $f_k(u^+,z)=c^\ast$.
In other words, she needs to find a sequence $\bigl(c_v : v\in V_{k-1}-\{z\}\bigr)$ of points in $\C$ such that,
for every $v\in V_{k-1}-\{z\}$, 
%

%
\begin{itemize}
\item
$c_v\Rh f_{k-1}(u,v)$ for every $u\in U_{k-1}$,
\item
$c_v\Rv c^\ast$, and $c_v\Rv c_{v'}$ for every $v'\in V_{k-1}-\{z\}$, $v'\ne v$.
\end{itemize}

\item
$z\in U_{k-1}$. Then again, $\exists$ can respond in two ways:
If either $c^\ast$ is $\Rv$-irreflexive and there is $v\in V_{k-1}$ with $f_{k-1}(z,v)=c^\ast$, 
 or $c^\ast$ is $\Rv$-reflexive and there are $v,v'\in V_{k-1}$, $v\ne v'$ with $f_{k-1}(z,v)=f_{k-1}(z,v')=c^\ast$, 
then she responds with $f_{k}=f_{k-1}$. Otherwise, 
she responds (if she can) with some \mbox{$\C$-network} $f_k\supseteq f_{k-1}$ 
such that $U_k=U_{k-1}$, $V_k=V_{k-1}\cup\{v^+\}$ for some fresh point $v^+$,
and $f_k(z,v^+)=c^\ast$.
In other words, 
she needs to find a sequence $\bigl(c_u : u\in U_{k-1}-\{z\}\bigr)$ of points in $\C$ such that,
for every $u\in U_{k-1}-\{z\}$, 
\begin{itemize}
\item
$c_u\Rv f_{k-1}(u,v)$ for every $v\in V_{k-1}$,

\item
$c_u\Rh c^\ast$, and $c_u\Rh c_{u'}$ for every $u'\in U_{k-1}-\{z\}$, $u'\ne u$.
\end{itemize}
\end{itemize}
If $\exists$ can respond in each round $k$ for $k\in\nN$ then \emph{she wins the play}. We say that
$\exists$ \emph{has a winning strategy in} $\mathbb{G}(\C)$ if she can win all plays, whatever moves
$\forall$ takes in the rounds.

\begin{lclaim}\label{c:game}
For every countable \cluster{} $\C$, 
player $\exists$ has a winning strategy in $\mathbb{G}(\C)$ iff
$\C$ is the p-morphic image of a product of two countable difference frames.
\end{lclaim}

\begin{proof}
On the one hand, it is easy to see that $\exists$ can use a p-morphism from
a product of two difference frames onto $\C$ to determine her winning strategy in $\mathbb{G}(\C)$.

For the other direction,
consider a play of the game $\mathbb{G}(\C)$ with the following property:
For all $k\in\nN$, $(u,v)\in U_k\mprod V_k$, $c^\ast\in C$, there exist $\ell_\hh,\ell_\vv\in\nN$ such
that $\ell_\hh,\ell_\vv> k$, $\forall$ picks $(c^\ast,u)$ in round $\ell_\hh$, and 
$\forall$ picks $(c^\ast,v)$ in round $\ell_\vv$ (since $\C$ is countable, he can do these). 
If $\exists$ uses her strategy, then the union $f:(U,\drel{U})\mprod (V,\drel{V})\to\C$ of 
the constructed countable ascending chain of $\C$-networks is a p-morphism.
Indeed,
take some $u^\ast\in U$, $v^\ast\in V$, $c^\ast\in C$ such that, say, $f(u^\ast,v^\ast)\Rh c^\ast$.
We need to find some $u\in U$ such that $u\ne u^\ast$ and $f(u,v^\ast)=c^\ast$.
Let $k$ be such that $(u^\ast,v^\ast)\in U_k\mprod V_k$ and consider round $\ell> k$ 
when $\forall$ picks $(c^\ast,v^\ast)$. There are three cases:
If $c^\ast$ is $\Rh$-irreflexive and there is $u\in U_{\ell-1}$ with $f_{\ell-1}(u,v^\ast)=c^\ast$,
then $f(u,v^\ast)\ne f(u^\ast,v^\ast)$, and so $u\ne u^\ast$.
If $c^\ast$ is $\Rh$-reflexive and there are $u,u'\in U_{\ell-1}$ with $u\ne u'$ and $f_{\ell-1}(u,v^\ast)=f_{\ell-1}(u',v^\ast)=c^\ast$, then either $u\ne u^\ast$ or $u'\ne u^\ast$.
Otherwise, there is $u^+\in U_\ell-U_{\ell-1}$ with $f(u^+,v^\ast)=c^\ast$. As $u^\ast\in U_{\ell-1}$,
it follows that $u^+\ne u^\ast$.
%
\end{proof}

%


Now we can complete the proof of Lemma~\ref{l:clusterpm}.

Item (i): Consider the game $\mathbb{G}(\C)$.
As $\C$ is infinite and the constructed networks in each round of a play in the game have finite domains,
$\exists$ can always respond according to the rules, and so she has a winning strategy in $\mathbb{G}(\C)$. So by Claim~\ref{c:game}, $\C$ is the p-morphic image of a product of two countable difference frames. As $\C$ contains infinitely many points that are $R_j$-connected, for both $j=\hh$ and
$j=\vv$, both components in the product preimage must be infinite.

Item (ii):
Suppose first that $\C$ is a finite infinity \cluster. Then $\C$ contains at least one $\ \rr$ point $a$.
So in every round of a play in the game $\mathbb{G}(\C)$, $\exists$ can always 
use a full $a$-sequence as the sequence of points needed in her response, giving her
a winning strategy in $\mathbb{G}(\C)$. So by Claim~\ref{c:game}, there exists an 
onto p-morphism $f:(U,\drel{U})\mprod(V,\drel{V})\to\C$, for some countable sets $U$ and $V$. 
We claim that
\begin{align}
\label{xtwoy}
& \mbox{if $\C$ contains a $\ \ri$ point then $|U|\geq 2\cdot |V|$,}\\
\label{ytwox}
& \mbox{if $\C$ contains a $\ \ir$ point then $|V|\geq 2\cdot |U|$, and}\\
\label{xsamey}
& \mbox{if $\C$ contains a $\ \ii$ point then $|U|= |V|$.}
\end{align}
Indeed, for \eqref{xtwoy}, let $a$ be a $\ \ri$ point in $\C$. Then for every
$v\in V$, there exist $u_v,u_v'\in U$, $u_v\ne u_v'$ such that
$f(u_v,v)=f(u_v',v)=a$. Also, if $v_1\ne v_2$ then $u_{v_1}$, $u_{v_1}'$, $u_{v_2}$, $u_{v_2}'$
must be four distint points, and so \eqref{xtwoy} follows. The proof of \eqref{ytwox} is similar.
For \eqref{xsamey}, let $b$ be a $\ \ii$ point in $\C$. Then for every
$v\in V$, there exist $u_v\in U$ such that
$f(u_v,v)=b$. Also, if $v_1\ne v_2$ then $u_{v_1}\ne u_{v_2}$ must hold, and so $|U|\geq |V|$
follows. We can show $|V|\geq |U|$ similarly, and so we obtain \eqref{xsamey}.
Now if $\C$ is an infinity \cluster, then the infinity of both $U$ and $V$ follows from
\eqref{xtwoy}--\eqref{xsamey}.

Suppose that $\C$ is an $n$-element (\bchvstrict) \cluster, and take any $n$-element sets $U$ and 
$V$. Let $f: U\mprod V\to\C$ be any function such that the $n\mprod n$-matrix 
$\bigl(f(u,v)\bigr)_{(u,v)\in U\mprod V}$ is a Latin square over the elements of $\C$
(that is, each element of $\C$ occurs exactly once in each row and exactly once in each column).
It is straightforward to check that such an $f$ is a p-morphism from $(U,\drel{U})\mprod(V,\drel{V})$
onto $\C$.

Suppose that $\C$ is an $n$-element (\bchstrict) \cluster, and take any $n$-element sets $U$ and 
$V^-$. Let $f:U\mprod V^-\to\C$ be any function such that the $n\mprod n$-matrix 
$\bigl(f(u,v)\bigr)_{(u,v)\in U\mprod V^-}$ is a Latin square over the elements of $\C$.
It is straightforward to check that such an $f$ is a p-morphism from $(U,\drel{U})\mprod(V^-,\urel{V^-})$
onto $\C$. Now take any set $V$ with $|V|\geq 2n$. By \eqref{difftosfivepm} and \eqref{pmprod}, we obtain that $\C$ is the p-morphic image of $(U,\drel{U})\mprod(V,\drel{V})$.
The proof for (\bcvstrict) \clusters{} is similar.

The following claim will also be used in \S\ref{nonfinax}:

\begin{lclaim}\label{c:neededininf}
If $\C$ is a {\rm (\bchvsw)} \cluster{} containing $n$ $\ \ri$ points and $m$ $\ \rr$ points,
then for all sets $U'$, $V$ with $|U'|=|V|\geq n+2m$ there exists a p-morphism from $(U',\urel{U'})\mprod(V,\drel{V})$ onto $\C$.
\end{lclaim}

\begin{proof}
Take any sets $U'$, $V$ such that $|U'|=|V|=N$ for some $N\geq n+2m$.
Let $S$ be an $N$-element set that contains all the $\ \ri$ points of $\C$, and at least two distinct `copies'
of each $\ \rr$ point in $\C$. Then let $f:U'\mprod V\to\C$ be any function such that the $N\mprod N$-matrix 
$\bigl(f(u,v)\bigr)_{(u,v)\in U'\mprod V}$ is a Latin square over the elements of $S$.
It is straightforward to check that 
such an $f$ is a p-morphism from $(U',\urel{U'})\mprod(V,\drel{V})$ onto $\C$.
\end{proof}

Now suppose that $\C$ is a (\bchvsw) \cluster{} containing $n$ $\ \ri$ points and $m$ $\ \rr$ points,
and take any sets $U,V$ with $|U|\geq 2\cdot |V|$ and $V\geq n+2m$. By \eqref{difftosfivepm}, \eqref{pmprod}, 
and Claim~\ref{c:neededininf}, we obtain that $\C$ is the p-morphic image of $(U,\drel{U})\mprod(V,\drel{V})$.
The proof for (\bcvhsw) \clusters{} is similar.

Suppose that $\C$ is a (\bceqsw) \cluster{} containing $n$ $\ \ii$ points and $m$ $\ \rr$ points,
and take any sets $U$, $V$ such that $|U|=|V|=N$ for some $N\geq n+2m$.
Let $S$ be an $N$-element set that contains all the $\ \ii$ points of $\C$, and at least two distinct `copies'
of each $\ \rr$ point in $\C$. Then let $f:U\mprod V\to\C$ be any function such that the $N\mprod N$-matrix 
$\bigl(f(u,v)\bigr)_{(u,v)\in U\mprod V}$ is a Latin square over the elements of $S$.
It is straightforward to check that such an $f$ is a p-morphism from $(U,\drel{U})\mprod(V,\drel{V})$
onto $\C$. 

Finally, suppose that $\C$ is an $n$-element (\bcfree) \cluster, and take any $n$-element sets $U^-$ and 
$V^-$. Let $f:U^-\mprod V^-\to\C$ be any function such that the $n\mprod n$-matrix 
$\bigl(f(u,v)\bigr)_{(u,v)\in U^-\mprod V^-}$ is a Latin square over the elements of $\C$.
It is straightforward to check that such an $f$ is a p-morphism from $(U^-,\urel{U^-})\mprod(V^-,\urel{V^-})$
onto $\C$. Now take any sets $U$, $V$ with $|U|\geq 2n$ and $|U|\geq 2n$. 
 By \eqref{difftosfivepm} and \eqref{pmprod}, we obtain that $\C$ is the p-morphic image of $(U,\drel{U})\mprod(V,\drel{V})$.
\end{proof}


\section{Non-finite axiomatisability}\label{nonfinax}

In this section we prove Theorems~\ref{co:nonfinax} and \ref{t:sqnonfinax},
using the proof pattern described in \S\ref{proofmethodnonfin}. 
We will also use a result of
\cite[Cor.~2.5]{Kurucz10}, saying that
if $\Cc$ is closed under ultraproducts and point-generated subframes, then 
\begin{multline}
\mbox{for every finite frame $\F$,\ \ $\F$ is a frame for $\Log\Cc$}\\
\label{finframe}
\mbox{iff\quad $\F$ is the p-morphic image of some frame in $\Cc$.}
\end{multline}
%

In order to prove Theorem~\ref{co:nonfinax}, we show the following more general statement, which
also generalises some results of \cite{KudinovSS12}:

\begin{theorem}\label{t:nonfinax}
Let $L$ be any bimodal logic such that
\begin{itemize}
\item
$L$ contains $\K\mprod\Diff$, and
\item
for every $k\in\nNp$ there are $U$, $V$, such that $|V|\geq k$, $|U|\geq 2\cdot|V|$ and
$(U,\urel{U})\mprod(V,\drel{V})$ is a frame for $L$.
\end{itemize}
Then $L$ is not axiomatisable using finitely many propositional variables.
\end{theorem}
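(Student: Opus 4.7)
The plan follows the scheme of \S\ref{proofmethodnonfin}. Fix $m\in\nNp$ and suppose, for contradiction, that $L$ is axiomatised by a set $\Sigma$ of formulas using only the propositional variables $p_1,\ldots,p_m$. The goal is to produce, for $k$ sufficiently large relative to $m$, a pair of frames $\F_k,\G_k$ such that $\F_k$ is a frame for $L$, $\G_k$ is not, and no modal formula in $p_1,\ldots,p_m$ is valid on $\F_k$ but refuted on $\G_k$. Such a pair contradicts the purported axiomatisation: every $\sigma\in\Sigma$ is valid on $\F_k$ and hence on $\G_k$, forcing $\G_k\models L$.

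For $\F_k$ the hypothesis directly supplies sets $U_k,V_k$ with $|V_k|\geq k$ and $|U_k|\geq 2|V_k|$ such that $\F_k=(U_k,\urel{U_k})\mprod(V_k,\drel{V_k})$ is a frame for $L$. For $\G_k$ I would choose a finite $\comm$-frame (a \grid{} in the sense of Lemma~\ref{l:grid}) containing one of the impossible clusters of types (\bcno{1})--(\bcno{4}) from Table~\ref{tt:finclusters}, embedded among otherwise innocuous clusters so that $\G_k$ looks locally like an $\F_\ell$ for some $\ell$. By Lemma~\ref{l:clusterpm}, an impossible cluster cannot appear in any p-morphic image of a product of difference frames; combining this with \eqref{finframe} applied to the class of $(U,\urel{U})\mprod(V,\drel{V})$-frames validating $L$ (this class is clearly closed under ultraproducts and point-generated subframes, and all its members are frames for $L$) shows that $\G_k$ is not a frame for $L$.

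The bulk of the argument is then to verify that $\F_k$ and $\G_k$ cannot be separated by $m$-variable formulas. The plan is to show that every $m$-generated Kripke model $\N$ on $\G_k$ arises as a p-morphic image of some $m$-generated Kripke model $\M$ on $\F_k$. Since p-morphisms of models preserve modal equivalence, any formula valid on the frame $\F_k$ is true at every point of $\M$, hence of $\N$; so every $m$-variable formula valid on $\F_k$ is valid on $\G_k$. The construction of $\M$ and of the p-morphism proceeds via a pigeonhole argument, in the spirit of the game $\mathbb{G}(\C)$ used in the proof of Lemma~\ref{l:clusterpm}: an $m$-generated model realises at most $2^m$ distinct atomic types, and once $|V_k|$ is larger than $2^m$ and $|U_k|\geq 2|V_k|$, there is enough room in $\F_k$ to realise every required atomic type with the back-and-forth conditions satisfied. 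The doubling $|U_k|\geq 2|V_k|$ is precisely what lets one simulate both reflexive and irreflexive vertical behaviour, in the style of \eqref{difftosfivepm}.

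The main obstacle is building the p-morphism of models above the impossible cluster of $\G_k$. By design there is no frame-level p-morphism from $\F_k$ to $\G_k$ (that would contradict $\G_k\not\models L$); the model p-morphism exists only because an $m$-generated valuation cannot individually label the $\ii$, $\ri$ and $\ir$ points of the impossible cluster beyond what the $2^m$ palette allows. One must show that any $m$-generated valuation on the impossible cluster can be `spread out' into a valuation on $\F_k$ satisfying the successor requirements, exploiting the $2|V_k|$ copies of rows in $U_k$ to duplicate atomic types freely. Managing this coordination -- between the palette size $2^m$, the grid width $|V_k|$ and the row multiplicity $|U_k|$ -- is the delicate combinatorial step on which the whole argument rests.
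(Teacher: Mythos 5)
There is a genuine gap, and it is structural rather than a missing detail. Your plan hinges on producing, for every $m$-generated model $\N$ on the bad frame $\G_k$, a ``p-morphism of models'' from some model $\M$ on the good product frame $\F_k$ onto $\N$, while conceding that no frame-level p-morphism from $\F_k$ to $\G_k$ can exist. But by the definition used in the paper (and any standard one), a p-morphism of models \emph{is} a p-morphism of the underlying frames together with a condition on the valuations; there is no weaker ``$m$-variable'' notion that survives without the frame map. Indeed, if such a surjection transferred truth of all $m$-variable formulas from $\F_k$ to $\G_k$ it would have to be a genuine p-morphism on the frames, and validity is preserved under p-morphic images, so $\G_k$ would be a frame for $L$ after all. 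The paper's proof points the arrow the other way: it takes a \emph{non}-frame $\F_k$ (two clusters of $\Rv$-irreflexive points of sizes $k$ and $k+1$ sharing a row --- no impossible clusters at all, just a size clash) and a genuinely good finite frame $\G_k$, and for each $m$-generated model $\M$ on the bad $\F_k$ builds a frame p-morphism from $\F_k$ \emph{onto} $\G_k$ that respects $\M$, by merging points of equal atomic type (pigeonhole on the $2^m$ types once $k\geq 2^{m+1}$). Since $\G_k$ validates $\Sigma$ and the map is a surjective p-morphism of models, $\M\models\Sigma$; hence the bad frame validates $\Sigma$ and so $L$, a contradiction. Collapsing the bad frame onto the good one is consistent with preservation of validity; your direction is not.

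A second, independent problem is your argument that $\G_k$ is not a frame for $L$. Applying \eqref{finframe} to a class $\Cc$ of frames \emph{for} $L$ only tells you about $\Log\Cc\supseteq L$; a finite frame that fails to validate $\Log\Cc$ may still validate $L$. The paper instead exploits the hypothesis $\K\mprod\Diff\subseteq L$ and shows its bad frame is not a p-morphic image of any $\F_\hh\mprod\F_\vv$ with $\F_\vv$ a pseudo-equivalence frame --- the generating class of the \emph{smaller} logic --- via a short pigeonhole argument tailored to the two-cluster shape. Whether a grid containing a large impossible cluster fails to be a frame for $\K\mprod\Diff$ (rather than merely failing to be an image of a product of two difference frames, which is all Lemma~\ref{l:clusterpm} and Corollary~\ref{co:noimpossible} give) is not established by anything you cite, so even the choice of counterexample frame would need reworking.
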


\begin{proof}
For every $k\in\nNp$, $k\geq 2$, take the \grids\ $\F_k$ and $\G_k$ depicted in Fig.~\ref{f:nonfin1}.

\begin{figure}[ht]
\begin{center}
\setlength{\unitlength}{.03cm}
\begin{picture}(105,130)

\multiput(-10,0)(0,110){2}{\line(1,0){120}}
\multiput(-10,0)(60,0){3}{\line(0,1){110}}

\multiput(30,37)(0,8){3}{\circle*{1}}
\multiput(70,37)(0,8){3}{\circle*{1}}
\put(20,15){\ri}
\put(20,65){\ri}
\put(60,15){\ri}
\put(60,65){\ri}
\put(60,80){\ri}

\put(45,120){$\F_k$}
\put(-5,97){$\C_1(k)$}
\put(77,97){$\C_2(k)$}

\put(9,13){\begin{turn}{90}$\overbrace{\hspace*{1.8cm}}$\end{turn}}
\put(0,36){${}^k$}

\put(76,12){\begin{turn}{90}$\underbrace{\hspace*{2.3cm}}$\end{turn}}
\put(88,44){${}^{k+1}$}
\end{picture}
\hspace*{4cm}
\begin{picture}(105,130)
\multiput(-10,0)(0,110){2}{\line(1,0){120}}
\multiput(-10,0)(60,0){3}{\line(0,1){110}}

\multiput(30,37)(0,8){3}{\circle*{1}}
\multiput(70,37)(0,8){3}{\circle*{1}}
\put(20,15){\ri}
\put(20,65){\ri}
\put(20,80){\rr}
\put(60,15){\ri}
\put(60,65){\ri}
\put(60,80){\rr}

\put(45,120){$\G_k$}
\put(-5,97){$\C'_1(k)$}
\put(77,97){$\C'_2(k)$}

\put(11,13){\begin{turn}{90}$\overbrace{\hspace*{1.8cm}}$\end{turn}}
\put(-6,36){${}^{k-2}$}

\put(76,13){\begin{turn}{90}$\underbrace{\hspace*{1.8cm}}$\end{turn}}
\put(88,36){${}^{k-2}$}
\end{picture}
\end{center}
\caption{The \grids\ $\F_k$ and $\G_k$.}\label{f:nonfin1}
\end{figure}


\begin{tlemma}\label{l:nonfinax}
\begin{itemize}
\item[{\rm (i)}]
$\F_k$ is not a frame for $\K\mprod\Diff$.

\item[{\rm (ii)}]
$\G_k$ is a p-morphic image of $(U,\urel{U})\mprod(V,\drel{V})$, whenever $|V|\geq k$ and $|U|\geq 2\cdot|V|$.

\item[{\rm (iii)}]
If $k,m\in\nN$ and $k\geq 2^{m+1}$, then for every $m$-generated model $\M$ over $\F_k$ there is some model $\N$ over $\G_k$ that is a p-morphic image of $\M$.
\end{itemize}
%
\end{tlemma}

\begin{proof}
(i):
By definition, $\K\mprod\Diff=\Log\Cc$, where
\[
\Cc=\bigl\{\F_\hh\mprod\F_\vv : \mbox{$\F_\vv$ is a pseudo-equivalence frame}\bigr\}.
\]
Using \eqref{pgenprod} and the fact that the ultraproduct construction also commutes with
the modal product construction, 
it is not hard to see that $\Cc$ is closed 
under point-generated subframes and ultraproducts.
Therefore, by \eqref{finframe},
it is enough to show that $\F_k=(W,\Rh,\Rv)$ is not the p-morphic image of any
$(W_\hh,\Qh)\mprod(W_\vv,\Qv)$, where $\Qv$ is a pseudo-equivalence relation.
Suppose to the contrary that there is an onto p-morphism $f:(W_\hh,\Qh)\mprod(W_\vv,\Qv)\to\F_k$.
Take any point $a$ in the $k$-element \cluster{} $\C_1(k)$, and 
 any point $b$ in the $k+1$-element \cluster{} $\C_2(k)$. As $a\Rh b$,
 there are $x_0,x_1\in W_\hh$, $y_0\in W_\vv$ such that $x_0\Qh x_1$,
$f(x_0,y_0)=a$ and $f(x_1,y_0)=b$. 
As there are $k$ other points in $\C_2(k)$, each of them is $\Rv$-related to $b$, there exist $y_1,\dots,y_{k}\in W_\vv$
such that $y_0\Qv y_i$ for all $0<i\leq k$ and $y_i\ne y_j$ for all $i\ne j\leq k$. As $\Qv$ is a pseudo-equivalence relation, it follows that $y_i\Qv y_j$ for all $i\ne j\leq k$.
Then $f(x_0,y_i)\Rv f(x_0,y_j)$ must hold, for all $i\ne j\leq k$. As 
every point in $\C_1(k)$ is $\Rv$-irreflexive, this is not possible by the pigeonhole principle.

(ii):
Take any sets $U$, $V$, with $|V|\geq k$ and $|U|\geq 2\cdot |V|$, and choose two disjoint subsets 
$U_1$ and $U_2$ of $U$ such that $|U_1|=|U_2|= |V|$.
Observe that each of the two \clusters{} $\C'_i(k)$ in $\G_k$ is a (\bchvsw) \cluster, containing
$k-2$ $\ \ri$ points and one $\ \rr$ point
(cf.\ Fig.~\ref{f:nonfin1} and Table~\ref{tt:finclusters}).
So by Claim~\ref{c:neededininf}, 
there exist onto \mbox{p-morphisms} $h_i: (U_i,\urel{U_i})\mprod(V,\drel{V})\to\C'_i(k)$, for $i=1,2$.
Let $U'=U_1\cup U_2$, and 
define a function $h$ from $U'\mprod V$ to $\G_k$ by taking, for all $u\in U'$, $v\in V$,
\[
h(u,v)=\left\{
\begin{array}{ll}
h_1(u,v) & \mbox{if $u\in U_1$},\\
h_2(u,v), & \mbox{if $u\in U_2$}.
 \end{array}
 \right.
 \]
Then it is easy to check that $h$ is a p-morphism from $(U',\urel{U'})\mprod(V,\drel{V})$ onto
$\G_k$. As $(U',\urel{U'})$ is a p-morphic image of $(U,\urel{U})$, it follows from \eqref{pmprod}
that $\G_k$ is a p-morphic image of $(U,\urel{U})\mprod(V,\drel{V})$.

(iii):
Let $\M$ be a model over $\F_k$ such that if $\M(p)\ne\emptyset$ for some propositional variable $p$
then $p=p_i$ for some $i<m$. We define two equivalence relations $\sim_1$ and $\sim_2$ on 
$\C_1(k)$ and on $\C_2(k)$, respectively, by taking, for all $a,a'$ in $\C_1(k)$ and $b,b'$ in $\C_2(k)$,
\begin{align*}
a\sim_1 a' & \qquad\mbox{iff}\qquad a\in\M(p_i)\Leftrightarrow a'\in\M(p_i),\quad\mbox{for all $i<m$},\\
b\sim_2 b' & \qquad\mbox{iff}\qquad b\in\M(p_i)\Leftrightarrow b'\in\M(p_i),\quad\mbox{for all $i<m$}.
\end{align*}
As $k\geq 2^{m+1}$, by the generalised pigeonhole principle, 
there is a $\sim_1$-class containing at least two points $a,a'$, and
there is a $\sim_2$-class containing at least three points $b,b',b''$.
Now define a function $h$ from $\F_k$ onto $\G_k$ by
\begin{itemize}
\item
mapping $a$ and $a'$ to the 
$\ \rr$ point in $\C'_1(k)$, 
\item
mapping the remaining $k-2$ points in $\C_1(k)$ to the $k-2$ distinct 
$\ \ri$ points in $\C'_1(k)$,
\item
mapping $b,b'$ and $b''$ to the 
$\ \rr$ point in $\C'_2(k)$, 
\item
mapping the remaining $k-2$ points in $\C_2(k)$ to the $k-2$ distinct 
$\ \ri$ points in $\C'_2(k)$.

\end{itemize}
It is easy to check that $h$ is a p-morphism from $\F_k$ onto $\G_k$. Now define a model $\N$ over
$\G_k$ by taking,
for any propositional variable $p$, $\N(p)=\{c : h(a)=c\mbox{ for some $a\in\M(p)$}\}$.
By the above, $h$ is a p-morphism from $\M$ onto $\N$.
\end{proof}

Now the proof of Theorem~\ref{t:nonfinax} can be completed as follows.
Suppose to the contrary that $\Sigma$ axiomatises $L$ and $\Sigma$ contains only 
$m$ propositional variables, for some $m\in\nN$.
Let $k\geq 2^{m+1}$ and let $\M$ be an arbitrary model over $\F_k$. Let
 $\M_m$ be another model over $\F_k$ that is the same
as $\M$ on propositional variables occurring in $\Sigma$, and
$\emptyset$ otherwise. Then $\M_m$ is clearly $m$-generated
and $\M_m\models\Sigma$ iff $\M\models\Sigma$.
Also, by Lemma~\ref{l:nonfinax}~(iii) there is a model $\N$ over $\G_k$ that is a p-morphic image 
of $\M_m$. 
As there are $U$, $V$, such that $|V|\geq k$, $|U|\geq 4\cdot|V|$ and
$(U,\urel{U})\mprod(V,\drel{V})$ is a frame for $L$,
by Lemma~\ref{l:nonfinax}~(ii) $\G_k$ is a frame for $L$. Thus, $\N\models L$, and so $\M_m\models L$.
As $\Sigma\subseteq L$, we obtain $\M_m\models\Sigma$, and so $\M\models\Sigma$.
As this holds for any model $\M$ over $\F_k$, $\F_k$ is a frame for $\Sigma$. 
Therefore, $\Log\{\F_k\}$ is a bimodal logic containing $\Sigma$, and so we have  that
$\F_k$ is a frame for $L$. As $L$ contains $\K\mprod\Diff$, this implies
that $\F_k$ is a frame for $\K\mprod\Diff$, contradicting Lemma~\ref{l:nonfinax}~(i).
\end{proof}


\paragraph{Proof of Theorem~\ref{t:sqnonfinax}}
For every $k\in\nNp$, take the \grids{} $\G_k$ and $\Hh_k$ from Figs.~\ref{f:nonfin1} and 
\ref{f:nonfin2}, respectively.

\begin{figure}[ht]
\begin{center}
\setlength{\unitlength}{.025cm}
\begin{picture}(105,85)
\multiput(-10,0)(0,85){2}{\line(1,0){120}}
\multiput(-10,0)(60,0){3}{\line(0,1){85}}

\multiput(30,37)(0,8){3}{\circle*{1}}
\multiput(72,37)(0,8){3}{\circle*{1}}
\put(20,15){\ri}
\put(20,65){\ri}
\put(62,15){\ri}
\put(62,65){\ri}


\put(9,13){\begin{turn}{90}$\overbrace{\hspace*{1.5cm}}$\end{turn}}
\put(0,36){${}^k$}

\put(80,13){\begin{turn}{90}$\underbrace{\hspace*{1.5cm}}$\end{turn}}
\put(93,36){${}^{k}$}
\end{picture}
\end{center}
\caption{The \grid{} $\Hh_k$.}\label{f:nonfin2}
\end{figure}

\begin{tlemma}\label{l:sqnonfinax}
\begin{itemize}
\item[{\rm (i)}]
$\Hh_k$ is not a frame for $\dsqxd$.

\item[{\rm (ii)}]
$\G_k$ is a p-morphic image of $(\omega,\drel{\omega})\mprod(\omega,\drel{\omega})$.

\item[{\rm (iii)}]
If $k,m\in\nN$ and $k>2^{m}$, then for every $m$-generated model $\M$ over $\Hh_k$ there is some model $\N$ over $\G_k$ that is a p-morphic image of $\M$.
\end{itemize}
\end{tlemma}

\begin{proof}
(i):
By definition, $\dsqxd=\Log\{$square products of difference frames$\}$.
Using \eqref{pgenprod} and the fact that the ultraproduct construction also commutes with
the modal product construction, 
it is not hard to see that the class of all square products of difference frames is closed 
under  point-generated subframes and ultraproducts.
Therefore, by \eqref{finframe},
it is enough to show that $\Hh_k=(W,\Rh,\Rv)$ is not the p-morphic image of a square product $(U,\drel{U})\mprod (V,\drel{V})$ for any sets $U,V$ with $|U|=|V|>0$.
Suppose indirectly that it is.
As every point in $\Hh_k$ is $\Rv$-irreflexive, 
$|V|=k$ must hold. On the other hand, as $\Rh$ is the universal relation in $\Hh_k$, we must have 
$|U|\geq 2k$, contradicting $|U|=|V|>0$.

Item (ii) follows from Lemma~\ref{l:nonfinax}~(ii), \eqref{difftosfivepm} and \eqref{pmprod}.
The proof of item (iii)
is similar to that of Lemma~\ref{l:nonfinax}~(iii).
\end{proof}

Now the proof of Theorem~\ref{t:sqnonfinax} can be completed similarly to that of Theorem~\ref{t:nonfinax},
using Lemma~\ref{l:sqnonfinax} in place of Lemma~\ref{l:nonfinax}.



\section{Infinite canonical axiomatisation for $\dxd$}\label{dxd}

In this section we prove Theorem~\ref{t:axdxd}
using the proof pattern described in \S\ref{proofmethod} (for the class $\Cc$ of all products of difference frames).
So we will define a recursive
set $\Sigma_{\dxd}$ of Sahlqvist formulas, and prove that the following hold:
\begin{enumerate}
\item
All formulas in $\Sigma_{\dxd}$ are valid in every product of difference frames.
\item
For every countable rooted frame $\F$ that is not the p-morphic image of some product of 
difference frames, there is some $\phi_\F\in\Sigma_{\dxd}$ such that $\phi_\F$ is not valid in $\F$.
\end{enumerate}
To begin with, if $\F$ is a countable rooted frame such that $\F\not\models\comm$, then 
$\F\not\models\commf$, and so we let
$\phi_\F=\commf\in\Sigma_{\dxd}$.
So from now on we assume that $\F\models\comm$, and so $\F$
is a \grid{} by Lemma~\ref{l:grid}.
We call a countable \grid{} $\F$ \emph{bad} if it is not the p-morphic image of a product of difference frames.

In \S\ref{gbgrids} below we discuss two kinds of `finitary reasons' for a countable \grid{} being bad,
and prove that these are the only such reasons.
Then in \S\ref{sformulas} we provide the Sahlqvist formulas in $\Sigma_{\dxd}$ `eliminating' these reasons.


\subsection{Bad \grids}\label{gbgrids}

The first reason for a countable \grid{} $\F$ being bad is when $\F$ contains a finite impossible \cluster.
This reason will be `eliminated' by a Sahlqvist formula in \S\ref{badc}, where it is also shown
that this is indeed a reason for $\F$ being bad (see Corollary~\ref{co:noimpossible}).

So suppose that
$\F=(W,\Rh,\Rv)$ is a countable rooted frame for $\comm$ that is represented as a \grid{} as $(X,Y,g)$,
and 
$\F$ contains no impossible \clusters.
%
%
We may assume that $X$ and $Y$ are disjoint, and consider the elements of $X\cup Y$ as distinct variables. We define a set $\con{\F}$ of `constraints' such that each constraint in $\con{\F}$ is
one of the forms $(z=n)$, $(z\geq k)$, or $(z\geq\lambda z')$, 
for some $z,z'\in X\cup Y$, $n\in\nNp\cup\{\infy\}$, $k\in\nNp$, and $\lambda=1,2$.
%
%
For all $x\in X$ and $y\in Y$,
\begin{equation}\label{conset}
\mbox{let $\con{\F}$ contain }\ \left\{
\begin{array}{ll}
\mbox{$(x=\infy)$ and $(y=\infy)$}, & \mbox{if $\F^{xy}$ is infinite},\\[5pt]
\mbox{the constraints from Table~\ref{tt:finclusters}}, & \mbox{if $\F^{xy}$ is finite}.
\end{array}
\right.
\end{equation}
We assume that $(z\geq 1)\in \con{\F}$ for every $z\in X\cup Y$.
A \emph{solution of\/} $\con{\F}$ 
is a function
\[
\xi:(X\cup Y)\to\bigl(\nNp\cup\{\infy\}\bigr)
\]
validating all constraints in $\con{\F}$. 
In other words, we are trying to solve a special kind of integer programming problem: $\con{\F}$ is a 
(possibly infinite) set of linear equations and inequalities (where all coefficients are positive integers or $\infy$), and we are looking for integer plus possibly (countably) infinite solutions of it.
By Lemmas~\ref{l:fits}~(ii) and \ref{l:clusterpm}, it is easy to see the following:

\begin{claim}\label{c:solpm}
If $\F$ is a  countable \grid{} that contains no impossible \clusters{} and
$\xi$ is a solution of $\con{\F}$, then
there is an onto p-morphism
\mbox{$h_\xi:(U,\drel{U}) \mprod(V,\drel{V})\to\F$} for some sets $U,V$ with
$|U|=\sum_{x\in X}\xi(x)$ and $|V|=\sum_{y\in Y}\xi(y)$.
\end{claim}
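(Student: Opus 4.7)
The plan is to assemble the claim directly from Lemmas~\ref{l:fits}~(ii) and \ref{l:clusterpm}. First, for every $(x,y)\in X\times Y$, I would produce an onto p-morphism from a $\xi(x) \times \xi(y)$-sized product of difference frames onto the \cluster{} $\F^{xy}$; then I would glue these local p-morphisms into a global $h_\xi$ via Lemma~\ref{l:fits}~(ii), which is precisely tailored for this purpose. The output cardinalities $|U|=\sum_{x\in X}\xi(x)$ and $|V|=\sum_{y\in Y}\xi(y)$ are then delivered automatically by that lemma.

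The cluster-by-cluster step splits into two cases, mirroring the two clauses of~\eqref{conset}. If $\F^{xy}$ is infinite then $\con{\F}$ contains $(x=\infy)$ and $(y=\infy)$, so any solution $\xi$ must satisfy $\xi(x)=\xi(y)=\infy$; Lemma~\ref{l:clusterpm}~(i) then provides the desired p-morphism from $(\omega,\drel{\omega})\mprod(\omega,\drel{\omega})$ onto $\F^{xy}$, and we take $U_{xy}, V_{xy}$ to be any countably infinite sets. If $\F^{xy}$ is finite, then by hypothesis it is not impossible, so it falls into one of the eleven admissible rows of Table~\ref{tt:finclusters}; the constraints on $x,y$ for that row lie in $\con{\F}$ by~\eqref{conset}, and since $\xi$ is a solution, the values $\xi(x),\xi(y)$ meet exactly the size hypothesis required by Lemma~\ref{l:clusterpm}~(ii). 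That lemma therefore yields an onto p-morphism from $(U_{xy},\drel{U_{xy}}) \mprod (V_{xy},\drel{V_{xy}})$ onto $\F^{xy}$, for any chosen sets $U_{xy},V_{xy}$ of the prescribed cardinalities.

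The only point that requires care, and the one I would flag as the main verification burden, is matching the rows of Table~\ref{tt:finclusters} against the admissibility conditions of Lemma~\ref{l:clusterpm}~(ii): because $\con{\F}$ is defined to contain \emph{all} the constraints of the relevant row, any solution automatically satisfies both the multiplicative constraints ($x\geq 2y$, $y\geq 2x$, $x\geq y$, $y\geq x$, or equalities) and the lower bounds in terms of $\sizeh(\C)$ and $\sizev(\C)$. Once this matchup is noted, invoking Lemma~\ref{l:fits}~(ii) with the family $(U_{xy},V_{xy})_{(x,y)\in X\times Y}$ supplies $h_\xi$ with the stated cardinality equalities, completing the proof.
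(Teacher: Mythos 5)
Your proposal is correct and follows exactly the paper's route: the paper derives Claim~\ref{c:solpm} directly from Lemmas~\ref{l:fits}~(ii) and \ref{l:clusterpm}, using \eqref{conset} to match the solution values $\xi(x),\xi(y)$ against the size hypotheses of Lemma~\ref{l:clusterpm} cluster by cluster, precisely as you describe. The case split on finite versus infinite clusters and the final gluing via Lemma~\ref{l:fits}~(ii) are all as in the paper, which treats this as immediate.
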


(In \S\ref{sformulas} we will show that the converse of Claim~\ref{c:solpm} also holds in the sense that
whenever a countable \grid{} $\F$ is a p-morphic image of a product of difference frames,
then $\F$ contains no impossible \clusters, and $\con{\F}$ has a solution; see Corollary~\ref{co:hassolution}.)



%



%



In order to characterise those countable $\F$ for which $\con{\F}$ has no solution, 
we first introduce some notions dealing with the one-variable constraints in $\con{\F}$. For every $z\in X\cup Y$, we let
\begin{align}
\label{maxz}
\maxz{z} & =  \left\{
\begin{array}{ll}
\infy, & \mbox{if $(z=n)\notin\con{\F}$ for any $n$,}\\[3pt]
\min\bigl\{ n : (z=n)\in\con{\F}\bigr\}, & \mbox{otherwise},
\end{array}
\right.\\[5pt]
\label{minz}
\minz{z} & =  \sup\bigl\{ k : \mbox{either }(z= k)\in\con{\F}\mbox{ or }(z\geq k)\in\con{\F}\bigr\}.
\end{align}
Next, in order to deal with the two-variable constraints, we define a (finite or countably infinite) 
edge-labelled digraph 
$\gf=(X\cup Y, E_\F)$ by taking, for any $z,z'\in X\cup Y$,
%
%
%
\begin{equation}\label{gedges}
(z\to^\lambda z')\in E_\F\qquad\mbox{iff}\qquad (z\geq\lambda z')\in\con{\F}.
\end{equation}
Observe that 
(i) all edges either go from some $x\in X$ to some $y\in Y$, or from some $y\in Y$ to some $x\in X$,
(ii) edge-labels $\lambda$ can only be $1$ or $2$, and 
(iii) if $(z\to^1 z')\in E_\F$ for some $z,z'$ then $(z'\to^1 z)\in E_\F$ as well.
%
For some $m\in\nN$,
we call a path
$z_0\to^{\lambda_1} z_{1}\to^{\lambda_{2}}\dots z_{m-1}\to^{\lambda_m} z_m$ 
in $\gf$ \emph{bad}, if
$\maxz{z_0}<  \lambda_1\cdot$ $\dots$ $\cdot\lambda_m \cdot \minz{z_m}$. 
(Observe that when $m=0$ then $z_0$ is a bad path of length $0$ whenever $\maxz{z_0}<  \minz{z_0}$.
Note that a bad path is not necessarily simple: it may contain the same edge more than once.)
Figs.~\ref{f:badgrids2} and \ref{f:badgrids1}
show two examples of \grids{} that are bad because their graphs contain some bad paths.

\begin{figure}[ht]
\begin{center}
\setlength{\unitlength}{.03cm}
\begin{picture}(250,165)(-15,-10)
\put(20,-10){$x_0$}
\put(70,-10){$x_1$}
\put(120,-10){$x_2$}
\put(158,25){$y_1$}
\put(158,75){$y_2$}

\multiput(0,0)(0,50){4}{\line(1,0){150}}
\multiput(0,0)(50,0){4}{\line(0,1){150}}

\multiput(25,122)(0,4){3}{\circle*{.5}}
\put(24,108){\xxi}
\put(7,108){$a_1$}
\put(24,138){\xxi}
\put(5,138){$a_{12}$}
\put(158,120){$y_3$}

\put(70,13){\ir}
\put(70,27){\rr}
\put(120,13){\ii}
\put(120,27){\rr}
\put(120,63){\ii}
\put(120,77){\rr}
\put(70,63){\ii}
\put(70,77){\rr}
\put(70,113){\ir}
\put(70,127){\rr}

\put(18,22){\rr}
\put(18,72){\rr}
\put(118,122){\rr}

\put(55,40){$\C_1$}  
\put(105,40){$\C_2$} 
\put(105,90){$\C_3$}  
\put(55,90){$\C_4$}  
\put(55,140){$\C_5$}
\end{picture}

\bigskip
\noindent
$P$:\quad
$
y_3\to^2 x_1\to^1 y_2\to^1 x_2\to^1 y_1\to^2 x_1\to^1 y_2\to^1 x_2\to^1 y_1\to^2 x_1
$

\medskip
with $\minz{x_1}=3$ and $\maxz{y_3}=12<24=2\cdot 1\cdot 1\cdot 1\cdot 2\cdot 1\cdot 1\cdot 1\cdot 2\cdot\minz{x_1}$.
\end{center}
\caption{A bad path that is not simple.}\label{f:badgrids2}
\end{figure}

In \S\ref{badpath} we will show that if a \grid{} $\F$ is such that it does not contain impossible \clusters, 
 but $\gf$ contains a bad path,
then there is a Sahlqvist formula `eliminating' this reason (and $\F$ is indeed bad).
Here we show that we have found all reasons for 
$\con{\F}$ not having a solution:

\begin{lemma}\label{l:bad}
Let $\F=(X,Y,g)$ be a countable \grid{} such that
\begin{enumerate}
\item $\F$ contains no impossible \clusters, and
\item there is no bad path in $\gf$.
\end{enumerate}
Then $\con{\F}$ has a solution.
\end{lemma}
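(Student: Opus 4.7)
The plan is to construct an explicit solution $\xi$ guided by the graph $\gf$: the non-existence of bad paths should give us exactly the slack needed to propagate the one-variable lower bounds along edges without ever clashing with a forced finite value.

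\emph{Step 1 (path-based lower bound).} For each $z \in X \cup Y$, define
\[
L(z) \;=\; \sup\Bigl\{\lambda_1 \cdot \lambda_2 \cdots \lambda_m \cdot \minz{z_m} \,:\, z = z_0 \to^{\lambda_1} z_1 \to^{\lambda_2} \cdots \to^{\lambda_m} z_m \text{ is a path in } \gf\Bigr\},
\]
with the convention that the empty path ($m=0$) contributes $\minz{z}$, so $L(z) \geq \minz{z}$. I claim that the hypothesis ``no bad path in $\gf$'' is equivalent to $L(z) \leq \maxz{z}$ for every $z \in X \cup Y$: indeed, a bad path starting at $z$ is by definition a path whose associated product exceeds $\maxz{z}$, and the sup witnesses all such path-products. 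In particular, the length-$0$ case gives $\maxz{z} \geq \minz{z}$ for all $z$.

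\emph{Step 2 (defining $\xi$).} Set
\[
\xi(z) \;=\; \begin{cases} \maxz{z}, & \text{if } \maxz{z} < \infy, \\ L(z), & \text{if } \maxz{z} = \infy. \end{cases}
\]
When $\maxz{z} < \infy$, the very existence of the constraint $(z = \maxz{z}) \in \con{\F}$ forces $\minz{z} \geq \maxz{z}$, and combined with $\maxz{z} \geq \minz{z}$ from Step~1 we obtain $\minz{z} = \maxz{z}$. Hence in all cases $\xi(z) \in \nNp \cup \{\infy\}$ and $\xi(z) \geq \minz{z}$.

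\emph{Step 3 (verifying the constraints).} The $(z \geq k)$ constraints are immediate from $\xi(z) \geq \minz{z} \geq k$. For a constraint $(z = n)$, we have $\maxz{z} \leq n \leq \minz{z}$, so by Step~2 all three quantities coincide and $\xi(z) = n$. For a two-variable constraint $(z \geq \lambda z') \in \con{\F}$, which corresponds to an edge $z \to^\lambda z'$ in $\gf$, observe that prepending this edge to any path starting at $z'$ yields a path starting at $z$, so
\[
L(z) \;\geq\; \lambda \cdot L(z').
\]
If $\maxz{z'} = \infy$, then $\xi(z') = L(z')$, and the two cases for $\xi(z)$ both give $\xi(z) \geq L(z) \geq \lambda \xi(z')$ (using $\maxz{z} \geq L(z)$ from Step~1 in the finite case). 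If $\maxz{z'} < \infy$, then $\xi(z') = \maxz{z'} = \minz{z'}$; the length-$1$ path $z \to^\lambda z'$ gives $L(z) \geq \lambda \minz{z'} = \lambda \xi(z')$, and the same dichotomy applies.

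The one point that deserves care is the consistency between forced infinite values and forced finite values: if $L(z') = \infy$ were to coexist with $\maxz{z} < \infy$ for some edge $z \to^\lambda z'$, then we would have $\maxz{z} \geq L(z) \geq \lambda L(z') = \infy$, a contradiction. This is exactly what ``no bad path'' rules out, and it is the only place where the hypothesis is used in an essential way beyond the unary consistency $\maxz{z} \geq \minz{z}$.
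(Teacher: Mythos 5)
Your proof is correct. It constructs what is in effect the same ``minimal'' solution as the paper (indeed, when $\maxz{z}<\infy$ the no-bad-path hypothesis forces $L(z)=\maxz{z}=\minz{z}$, so your $\xi$ coincides with $L$ everywhere, and $L(z)$ agrees with the paper's $\numin$ on the strongly connected component of $z$), but the route is genuinely more direct. The paper first passes to the condensation $\gfp$ of $\gf$, proves an auxiliary claim that $\gfp$ inherits the no-bad-path property, defines $\numin$ by recursion on $\rank(\scc)$, and then must verify the equality constraints $(z=n)$ via a somewhat delicate argument (its Claim~\ref{c:lbsolution}~(i)) that manufactures a bad path in $\gfp$ whenever the recursively defined value overshoots. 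You avoid all of this: taking the supremum $L(z)$ directly over paths in $\gf$ makes the edge inequality $L(z)\geq\lambda L(z')$ a one-line observation, and setting $\xi(z)=\maxz{z}$ whenever that is finite makes the equality constraints hold by fiat, with the no-bad-path hypothesis entering only through the clean inequality $L(z)\leq\maxz{z}$ (whose length-$0$ instance also gives $\minz{z}\leq\maxz{z}$, so equality constraints are consistent). What the paper's heavier machinery buys is reuse: the condensation $\gfp$, the function $\numin$, and the fact that the minimal solution is constant on strongly connected components are all needed again in \S\ref{dxdsq} (e.g.\ in the definition of $\xlb$, $\ylb$ and in Claim~\ref{c:zok}), so the paper cannot simply discard them; but as a self-contained proof of Lemma~\ref{l:bad} your argument is leaner and equally rigorous.
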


\begin{proof}
Suppose $\F$ contains no impossible \clusters, and there is no bad path in $\gf$.
We will define a `minimal' solution $\smin$ such that it takes the same value on variables belonging to the same strongly connected component of $\gf$. 
To begin with,
for every strongly connected component $\scc$ in $\gf$, we let
(with a slight abuse of notation),
\begin{align}
\nonumber
\maxS{\scc} & =  \min\bigl\{ \maxz{z} : z\in\scc\bigr\},\\[5pt]
\label{minS}
\minS{\scc} & =  \left\{
\begin{array}{ll}
\infy, & \mbox{there is some $\to^2$ edge within $\scc$,}\\[3pt]
\sup\bigl\{ \minz{z}: z\in\scc\bigr\}, & \mbox{otherwise}.
\end{array}
\right.
\end{align}
Next, we define an
acyclic digraph $\gfp$ as follows
($\gfp$ is what is called the \emph{condensation} of $\gf=(X\cup Y,E_\F)$):
its nodes are the strongly connected components of $\gf$, 
%
%
and we define the edges by taking
\begin{align*}
\scc\Rightarrow \scc'\qquad & \mbox{iff}\qquad  \mbox{there exist $z$ in $\scc$, $z'$ in $\scc'$ with 
$(z\to^2 z')\in E_\F$}\\
& \mbox{iff}\qquad  \mbox{there exist $z$ in $\scc$, $z'$ in $\scc'$ with 
$(z\geq 2z')\in\con{\F}$.}
\end{align*}
%
%
For $n\in\nN$, we call a path
$\scc_0\Rightarrow \scc_{1}\Rightarrow\dots \scc_{n-1}\Rightarrow \scc_n$
in $\gfp$ \emph{bad}, if
$\maxS{\scc_0} < 2^n\cdot \minS{\scc_n}$. 

\begin{lclaim}\label{c:r4}
There is no bad path in $\gfp$.
\end{lclaim}

\begin{proof}
Suppose indirectly that $\scc_0\Rightarrow \scc_{1}\Rightarrow\dots \scc_{n-1}\Rightarrow \scc_n$ is a bad path in $\gfp$, that is, $\maxS{\scc_0} < 2^n\cdot \minS{\scc_n}$. 
 Then there exist $m\in\nN$, $z_0\in \scc_0$, $z_n\in \scc_n$ and a path $P$ of the form
\mbox{$z_0\to^{\lambda_1} \dots \to^{\lambda_m} z_n$} in $\gf$
such that 
$\maxS{\scc_0}=\maxz{z_0}$
and $2^n\leq\lambda_1\cdot$ $\dots$ $\cdot \lambda_m$ whenever $m>0$.
Now there are several cases:
\begin{itemize}
\item[(a)]
There is $z\in \scc_n$ such that $\minS{\scc_n}=\minz{z}$.
Then take $P$ and  continue it with any path from $z_n$ to $z$. The resulting path in $\gf$ is bad,
a contradiction.

\item[(b)]
$\minS{\scc_n}=\infy$ and there is a $\to^2$ edge within $\scc_n$.
Then take any path $Q$ from $z_n$ to $z_n$ containing this $\to^2$ edge.
Suppose $Q$ is of the form $z_n\to^{\mu_1}\dots\to^{\mu_i} z_n$. 
Then 
\mbox{$\mu_1\cdot$ $\dots$ $\cdot \mu_i\geq 2$,} and so there is $r\in\nN$ such that 
$\maxz{z_0}<2^n\cdot(\mu_1\cdot$ $\dots$ $\cdot \mu_i)^r\cdot \minz{z_n}$.
Then the path in $\gf$ obtained by starting with $P$ and then repeating $Q$ $r$ times is bad,
a contradiction.

\item[(c)]
$\minS{\scc_n}=\infy$, there is no $\to^2$ edge within $\scc_n$,
 but for every $i\in\nN$
 there is some $w_i\in\scc_n$ with $\minz{w_i}\geq i$.
 Then choose $i$ such that $i\cdot 2^n>\maxz{z_0}$.
 Then the path in $\gf$ obtained by starting with $P$ and then continuing with any path from $z_n$ to $w_i$  is bad, a contradiction again,
\end{itemize}
proving Claim~\ref{c:r4}.
\end{proof}

\begin{figure}[ht]
\begin{center}
\setlength{\unitlength}{.035cm}
\begin{picture}(160,100)
\put(20,-10){$x_0$}
\put(70,-10){$x_1$}
\put(120,-10){$x_2$}
\put(158,25){$y_1$}
\put(158,75){$y_2$}

\multiput(0,0)(0,50){3}{\line(1,0){150}}
\multiput(0,0)(50,0){4}{\line(0,1){100}}

\multiput(20,4)(0,7.5){6}{\xxi}
\put(70,13){\ir}
\put(70,27){\rr}
\put(120,13){\ii}
\put(120,27){\rr}
\put(120,63){\ii}
\put(120,77){\rr}
\put(70,63){\ii}
\put(70,77){\rr}

\put(18,72){\rr}

\put(55,40){$\C_4$}
\put(105,40){$\C_1$}
\put(105,90){$\C_2$}
\put(55,90){$\C_3$}
\end{picture}

\vspace*{.8cm}
$P$:\quad
$y_1\to^2 x_1\to^1 y_2\to^1 x_2\to^1 y_1$\qquad

\medskip
with $\minz{y_1}=6$ and $\maxz{y_1}=6<12=2\cdot 1\cdot 1\cdot 1\cdot \minz{y_1}$

\end{center}
\caption{A bad path within a strongly connected component.}\label{f:badgrids1}
\end{figure}

%
%

%
%

Next, 
%
for every node $\scc$ in $\gfp$, let
%
\[
\rank(\scc)=\text{sup}\{n : \mbox{there is a path in $\gfp$ of length $n$ starting at }\scc\}.
 \]
We define a function $\numin$ from the nodes of $\gfp$ to $\nNp\cup\{\infy\}$
by induction on their $\rank$ by taking, for every strongly connected component $\scc$,
%
\begin{equation}\label{nulabel}
\numin(\scc)=\left\{
\begin{array}{ll}
\text{sup}\bigl(\{2\numin(\scc') : \scc\Rightarrow \scc'\}\cup\{\minS{\scc}\}\bigr),\! & \mbox{if $\rank(\scc)\in\nN$},\\
\infy, & \mbox{if $\rank(\scc)=\infy$}
\end{array}
\right.
\end{equation}
(see Examples~\ref{e:numin} and \ref{e:sqbad} below).

%
\begin{lclaim}\label{c:lbsolution}
For all strongly connected components $\scc,\scc'$ in $\gf$, 
all $n\in\nNp\cup\{\infy\}$, $k\in\nNp$, and $\lambda\in\{1,2\}$, we have the following:
\begin{itemize}
\item[{\rm (i)}]
If $(z=n)\in\con{\F}$ for some $z\in\scc$, then $\numin(\scc)=n$.
\item[{\rm (ii)}]
If $(z\geq k)\in\con{\F}$ for some $z\in\scc$, then $\numin(\scc)\geq k$.
\item[{\rm (iii)}]
If $(z\geq\lambda z')\in\con{\F}$ for some $z\in\scc$, $z'\in\scc'$, then 
$\numin(\scc)\geq\lambda\cdot\numin(\scc')$.
\end{itemize}
\end{lclaim}
\begin{proof}
%
%

(i): 
If $n=\infy$ then $\minz{z}=\infy$, and so $\numin(\scc)=\infy$.
So suppose that $n\in\nNp$. Then
\begin{equation}\label{alln}
\maxS{\scc}\leq\maxz{z}\leq n\leq\minz{z}\leq\minS{\scc}.
\end{equation}
If any of the inequalities $\leq$ in \eqref{alln} were $<$, then $\scc$ would be a bad path of length $0$ in
$\gfp$, contradicting Claim~\ref{c:r4}. So we have $\minS{\scc}=\maxS{\scc}=n$.
We also have that $\rank(\scc)\in\nN$. (Otherwise,
 there would exist a bad path of length $>n$ in $\gfp$ starting
at $\scc$.)
If $\rank(\scc)=0$ then $\numin(\scc)=\minS{\scc}$ by definition, and so we have $\numin(\scc)=n$.
Now suppose that $\rank(\scc)>0$. By definition, $\numin(\scc)\geq\minS{\scc}$ always holds.
So suppose indirectly that $\numin(\scc)>\minS{\scc}$. 
We will construct a bad path in $\gfp$, contradicting Claim~\ref{c:r4}.
To begin with,
there is $\scc_1$ such that $\scc\Rightarrow \scc_1$
and $2\cdot\numin(\scc_1)>\minS{\scc}=\maxS{\scc}$.
(Either because $\scc_1$ is such that $\numin(\scc)=2\numin(\scc_1)$ or
because $\numin(\scc)=\infy$.)
As $\numin(\scc_1)\geq\minS{\scc_1}$, there are two cases: 
(a) $\numin(\scc_1)=\minS{\scc_1}$. 
Then $2\cdot\minS{\scc_1}>\minS{\scc}=\maxS{\scc}$, and so $\scc\Rightarrow \scc_1$ is a bad path in $\gfp$.
(For example, this is the case when $\scc_1$ is a final node in $\gfp$.)
(b) $\numin(\scc_1)>\minS{\scc_1}$.
Then there is $\scc_2$ such that $\scc_1\Rightarrow \scc_2$ and
$2^2\cdot\numin(\scc_2)> \maxS{\scc}$.
(Either because $\scc_2$ is such that $\numin(\scc_1)=2\numin(\scc_2)$ or
because $\numin(\scc_1)=\infy$.)
Again, there are two cases: 
(b.1) $\numin(\scc_2)=\minS{\scc_2}$.
Then 
$\scc\Rightarrow \scc_1\Rightarrow \scc_2$ is a bad path in $\gfp$.
(b.2) $\numin(\scc_2)>\minS{\scc_2}$.
Then again, there are two cases. And so on, sooner or later we reach
a final node in $\gfp$, where we only have case (a). 

(ii):
If $\numin(\scc)=\infy$ then the statement holds. If  $\numin(\scc)\in\nN$ then
$\rank(\scc)\in\nN$, and so
$\numin(\scc)\geq\minS{\scc}\geq\minz{z}\geq k$.

%

(iii):
If $\lambda=1$ then $\scc=\scc'$, and so $\numin(\scc)=\numin(\scc')$, as required.
If $\lambda=2$ and $\scc=\scc'$, then $\minS{\scc}=\minS{\scc'}=\infy$, and so 
$\numin(\scc)=\numin(\scc')=\infy$. Thus $\numin(\scc)\geq 2\cdot\numin(\scc')$ holds.
If $\lambda=2$ and $\scc\ne \scc'$, then $\scc\Rightarrow \scc'$. If $\numin(\scc)=\infy$, then $\numin(\scc)\geq 2\cdot\numin(\scc')$ holds.
If $\numin(\scc)\in\nN$ then $\rank(\scc)\in\nN$, and so again $\numin(\scc)\geq 2\cdot\numin(\scc')$ holds, as required.
\end{proof}

Now for every $\scc$ in $\gfp$ and every $z$ in $\scc$, we define
\begin{equation}\label{lbdef}
\smin(z)=\numin(\scc).
\end{equation}
By Claim~\ref{c:lbsolution},
$\smin$ is a solution of $\con{\F}$, proving Lemma~\ref{l:bad}.
\end{proof}

Now by Claim~\ref{c:solpm} and Lemma~\ref{l:bad} we obtain:
\begin{corollary}\label{co:bad}
If a countable \grid{} $\F=(X,Y,g)$ is bad (that is, $\F$ is not the p-morphic image of a product of difference frames), then at least one of the following two reasons holds:
\begin{enumerate}
\item either $\F$ contains a finite impossible \cluster,
\item or there is a bad path in $\gf$.
\end{enumerate}
\end{corollary}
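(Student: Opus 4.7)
My plan is to prove the corollary by contraposition, leveraging the two results already established in this subsection. Assume that $\F$ is a countable \grid{} for which neither reason~1 nor reason~2 of the corollary holds; that is, $\F$ contains no finite impossible \cluster, and there is no bad path in $\gf$. The goal is then to exhibit $\F$ as the p-morphic image of a product of difference frames, thereby showing $\F$ is not bad.

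These two assumptions are exactly the hypotheses of Lemma~\ref{l:bad}, so I would first invoke that lemma to obtain a solution
\[
\xi:(X\cup Y)\to\bigl(\nNp\cup\{\infy\}\bigr)
\]
of the constraint system $\con{\F}$. Armed with such a $\xi$ and the fact that $\F$ contains no impossible \clusters, I would then apply Claim~\ref{c:solpm} to obtain an onto p-morphism
\[
h_\xi:(U,\drel{U})\mprod(V,\drel{V})\to\F
\]
for suitable sets $U$, $V$ with $|U|=\sum_{x\in X}\xi(x)$ and $|V|=\sum_{y\in Y}\xi(y)$. This witnesses that $\F$ is the p-morphic image of a product of two difference frames, contradicting the assumption that $\F$ is bad.

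Since the two non-trivial ingredients are already done — Lemma~\ref{l:bad} carries the combinatorial burden through the condensation $\gfp$ and the inductive definition of $\numin$, while Claim~\ref{c:solpm} assembles the global p-morphism from local cluster-p-morphisms via Lemma~\ref{l:fits}~(ii) and Lemma~\ref{l:clusterpm} — the corollary itself requires no further work beyond the formal combination of these two statements, and I anticipate no additional obstacle.
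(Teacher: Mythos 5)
Your proof is correct and is exactly the argument the paper intends: the corollary is stated immediately after the sentence ``Now by Claim~\ref{c:solpm} and Lemma~\ref{l:bad} we obtain,'' so the paper's own (implicit) proof is precisely the contrapositive combination of Lemma~\ref{l:bad} and Claim~\ref{c:solpm} that you describe. No further comment is needed.
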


\begin{example}\label{e:numin}
{\rm
Take the \grid{} $\F$ in Fig.~\ref{f:bothways}.
We compute $\numin$. To begin with, we have the following strongly connected components in $\gf$:
$\scc_1=\{y_1\}$,
$\scc_2=\{x_1\}$,
$\scc_3=\{y_3,x_2,y_2\}$,
$\scc_4=\{y_4\}$,
$\scc_5=\{x_4,y_5,x_5,x_6\}$,
$\scc_6=\{x_3\}$.
Then the edges in $\gfp$ are $\scc_2\Rightarrow\scc_3$ and $\scc_4\Rightarrow\scc_5$.
Therefore, we have:
\begin{align*}
& \numin(\scc_1)=\minS{\scc_1}=\minz{y_1}=14,\\
& \numin(\scc_6)=\minS{\scc_6}=\minz{x_3}=8,\\
& \numin(\scc_3)=\minS{\scc_3}=\max\bigl\{\minz{y_3},\minz{x_2},\minz{y_2} \bigr\}=
\max\{ 7,3,3\}=7,\\
& \numin(\scc_2)=\max\bigl\{2\cdot\numin(\scc_3),\minS{\scc_2}\bigr\}=
\max\{2\cdot 7,\minz{x_1}\}=\max\{2\cdot 7, 14\}=14,\\
& \numin(\scc_5)=\minS{\scc_5}=\max\bigl\{\minz{x_4},\minz{y_5},\minz{x_5},\minz{x_6} \bigr\}=\max\{ 3,3,3,3\}=3,\\
& \numin(\scc_4)=\max\bigl\{2\cdot\numin(\scc_5),\minS{\scc_4}\bigr\}=
\max\{2\cdot 3,\minz{y_4}\}=\max\{2\cdot 3, 8\}=8.
\end{align*}
}
\end{example}


\subsection{Sahlqvist formulas}\label{sformulas}

In \S\ref{badc} and \S\ref{badpath} below,
we will eliminate each of the two kinds of reasons in 
Corollary~\ref{co:bad} for a countable \grid{} $\F$ being bad,
using a Sahlqvist formula $\phi_{\F}$ in each case.


\subsubsection{Eliminating impossible \clusters}\label{badc}

Recall from Table~\ref{tt:finclusters} that a finite \cluster{} is \emph{impossible\/}, if it is one of the types (\bcno{1})--(\bcno{4}). 
We define formulas for the cases of (\bcno{1}), (\bcno{3}) and (\bcno{4});
the case of (\bcno{2}) is similar and left to reader. 
So let $\C=(C,\Rh,\Rv)$ be a \cluster\ consisting of $n=k+\ell$ points for some $k,\ell\in\nNp$, out of which $a_1,\dots,a_k$ are $\Rv$-irreflexive, $a_1$ is $\Rh$-reflexive, $b_1,\dots,b_\ell$ are $\Rh$-irreflexive; see Fig.~\ref{f:no}.
(It does not matter whether any of $a_2,\dots,a_k$ are $\Rh$-reflexive or -irreflexive, or
 whether any of $b_1,\dots,b_\ell$ are $\Rv$-reflexive or -irreflexive.)

\begin{figure}[ht]
\begin{center}
\setlength{\unitlength}{.03cm}
\begin{picture}(50,125)(0,5)
\multiput(0,0)(0,125){2}{\line(1,0){60}}
\multiput(0,0)(60,0){2}{\line(0,1){125}}
\put(30,10){\ri}
\put(15,10){$a_1$}
\put(30,25){\xxi}
\put(15,25){$a_2$}
\multiput(37,42)(0,4){3}{\circle*{.5}}
\put(30,60){\xxi}
\put(15,60){$a_k$}
\put(30,75){\ixx}
\put(15,75){$b_1$}
\multiput(37,92)(0,4){3}{\circle*{.5}}
\put(30,110){\ixx}
\put(15,110){$b_\ell$}
\end{picture}
\end{center}
\caption{An impossible \cluster{} of type (\bcno{1}), (\bcno{3}) or (\bcno{4}).}\label{f:no}
\end{figure}

We introduce fresh propositional variables $\avar{i}$ for $i=1,\dots k$, and  $\bvar{j}$ for $j=1,\dots,\ell$,  
and define
\begin{align}
\label{atype}
& \atp{i} :\quad \neg\avar{i}\land\Bv\avar{i}\land\bigwedge_{j=1}^{\ell}\bvar{j},\quad\mbox{for all $i=1,\dots,k$},\\
\label{btype}
& \btp{j} :\quad \neg\bvar{j}\land\Bh\bvar{j}\land\bigwedge_{i=1}^{k}\avar{i},\quad\mbox{for all $j=1,\dots,\ell$},\\
\nonumber
& \alphaC:\quad\atp{1}
\land \bigwedge_{i=1}^k\Dh\bigl(\atp{i}\land\bigwedge_{\substack{s=1\\ s\ne i}}^k\Dv\atp{s}\bigr)
\land\bigwedge_{j=1}^\ell\Dh\bigl(\btp{j}\land\bigwedge_{s=1}^k\Dv\atp{s}\bigr)\,\land\\
\nonumber
& \hspace*{2.7cm}
\bigwedge_{i=2}^k\Dv\bigl(\atp{i}\land\bigwedge_{t=1}^\ell\Dh\btp{t}\bigr)
\land \bigwedge_{j=1}^\ell\Dv\bigl(\btp{j}\land\bigwedge_{\substack{t=1\\ t\ne j}}^\ell\Dh\btp{t}\bigr),\\
\nonumber
& \nof:\quad\alphaC\to\Dh^+\Dv^+\bigl(\bigwedge_{i=1}^k\avar{i}\land\bigwedge_{j=1}^{\ell}\bvar{j}\bigr).
\end{align}
It is straightforward to check that $\nof$ is a Sahlqvist formula.

\begin{lemma}\label{l:impossibledec}
It is decidable whether a bimodal formula is of the form $\nof$ for some impossible \cluster{} $\C$.
\end{lemma}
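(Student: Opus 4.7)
The plan is to give a direct algorithm. The key observation is that the template $\nof$ is completely determined by a small finite set of parameters: two positive integers $k,\ell$ (the numbers of $\Rv$-irreflexive and $\Rh$-irreflexive points in $\C$), a choice of which variable plays the role of the distinguished $\avar{1}$ (or symmetrically $\bvar{1}$ in the case (\bcno{2}) left to the reader), a labelling of the remaining variables as $\avar{2},\dots,\avar{k},\bvar{1},\dots,\bvar{\ell}$, and a choice among the two symmetric templates (one covering types (\bcno{1}), (\bcno{3}), (\bcno{4}) via the definition written out in the text, and the mirror version for (\bcno{2})).

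First I would observe that every propositional variable occurring anywhere in $\nof$ also occurs inside the consequent $\Dh^+\Dv^+\bigl(\bigwedge_{i=1}^k\avar{i}\land\bigwedge_{j=1}^{\ell}\bvar{j}\bigr)$. So given an input formula $\phi$, the procedure begins by checking that $\phi$ has the overall shape $\psi\to\Dh^+\Dv^+\chi$ where $\chi$ is a conjunction of pairwise distinct propositional variables; if not, reject immediately. Otherwise, let $V$ be the set of variables appearing in $\chi$; then $|V|=k+\ell$, and both $k$ and $\ell$ are at most $|V|$.

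Next, since $V$ is finite, there are only finitely many candidate parameter choices: a decomposition $V=A\cup B$ with $|A|=k$, $|B|=\ell$, an enumeration $a_1,\dots,a_k$ of $A$ and $b_1,\dots,b_\ell$ of $B$, and a choice between the two symmetric templates. For each such candidate, the algorithm constructs the corresponding formula $\nof$ from the definitions of $\atp{i}$, $\btp{j}$ and $\alphaC$ using the chosen labelling, and compares it syntactically with $\phi$ (up to a fixed canonical ordering of the conjuncts in each big conjunction, so that the pattern-matching is unambiguous). If some candidate matches, accept; otherwise reject. Because the number of candidates is finite and each syntactic comparison is effective, the whole procedure terminates.

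The only delicate point is fixing precisely what it means for $\phi$ to be ``of the form'' $\nof$: once one agrees on a canonical representation of conjunctions (say, ordered by variable index), the definition of $\nof$ in the text becomes completely rigid, and the finite enumeration above exhausts all possibilities. Hence the property is decidable.
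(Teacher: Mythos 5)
Your proposal is correct and rests on exactly the same observation as the paper's one-line proof, namely that $\nof$ is determined up to relabelling of variables by the finitely many parameters $k$, $\ell$ and the choice of template, all of which can be read off (or bounded) from the input formula. The extra detail you give about canonical representations and finite enumeration of candidates is a faithful elaboration of that observation rather than a different argument.
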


\begin{proof}
Observe that $\nof$ only depends on the numbers $k,\ell$ and the type of $\C$.
\end{proof}


\begin{lemma}\label{l:badclincl}
$\nof$ is not valid in any \grid{} that contains the \cluster{} $\C$.
\end{lemma}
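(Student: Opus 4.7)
The plan is to exhibit a Kripke model on an arbitrary grid $\F=(X,Y,g)$ containing $\C$ as the \cluster{} $\F^{x_0y_0}$ for some $(x_0,y_0)\in X\mprod Y$, such that $\nof$ is refuted at the point $a_1$ (which is the natural starting point since $\atp{1}$ is a conjunct of $\alphaC$ and $a_1$ is the $\Rh$-reflexive point of $\C$). The cases (\bcno{1}), (\bcno{3}), (\bcno{4}) are all handled by the same valuation; (\bcno{2}) is symmetric.

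The key idea is to make $\avar{i}$ ``live'' in the column of $\C$ and $\bvar{j}$ ``live'' in its row, with the designated missing points. Concretely, define
\[
\M(\avar{i}) \;=\; (\C\setminus\{a_i\}) \cup \bigcup_{y\in Y\setminus\{y_0\}} \F^{x_0 y},
\qquad
\M(\bvar{j}) \;=\; (\C\setminus\{b_j\}) \cup \bigcup_{x\in X\setminus\{x_0\}} \F^{x y_0}.
\]
First I would check that $\atp{i}$ holds at every $a_i$ and $\btp{j}$ at every $b_j$: $\neg\avar{i}$ at $a_i$ (resp.\ $\neg\bvar{j}$ at $b_j$) holds by construction; for $\Bv\avar{i}$ at $a_i$, note that since $a_i$ is $\Rv$-irreflexive, every $\Rv$-successor of $a_i$ lies either in $\C\setminus\{a_i\}$ or, by (gc2), in some $\F^{x_0 y}$ with $y\ne y_0$, so it belongs to $\M(\avar{i})$; the verification of $\Bh\bvar{j}$ at $b_j$ is symmetric via (gc1). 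The remaining conjuncts $\bigwedge_j\bvar{j}$ of $\atp{i}$ and $\bigwedge_i\avar{i}$ of $\btp{j}$ are immediate, since $a_i\ne b_j$ implies $a_i\in\C\setminus\{b_j\}\subseteq\M(\bvar{j})$ and $b_j\in\C\setminus\{a_i\}\subseteq\M(\avar{i})$.

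Next I would verify $\M,a_1\models\alphaC$ by walking through each conjunct, using only witnesses inside $\C$. Since all distinct pairs of points in $\C$ are both $\Rh$- and $\Rv$-related, and $a_1$ is $\Rh$-reflexive, each $\Dh$-conjunct is witnessed by the appropriate $a_i$ or $b_j$ (taking $a_1$ itself in the $i=1$ case), and the nested $\Dv\atp{s}$ or $\Dh\btp{t}$ is then witnessed within $\C$ as well. This is a mechanical case-check, and the only place where a bit of care is needed is the $i=1$ case of the first big conjunction, where one exploits $\Rh$-reflexivity of $a_1$.

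Finally I would show that the consequent fails at $a_1$ by observing that $\bigwedge_{i=1}^k\avar{i}\land\bigwedge_{j=1}^\ell\bvar{j}$ is false \emph{everywhere in $\F$}. Indeed, by the definition of $\M$, the conjunction $\bigwedge_i\avar{i}$ holds at $w$ iff $w$ lies in column $x_0$ and $w\notin\{a_1,\dots,a_k\}$, while $\bigwedge_j\bvar{j}$ holds at $w$ iff $w$ lies in row $y_0$ and $w\notin\{b_1,\dots,b_\ell\}$; both hold only if $w\in\C\setminus(\{a_1,\dots,a_k\}\cup\{b_1,\dots,b_\ell\})=\emptyset$. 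Therefore $\Dh^+\Dv^+(\bigwedge_i\avar{i}\land\bigwedge_j\bvar{j})$ is false at $a_1$, so $\nof$ fails at $a_1$. The main ``obstacle'' is really bookkeeping the many conjuncts of $\alphaC$ correctly, but no hard combinatorics is involved: the construction is tailored so that the cluster $\C$ provides all modal witnesses for the antecedent, while the column/row bookkeeping in $\M$ ensures the consequent has no witness anywhere in $\F$.
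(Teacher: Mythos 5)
Your proposal is correct and follows essentially the same route as the paper: the paper defines $\M(\avar{i})=\{w: a_i\Rv w\}$ and $\M(\bvar{j})=\{w: b_j\Rh w\}$, which (by the grid structure) coincide with your extensional column/row descriptions, and its refutation of the consequent is exactly your observation that any point satisfying all the $\avar{i}$ and $\bvar{j}$ would have to lie in $\C$ yet differ from every $a_i$ and $b_j$. Your verification of $\alphaC$ at $a_1$ just spells out what the paper leaves as ``straightforward to check.''
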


\begin{proof}
Suppose $\F=(W,\Rh,\Rv)$ is a \grid{} containing $\C$. We define a model $\M$ on $\F$ by taking
\begin{align*}
\M(\avar{i}) & = \{w\in W: a_i\Rv w\},\quad\mbox{for $i=1,\dots,k$},\\
\M(\bvar{j}) & = \{w\in W: b_j\Rh w\},\quad\mbox{for $j=1,\dots,\ell$}.
\end{align*}
It is straightforward to check that $\M,a_1\models\alphaC$. On the other hand, 
if $w\in W$ is such
that $\M,w\models\bigwedge_{i=1}^k\avar{i}\land\bigwedge_{j=1}^{\ell}\bvar{j}$, then 
$w$ must be in $\C$ by the definition of $\M$ and \grids. As all the $a_i$ are $\Rv$-irreflexive
and all the $b_j$ are $\Rh$-irreflexive, $w$ should be different from all of them, 
a contradiction.
\end{proof}

\begin{lemma}\label{l:badclinprod}
$\nof$ is valid in every product of difference frames.
\end{lemma}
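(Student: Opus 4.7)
The plan is to assume $\M, (u_0, v_0) \models \alphaC$ on a product of difference frames $\F = (U, \drel{U}) \mprod (V, \drel{V})$ and exhibit a point $(u^*, v^*) \in U \times V$ satisfying $\bigwedge_{i=1}^k \avar{i} \land \bigwedge_{j=1}^\ell \bvar{j}$. Since $\Rhp \circ \Rvp$ from $(u_0, v_0)$ reaches every point of $U \times V$ in a product of difference frames, this yields the consequent of $\nof$ at $(u_0, v_0)$.

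From the outer $\Dh$- and $\Dv$-conjuncts of $\alphaC$ I obtain the \emph{active columns} $u_0, u_1, \ldots, u_k, u'_1, \ldots, u'_\ell$ (with $\atp{i}$- or $\btp{j}$-witnesses in row $v_0$) and \emph{active rows} $v_0, v_2^*, \ldots, v_k^*, v'^*_1, \ldots, v'^*_\ell$ (with witnesses in column $u_0$). The inner $\Dv\atp{s}$ and $\Dh\btp{t}$ conjuncts then guarantee that every active column hosts a unique $\atp{s}$-witness at a row $v_s(u^*)$, for each $s \in \{1, \ldots, k\}$ (uniqueness from $\Bv\avar{s}$ inside $\atp{s}$), and every active row hosts a unique $\btp{t}$-witness at a column $u_t(v^*)$, for each $t \in \{1, \ldots, \ell\}$ (uniqueness from $\Bh\bvar{t}$ inside $\btp{t}$). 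Consequently, at any candidate $(u^*, v^*)$ with $u^*$ an active column and $v^*$ an active row, $\avar{s}(u^*, v^*)$ is true iff $v^* \neq v_s(u^*)$ and $\bvar{t}(u^*, v^*)$ is true iff $u^* \neq u_t(v^*)$.

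The key observation is a clean dichotomy at each candidate: a column coincidence $v^* = v_s(u^*)$ places $\atp{s}$ at $(u^*, v^*)$ and hence forces $\bigwedge_t \bvar{t}$ true there, while a row coincidence $u^* = u_t(v^*)$ places $\btp{t}$ at $(u^*, v^*)$ and forces $\bigwedge_s \avar{s}$ true. The two kinds of coincidence are mutually exclusive, because $\atp{s}$ and $\btp{t}$ at a single point would require $\bvar{t}$ both true and false. So every bad candidate fails on exactly one side, the other side being satisfied automatically.

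The main obstacle is to produce a good candidate. The base case $k = \ell = 1$ is immediate: $(u_1, v'^*_1)$ works since $\atp{1}$ at $(u_1, v_0)$ gives $\avar{1}(u_1, v'^*_1)$ via $\Bv\avar{1}$, and $\btp{1}$ at $(u_0, v'^*_1)$ gives $\bvar{1}(u_1, v'^*_1)$ via $\Bh\bvar{1}$. For general $k, \ell$ I plan a combinatorial case analysis over the $k\ell$ candidates $(u_i, v'^*_j)$. The pairwise distinctness of $u_1, \ldots, u_k$ follows from $\atp{i}$ at $(u_i, v_0)$ giving $\neg\avar{i}$, while the nested $\Dv\atp{i}$-witness at $(u_{i'}, v^{i', i})$ (for $i' \neq i$) forces $\avar{i}(u_{i'}, v_0) = \top$ via $\Bv\avar{i}$; distinctness of $v'^*_1, \ldots, v'^*_\ell$ follows by the symmetric argument in column $u_0$ using $\btp{j}$/$\btp{j'}$ incompatibility. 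Combining this distinctness with the clean dichotomy constrains the one-sided failures; if no $(u_i, v'^*_j)$ were good, the induced $\atp{s}$- and $\btp{t}$-markings on the sub-product $\{u_1, \ldots, u_k\} \times \{v'^*_1, \ldots, v'^*_\ell\}$, together with the $\Bv\avar{s}$/$\Bh\bvar{t}$ propagations they trigger, would yield a homomorphism from a suitable sub-product of $\F$ onto $\C$, contradicting the impossibility of $\C$ (from Table~\ref{tt:finclusters} and Claim~\ref{c:game}). Hence some $(u_i, v'^*_j)$ is good, providing the required $(u^*, v^*)$.
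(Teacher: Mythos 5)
Your setup is sound and coincides with the paper's: classifying each candidate pair as ``$a$-type'' (some $\atp{s}$ holds there, hence $\neg\avar{s}$ but all $\bvar{t}$) or ``$b$-type'' (dually), noting the two are mutually exclusive, and noting that a column can host at most one $\atp{s}$-witness for each $s$ (via $\neg\avar{s}\land\Bv\avar{s}$) and a row at most one $\btp{t}$-witness for each $t$. The gap is in the one step that carries all the weight: producing an untyped candidate for general $k,\ell$. Your candidate set $\{u_1,\dots,u_k\}\mprod\{v'^*_1,\dots,v'^*_\ell\}$ is too small for any counting argument to close. The per-column cap of $k$ $a$-type points and per-row cap of $\ell$ $b$-type points allow up to $k^2+\ell^2\geq k\ell$ typed points on a $k\mprod\ell$ grid, so nothing forces an untyped entry (already for $k=\ell=2$ all four candidates can consistently be typed as far as these constraints go). The paper works instead on the full grid $\{u_0,\dots,u_n\}\mprod\{v_1,\dots,v_n\}$ with $n=k+\ell$: there each column has $n$ entries of which at most $k$ are $a$-type, so if every entry were typed each column would contribute at least $\ell$ $b$-type points, giving $(n+1)\cdot\ell$ of them in total, which by pigeonhole overloads some row beyond its capacity of $\ell$. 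That extra room is what makes the argument work, and your sketch does not supply a substitute for it.

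Your fallback --- that an all-typed grid would ``yield a homomorphism from a suitable sub-product onto $\C$, contradicting the impossibility of $\C$'' --- cannot be used here for two reasons. First, it is circular: in the paper the fact that impossible clusters are not p-morphic images of products of difference frames (the ``no such $x,y$'' entries of Table~\ref{tt:finclusters}, made precise as Corollary~\ref{co:noimpossible}) is \emph{derived from} Lemmas~\ref{l:badclincl} and~\ref{l:badclinprod}; Claim~\ref{c:game} and Lemma~\ref{l:clusterpm} only establish the positive direction. Second, even granting it, a mere onto \emph{homomorphism} onto an impossible cluster does exist (e.g.\ a $2\mprod 2$ Latin-square-style assignment onto a two-element cluster with one $\ri$ and one $\ir$ point respects the homomorphism conditions, since it repeats no $\Rh$-irreflexive point in a row and no $\Rv$-irreflexive point in a column); what fails for impossible clusters is the backward p-morphism condition, which your type-markings give you no means to contradict. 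So the concluding contradiction does not go through as stated, and the lemma remains unproved beyond the case $k=\ell=1$.
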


\begin{proof}
Let $\M$ be a model over a product $(U,\drel{U})\mprod(V,\drel{V})$ of difference frames, and suppose that
$\M,(u_0,v_1)\models\alphaC$. By \eqref{atype}--\eqref{btype},
there are distinct points $u_0,\dots,u_n$ in $U$ and distinct points $v_1,\dots,v_n$ in $V$ such that
\begin{align}
\label{sees1}
& \M,(u_i,v_1)\models\atp{i}\land\bigwedge_{\substack{s=1\\ s\ne i}}^k\Dv\atp{s},\quad\mbox{for $i=1,\dots,k$},\\
\label{sees2}
& \M,(u_{k+j},v_1)\models\btp{j}\land\bigwedge_{s=1}^k\Dv\atp{s},\quad\mbox{for $j=1,\dots,\ell$},\\
\label{sees3}
& \M,(u_0,v_i)\models\atp{i}\land\bigwedge_{t=1}^\ell\Dh\btp{j},\quad\mbox{for $i=1,\dots,k$},\\
\label{sees4}
& \M,(u_0,v_{k+j})\models\btp{j}\land\bigwedge_{\substack{t=1\\ t\ne j}}^\ell\Dh\btp{t},\quad\mbox{for $j=1,\dots,\ell$}
\end{align}
(see Fig.~\ref{f:forall1}).
\begin{figure}[ht]
\begin{center}
\setlength{\unitlength}{.07cm}
\begin{picture}(135,128)(-12,-9)
\thicklines
\put(20,4){$\atp{1}$}
\put(20,9){$u_0$}

\put(35,3){$\atp{1}$}
\put(36,-5){$\uparrow$}
\put(32,-9){$\Dv\atp{i}$}
\put(35,9){$u_1$}

\put(50,3){$\atp{2}$}
\put(51,-5){$\uparrow$}
\put(47,-9){$\Dv\atp{i}$}
\put(50,9){$u_2$}

\put(78,3){$\atp{k}$}
\put(79,-5){$\uparrow$}
\put(75,-9){$\Dv\atp{i}$}
\put(78,9){$u_k$}

\put(93,2){$\btp{1}$}
\put(94,-5){$\uparrow$}
\put(90,-9){$\Dv\atp{i}$}
\put(93,9){$u_{k+1}$}

\put(123,2){$\btp{\ell}$}
\put(124,-5){$\uparrow$}
\put(120,-9){$\Dv\atp{i}$}
\put(123,9){$u_{k+\ell}$}

\put(22,13){\line(0,1){22}}
\multiput(22,42)(0,3){3}{\circle*{.5}}
\put(22,55){\line(0,1){25}}
\multiput(22,87)(0,3){3}{\circle*{.5}}
\multiput(37,87)(0,3){3}{\circle*{.5}}
\multiput(52,87)(0,3){3}{\circle*{.5}}
\multiput(80,87)(0,3){3}{\circle*{.5}}
\multiput(95,87)(0,3){3}{\circle*{.5}}
\multiput(125,87)(0,3){3}{\circle*{.5}}
\put(22,100){\vector(0,1){10}}
\put(20,113){$V$}

\put(20,15){\line(1,0){35}}
\multiput(62,15)(3,0){3}{\circle*{.5}}
\multiput(62,30)(3,0){3}{\circle*{.5}}
\multiput(62,60)(3,0){3}{\circle*{.5}}
\multiput(62,75)(3,0){3}{\circle*{.5}}
\multiput(62,105)(3,0){3}{\circle*{.5}}
\put(75,15){\line(1,0){25}}
\multiput(107,15)(3,0){3}{\circle*{.5}}
\multiput(107,30)(3,0){3}{\circle*{.5}}
\multiput(107,60)(3,0){3}{\circle*{.5}}
\multiput(107,75)(3,0){3}{\circle*{.5}}
\multiput(107,105)(3,0){3}{\circle*{.5}}
\put(120,15){\vector(1,0){10}}
\put(133,13){$U$}
\put(14,14.5){$v_1$}

\put(8,29){$\atp{2}$}
\put(0,29){$\to$}
\put(-13,29){$\Dh\btp{j}$}
\put(14,29.5){$v_2$}

\put(8,59){$\atp{k}$}
\put(0,59){$\to$}
\put(-13,59){$\Dh\btp{j}$}
\put(14,59.5){$v_k$}

\put(4,74){$\btp{1}$}
\put(-3,74){$\to$}
\put(-16,74){$\Dh\btp{j}$}
\put(10,74.5){$v_{k+1}$}

\put(4,104){$\btp{\ell}$}
\put(-3,104){$\to$}
\put(-16,104){$\Dh\btp{j}$}
\put(10,104.5){$v_{k+\ell}$}

\thinlines
\multiput(37,13)(15,0){2}{\line(0,1){22}}
\multiput(37,42)(0,3){3}{\circle*{.5}}
\multiput(52,42)(0,3){3}{\circle*{.5}}
\multiput(80,13)(15,0){2}{\line(0,1){22}}
\multiput(80,42)(0,3){3}{\circle*{.5}}
\multiput(95,42)(0,3){3}{\circle*{.5}}
\put(125,13){\line(0,1){22}}
\multiput(37,55)(15,0){2}{\line(0,1){25}}
\multiput(80,55)(15,0){2}{\line(0,1){25}}
\put(125,55){\line(0,1){22}}
\multiput(125,42)(0,3){3}{\circle*{.5}}
\multiput(37,100)(15,0){2}{\line(0,1){5}}
\multiput(80,100)(15,0){2}{\line(0,1){5}}
\put(125,100){\line(0,1){5}}

\put(20,30){\line(1,0){35}}
\multiput(20,60)(0,15){2}{\line(1,0){35}}
\put(20,105){\line(1,0){35}}

\put(75,30){\line(1,0){25}}
\multiput(75,60)(0,15){2}{\line(1,0){25}}
\put(75,105){\line(1,0){25}}

\put(120,30){\line(1,0){5}}
\multiput(120,60)(0,15){2}{\line(1,0){5}}
\put(120,105){\line(1,0){5}}
\end{picture}
\end{center}
\caption{Satisfying $\alphaC$ in a product frame $(U,\drel{U})\mprod(V,\drel{V})$.}\label{f:forall1}
\end{figure}
We say that a pair $(u,v)\in U\mprod V$ is of $a$-\emph{type\/} (or of $b$-\emph{type\/}) if
$\M,(u,v)\models\atp{i}$ for some $i=1,\dots,k$ 
(or $\M,(u,v)\models\btp{j}$ for some $j=1,\dots,\ell$). 
Take the subset $Z$  of $U\mprod V$ consisting of the pairs $(u_i,v_j)$ for  $i=0,\dots,n$ and  $j=1,\dots,n$.
We claim that
\begin{equation}\label{phole1}
\mbox{there exists a pair in $Z$ that is neither $a$-type nor $b$-type.}
\end{equation}
Indeed, suppose the contrary, that is, every pair in $Z$ is either $a$-type or $b$-type.
For every $0\leq i\leq n$, there can be $\leq k$ many $a$-type pairs among 
$(u_i,v_1),\dots,(u_i,v_n)$. So there have to be $\geq \ell$ many $b$-type pairs among them.
So altogether in $Z$ there are $\geq (n+1)\cdot\ell$ many $b$-type pairs.
Thus, by the generalised pigeonhole principle, there exists $1\leq s\leq n$ such that there are $>\ell$ many $b$-type points among $(u_0,v_s),\dots,(u_n,v_s)$.
But for every $1\leq j\leq n$, there can be $\leq \ell$ many $b$-type pairs among 
$(u_0,v_j),\dots,(u_n,v_j)$, a contradiction, proving \eqref{phole1}.

So suppose $(u,v)\in Z$ is neither $a$-type nor $b$-type. 
By \eqref{sees1}--\eqref{sees2},  for every $1\leq i\leq k$ there is some $z_i\in V$ such that
$z_i\ne v$ and $\M,(u,z_i)\models\atp{i}$, and so $\M,(u,v)\models\avar{i}$ by \eqref{atype}.
Similarly, by \eqref{sees3}--\eqref{sees4},  for every $1\leq j\leq \ell$ there is some $w_j\in U$ such that
$w_j\ne u$ and $\M,(w_j,v)\models\btp{j}$, and so $\M,(u,v)\models\bvar{j}$ by \eqref{btype}.
Therefore,
$
\M,(u,v)\models\bigwedge_{i=1}^k\avar{i}\land\bigwedge_{j=1}^{\ell}\bvar{j},
$
as required.
\end{proof}

As a consequence of Lemmas~\ref{l:badclincl} and \ref{l:badclinprod} we also obtain:

\begin{corollary}\label{co:noimpossible}
For every countable \grid{} $\F$, if $\F$ is a p-morphic image of a product of difference frames,
then $\F$ contains no impossible \clusters.
\end{corollary}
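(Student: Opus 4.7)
The plan is to derive this corollary directly from Lemmas~\ref{l:badclincl} and \ref{l:badclinprod} together with the standard fact that validity of modal formulas is preserved under p-morphic images. The argument is essentially a contrapositive: I assume $\F$ does contain some impossible cluster $\C$ and then exhibit a Sahlqvist formula that is valid in every product of difference frames but not in $\F$, contradicting the assumption that $\F$ is a p-morphic image of such a product.

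First, I would suppose for contradiction that $\F$ is a p-morphic image of some product $(U,\drel{U})\mprod(V,\drel{V})$ of difference frames, yet $\F$ contains a finite impossible cluster $\C$ of one of the types (\bcno{1})--(\bcno{4}). Since $\C$ is impossible, the Sahlqvist formula $\nof$ is defined (for type (\bcno{2}) the symmetric construction to the one given in \S\ref{badc} is needed, but the author has noted this is analogous and left to the reader). Then I would invoke Lemma~\ref{l:badclinprod} to conclude that $\nof$ is valid in $(U,\drel{U})\mprod(V,\drel{V})$.

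Next, since validity of modal formulas is preserved under taking p-morphic images (as recorded in \S\ref{s:bimodal}), $\nof$ must also be valid in $\F$. But by Lemma~\ref{l:badclincl}, $\nof$ fails in every grid containing $\C$, and in particular in $\F$, which is the desired contradiction. Therefore $\F$ contains no impossible clusters.

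There is no genuine obstacle here: all the work has been done in the two preceding lemmas. The only minor point to be careful about is that the case (\bcno{2}) was not spelled out in the construction of $\nof$, so strictly speaking one should note that the analogous formula (obtained by swapping the roles of $\hh$ and $\vv$) gives a Sahlqvist formula with the same two properties, so that Lemmas~\ref{l:badclincl} and \ref{l:badclinprod} apply uniformly to all four impossible-cluster types.
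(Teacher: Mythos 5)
Your proposal is correct and follows exactly the paper's intended argument: the corollary is stated as an immediate consequence of Lemmas~\ref{l:badclincl} and \ref{l:badclinprod} together with preservation of validity under p-morphic images, which is precisely the contrapositive you spell out. Your remark about handling type (\bcno{2}) by the symmetric construction is also consistent with what the paper leaves to the reader.
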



\subsubsection{Eliminating bad paths}\label{badpath}

Let $\F=(W,\Rh,\Rv)$ be a countable rooted frame for $\comm$ that is represented 
as a \grid{} as $(X,Y,g)$.
Suppose that $\F$ contains no impossible \clusters, but $\gf$ contains a bad path $P$ of the form
\[
P:\quad
z_0\to^{\lambda_1} z_{1}\to^{\lambda_{2}}\dots z_{m-1}\to^{\lambda_m} z_m
\]
such that $m\in\nN$ and 
$\maxz{z_0}<  \lambda_1\cdot$ $\dots$ $\cdot\lambda_m \cdot \minz{z_m}$. 
Throughout this subsection, we assume that $z_0,z_m\in Y$, and define
a Sahlqvist formula $\badpathf{P}$ for this case. The other three
cases are similar and left to the reader.

The antecedent of $\badpathf{P}$ will consists of two conjuncts:
$\firstf{P}$ (expressing the value of $\maxz{z_0}$), and
$\Dh^+\Dv^+\lastf{P}$ (expressing that the value of $\lambda_1\cdot$ $\dots$ $\cdot\lambda_m \cdot \minz{z_m}$ is sufficiently large). 

We begin with defining $\firstf{P}$.
As $\maxz{z_0}<  \lambda_1\cdot$ $\dots$ $\cdot\lambda_m \cdot \minz{z_m}$, we must have that
$\maxz{z_0}=\noc{P}$ for some $\noc{P}\in\nNp$, and so $(z_0=\noc{P})\in\con{\F}$ by
\eqref{maxz}.
As $z_0\in Y$, 
%
\begin{equation}
\label{xp}
\mbox{there is some $x_P\in X$ such that $\F^{x_Pz_0}$ consists of }
\mbox{$\noc{P}$ many $\ \xxi$ points $a_1,\dots,a_{\noc{P}}$.}
\end{equation}
%
So, we introduce fresh propositional variables $\avarr$ and $\avar{i}$, for $i=1,\dots,\noc{P}$,  and
define the formula
%
\[
\firstf{P} :\quad \Bv^+\avarr\land \bigwedge_{i=1}^{\noc{P}}\Dv^+(\neg\avar{i}\land\Bh\avar{i}).
\]

%
%

In order to define $\lastf{P}$, 
we first describe the path $P$ with a formula $\pathf{P}$. To this end,
we say that a \cluster{} $\C$ \emph{corresponds to an edge\/} $z\to^\lambda z'$ in $\gf$,
if $\C$ is (isomorphic to) $\F^{zz'}$ whenever $z\in X$, $z'\in Y$, and
$\C$ is (isomorphic to) $\F^{z'z}$ whenever $z'\in X$, $z\in Y$.
Observe that only switch \clusters{} can correspond to some edge in $\gf$. In particular, for every $x\in X$ and every $y\in Y$, we have the following:
\begin{itemize}
\item $x\to^1 y$ is an edge in $\gf$ iff $y\to^1 x$ is an edge in $\gf$ iff $\F^{xy}$ is a type (\bceqsw) \cluster.
\item  $x\to^2 y$ is an edge in $\gf$ iff $\F^{xy}$ is a type (\bchvsw) \cluster.
\item  $y\to^2 x$ is an edge in $\gf$ iff $\F^{xy}$ is a type (\bcvhsw) \cluster.
\end{itemize}
%
%
%
%
If $m>0$ then
let $\C_1,\dots\C_m$ be the sequence of \clusters{} corresponding to the edges in $P$  (that is, $\C_j$ corresponds to $z_{j-1}\to^{\lambda_j} z_j$).
Observe that for each $j=1,\dots,m$, 
\begin{itemize}
\item
if $\C_j$ is of type (\bcvhsw) then there is some $\ \ir$ point $c_j$ in $\C_j$; 
\item
if $\C_j$ is of type (\bchvsw) then there is some $\ \ri$ point $c_j$ in $\C_j$; and
\item
if $\C_j$ is of type (\bceqsw) then there is some $\ \ii$ point $c_j$ in $\C_j$.
\end{itemize}
So, for $j=1,\dots,m$, we introduce fresh propositional variables $\Cvar{j}$, and define
formulas
\begin{equation}\label{clusterfdef}
\ctp{j} :\quad\left\{
\begin{array}{ll}
\neg\Cvar{j}\land\Bh\Cvar{j}, &\mbox{if $\C_j$ is of type (\bcvhsw)},\\[3pt]
\neg\Cvar{j}\land\Bv\Cvar{j}, &\mbox{if $\C_j$ is of type (\bchvsw)},\\[3pt]
\neg\Cvar{j}\land\Bh\Cvar{j}\land\Bv\Cvar{j},\quad &\mbox{if $\C_j$ is of type (\bceqsw)}.
\end{array}
\right.
\end{equation}
We also introduce a fresh propositional variable $\bvarr$, and define the formulas $\betaf_0$, $\betaf_1$, $\dots$, $\betaf_m=\pathf{P}$ inductively as follows.
Let $\betaf_0=\neg\avarr\land\Bh\bvarr$ (where $\avarr$ is the same variable as in $\firstf{P}$),
and for $j=1,\dots,m$, let
%
\begin{equation}\label{pathfdef}
\betaf_j:\ \ \left\{
\begin{array}{ll}
\Dh(\ctp{j}\land\betaf_{j-1}), & \mbox{if $z_{j-1}\in X$ and $\C_{j}$ is (\bceqsw)},\\[5pt]
\Dh\bigl(\ctp{j}\land\betaf_{j-1}\land \Dh(\ctp{j}\land\betaf_{j-1})\bigr), & \mbox{if $z_{j-1}\in X$ and $\C_{j}$ is (\bchvsw)},\\[5pt]
\Dv(\ctp{j}\land\betaf_{j-1}), & \mbox{if $z_{j-1}\in Y$ and $\C_{j}$ is (\bceqsw)},\\[5pt]
\Dv\bigl(\ctp{j}\land\betaf_{j-1}\land\Dv(\ctp{j}\land\betaf_{j-1})\bigr), & \mbox{if $z_{j-1}\in Y$ and $\C_{j}$ is (\bcvhsw)}.
\end{array}
\right.
\end{equation}

Now we are in a position to define
$\lastf{P}$, expressing that the value of \mbox{$\lambda_1\cdot$ $\dots$ $\cdot\lambda_m \cdot \minz{z_m}$} for the endpoint $z_m$ of $P$
is sufficiently large.
Let $\nok{P}\in\nNp$ be such that 
$\maxz{z_0}<  \mbox{$\lambda_1\cdot$ $\dots$ $\cdot\lambda_m \cdot \nok{P}$}$ and $\nok{P}\leq\minz{z_m}$.
We have two cases, depending on why $\minz{z_m}$ is `too large':
 \begin{enumerate}
 \item 
 
 There is $x_P'\in X$ such that 
 $\sizev(\F^{x_P'z_m})\geq \nok{P}$;
 \item
or there is $x_P'\in X$ such that $\F^{x_P'z_m}$ is an infinity \cluster
 \end{enumerate}
(see \eqref{minz}, \eqref{conset}, and Table~\ref{tt:finclusters}).
 We define a Sahlqvist formula $\lastf{P}$ for each of these two cases. 
 

\smallskip
\noindent
{\bf Case 1.}
%
Then there are $\Rv$-reflexive points $b_1^\circ,\dots,b_{\nor{P}}^\circ$ and
$\Rv$-irreflexive points $b_1^\bullet,\dots,b_{\noi{P}}^\bullet$ in $\F^{x_P'z_m}$ such that
$2\nor{P}+\noi{P}\geq \nok{P}$.
%
We introduce fresh propositional variables $\bvar{j}^\circ$ for $j=1,\dots,\nor{P}$, and
$\bvar{s}^\bullet$ for $s=1,\dots,\noi{P}$,
and define the formulas 
%
%
\begin{align}
\label{bvarnos2}
& \btp{j}^\circ :\quad \bvar{j}^\circ\land\bigwedge_{\substack{t=1\\ t\ne j}}^{\nor{P}}\neg\bvar{t}^\circ
\land\bigwedge_{t=1}^{\noi{P}}\neg\bvar{t}^\bullet,\quad\mbox{for all $j=1,\dots,\nor{P}$},\\[5pt]
\label{bvarnos22}
& \btp{s}^\bullet :\quad \bvar{s}^\bullet\land\bigwedge_{\substack{t=1\\ t\ne s}}^{\noi{P}}\neg\bvar{t}^\bullet
\land\bigwedge_{t=1}^{\nor{P}}\neg\bvar{t}^\circ,\quad\mbox{for all $s=1,\dots,\noi{P}$}.
\end{align}
Then we let
\[
\lastf{P}:\quad
\bigwedge_{j=1}^{\nor{P}}\Dv^+\bigl(\btp{j}^\circ\land\pathf{P}\land\Dv(\btp{j}^\circ\land\pathf{P})\bigr)\land
\bigwedge_{s=1}^{\noi{P}}\Dv^+\bigl(\btp{s}^\bullet\land\pathf{P}\bigr).
\]

%
%
%


\smallskip
\noindent
{\bf Case 2.}
Now we cannot use that we have enough different points in $\F^{x_P'z_m}$ like in Case 1, but instead we need to `generate' them.
There are two cases: 
Either 
(a)  $\F^{x_P'z_m}$ contains some $\ \ir$ point $c$ and some $\ \xxi$ point $d$
(this is when $\F^{x_P'z_m}$ is of type (\bcinf{1}),  (\bcinf{2}) or (\bcinf{4}));
or
(b) $\F^{x_P'z_m}$ contains some $\ \ixx$ point $c$ and some $\ \ri$ point $d$
(this is when $\F^{x_P'z_m}$ is of type (\bcinf{1}), (\bcinf{3}) or (\bcinf{4})).
In both cases, instead of the $\bvar{j}^\circ$ and $\bvar{j}^\bullet$ variables, we introduce fresh 
propositional variables $\cvarr$ and $\dvarr$, and define the formulas
\[
\cctp :  \quad \neg\cvarr\land\Bh\cvarr,\qquad\mbox{and}\qquad
\ddtp :  \quad \neg\dvarr\land\Bv\dvarr.
\]
Then we define the formulas
 $\deltaf_1,\dots,\deltaf_{\nok{P}}=\lastf{P}$ inductively as follows.
Let 
\[
\deltaf_1:\quad   \left\{
\begin{array}{ll}
\cctp\land\Dh(\ddtp\land\pathf{P}), & \mbox{in case (a)},\\[5pt]
\ddtp\land\Dv(\cctp\land\pathf{P}), & \mbox{in case (b)},
\end{array}
\right.
\]
and for $j=2,\dots,\nok{P}$, let
\[
\deltaf_j:\quad   \left\{
\begin{array}{ll}
\cctp\land\Dh\bigl(\ddtp\land\pathf{P}\land\Dv(\deltaf_{j-1}\land\Dv\deltaf_{j-1})\bigr), & \mbox{in case (a)},\\[5pt]
\ddtp\land\Dv\bigl(\cctp\land\pathf{P}\land\Dh(\deltaf_{j-1}\land\Dh\deltaf_{j-1})\bigr),  & \mbox{in case (b)}.
\end{array}
\right.
\]
%
%
%
%

Finally,
we define $\badpathf{P}$ by taking
\[
\badpathf{P}:\quad
\bigl(\firstf{P}\land\Dh^+\Dv^+\lastf{P}\bigr)\to
\Dv^+\Bigl(\bvarr\land\bigwedge_{i=1}^{\noc{P}}\avar{i}\Bigr).
\]
%
It is straightforward to check that $\badpathf{P}$ is a Sahlqvist formula.


\begin{lemma}\label{l:badpathdec}
It is decidable whether a bimodal formula is of the form $\badpathf{P}$ for some \grid{} $\F$ and bad path $P$ in $\gf$.
\end{lemma}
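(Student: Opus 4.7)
The strategy mirrors that of Lemma~\ref{l:impossibledec}: I will show that $\badpathf{P}$ depends only on a finite tuple of combinatorial parameters that can be read off the formula syntactically, and that the condition ``there exist $\F$ and $P$ realising these parameters with $P$ bad'' reduces to a simple arithmetic check.

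First, I would observe that, up to renaming of propositional variables, $\badpathf{P}$ is determined by the following finite data: the integer $\noc{P}\in\nNp$ (which governs the number of conjuncts in $\firstf{P}$ and in the consequent); the length $m\in\nN$ of the path $P$; for each $j=1,\dots,m$, the switch-type of the cluster $\C_j$ (one of (\bcvhsw), (\bchvsw), (\bceqsw)) together with the orientation bit recording whether $z_{j-1}\in X$ or $z_{j-1}\in Y$ (these two pieces of data are exactly what \eqref{clusterfdef}--\eqref{pathfdef} use to build $\ctp{j}$ and $\betaf_j$); and finally the data of $\lastf{P}$, which is either the pair $(\nor{P},\noi{P})\in\nN\times\nN$ in Case~1, or the pair consisting of the sub-case label $\in\{\text{(a)},\text{(b)}\}$ together with the integer $\nok{P}\in\nNp$ in Case~2. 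Since each of $\firstf{P}$, $\pathf{P}$ and $\lastf{P}$ has a rigid, recursively specified shape, this tuple can be extracted from any candidate formula by a straightforward recursive parser.

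Next, given an arbitrary bimodal formula $\phi$, the decision procedure proceeds as follows. (i)~Attempt to match $\phi$ against the outer template $(\firstf{P}\land\Dh^+\Dv^+\lastf{P})\to\Dv^+(\bvarr\land\bigwedge_{i=1}^{\noc{P}}\avar{i})$; if the overall shape, the propositional variables used, and the number of conjuncts are not consistent across the three occurrences of $\noc{P}$ and of the $\avar{i}$, reject. (ii)~Parse the antecedent to recover $\noc{P}$, then parse $\lastf{P}$ to decide which case/sub-case applies and to recover $(\nor{P},\noi{P})$ or $\nok{P}$, noting that the shapes of $\btp{j}^\circ,\btp{s}^\bullet$ in \eqref{bvarnos2}--\eqref{bvarnos22} and of $\deltaf_j$ uniquely determine these numbers. (iii)~Parse $\pathf{P}=\betaf_m$ by unfolding \eqref{pathfdef} from the outside in; each step is uniquely determined by the shape of the outermost modality and of the nested $\Dh$/$\Dv$, so this recovers $m$ and the sequence of (type, orientation) pairs, which in turn gives the labels $\lambda_1,\dots,\lambda_m$. (iv)~Check the arithmetic condition making $P$ bad, namely $\noc{P}<\lambda_1\cdots\lambda_m\cdot K$, where $K=2\nor{P}+\noi{P}$ in Case~1 and $K=\nok{P}$ in Case~2; check also the internal coherence constraints (e.g.\ $\nok{P}\leq \minz{z_m}$ is automatic from the chosen case, and in Case~1 we need $K\leq\lambda_1\cdots\lambda_m\cdot K$ trivially, etc.). All of (i)--(iv) are finite syntactic/arithmetic checks.

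Finally, I would note that realisability by some \grid{} $\F$ is automatic: given any tuple of parameters passing the checks above, we can explicitly build a \grid{} $\F$ of a single row of clusters of the prescribed types together with the cluster $\F^{x_P z_0}$ demanded by \eqref{xp} and the cluster $\F^{x_P' z_m}$ demanded by the chosen case of $\lastf{P}$ (such \clusters{} exist by Table~\ref{tt:finclusters}), making the sequence of edges through these clusters into a bad path $P$ in $\gf$. Hence $\phi$ is of the form $\badpathf{P}$ iff the above decision procedure accepts, proving decidability. The only mildly non-obvious step is (iii), where one must verify that the inductive definition \eqref{pathfdef} admits unique parsing; this follows because the outermost connective of $\betaf_j$ distinguishes the four sub-cases, and the presence or absence of the extra inner $\Dh$/$\Dv$ conjunct distinguishes switch-type (\bceqsw) from the others.
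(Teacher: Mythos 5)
Your proposal is correct and follows essentially the same route as the paper, whose entire proof is the one-line observation that $\badpathf{P}$ depends only on $\noc{P}$, $\nor{P}$, $\noi{P}$, $\nok{P}$ and the sequence of switch-cluster types along $P$; you simply make explicit the parsing, the arithmetic badness check, and the realisability of the parameters by an actual \grid{}, all of which the paper leaves implicit. The only minor quibble is that the witnessing \grid{} is a zigzag of switch \clusters{} (padded with harmless free \clusters) rather than literally ``a single row'', but this does not affect the argument.
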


\begin{proof}
Observe that $\badpathf{P}$  only depends on the numbers $\noc{P}$, $\nor{P}$, $\noi{P}$,  $\nok{P}$, and the types in the sequence of \clusters{} corresponding to the edges in $P$.
\end{proof}


\begin{lemma}\label{l:badclinclgen}
Suppose $\F=(X,Y,g)$ is a \grid{} that contains no impossible \clusters.
If $P$ is a bad path in $\gf$, then $\badpathf{P}$ is not valid in~$\F$.
\end{lemma}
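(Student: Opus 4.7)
To refute $\badpathf{P}$ in $\F$, I plan to construct a model $\M$ on $\F$ and show that at the point $w^\star = a_1 \in \F^{x_Pz_0}$ the antecedent $\firstf{P}\land\Dh^+\Dv^+\lastf{P}$ holds while the consequent $\Dv^+(\bvarr\land\bigwedge_i\avar{i})$ fails. The interpretations I would choose are: $\M(\avarr) := \bigcup_{y\in Y}g(x_P,y)$ (the entire column $x_P$); $\M(\bvarr) := \bigcup_{x\in X}g(x,z_0)$ (the entire row $z_0$, augmented to include the terminal point $q_0$ of the path-traversal if it is $\Rh$-reflexive); $\M(\avar{i}) := W \setminus \{a_i\}$; and each $\M(\Cvar{j})$ picked so that its complement is precisely the row or column containing the switch cluster $\C_j$ along $P$. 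For Case~1 of $\lastf{P}$, $\M(\bvar{j}^\circ)$ and $\M(\bvar{s}^\bullet)$ would be singletons identifying the points $b_j^\circ$ and $b_s^\bullet$ in $\F^{x_P'z_m}$; for Case~2, $\M(\cvarr)$ and $\M(\dvarr)$ would be row/column markers exploiting the infinity of the terminal cluster.

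With these assignments, two of the three pieces are straightforward. The consequent fails at $a_1$ because any $\Rv^+$-successor of $a_1$ lies in column $x_P$, so $\bvarr$ forces it into $\F^{x_Pz_0} = \{a_1,\dots,a_{\noc{P}}\}$, at which point some $\avar{i}$ is false. The conjunct $\firstf{P}$ at $a_1$ is also easy: $\Bv^+\avarr$ holds because all $\Rv^+$-successors of $a_1$ stay in column $x_P = \M(\avarr)$, and each $\Dv^+(\neg\avar{i}\land\Bh\avar{i})$ is witnessed by $a_i$ itself in the case that $\F^{x_Pz_0}$ is of type \bchvstrict{} (so $a_i$ is $\Rh$-irreflexive); in the residual \bcvstrict{} case I would instead use $\Rh$-irreflexive witnesses elsewhere in column $x_P$ and tailor $\M(\avar{i})$ accordingly.

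The technical heart is verifying $\Dh^+\Dv^+\lastf{P}$ at $a_1$, i.e.\ producing a point $q^\star \in \F^{x_P'z_m}$ (reachable via an $\Rh$-step through row $z_0$ and then a $\Rv$-step down column $x_P'$) at which $\lastf{P}$ holds. My plan is to verify $\pathf{P} = \betaf_m$ at $q^\star$ by induction on $j$, mirroring the recursive shape of \eqref{pathfdef}: the cluster $\C_j$ corresponding to the edge $z_{j-1}\to^{\lambda_j}z_j$ exists in $\F$ by definition of $\gf$; a single $\Rh$- or $\Rv$-related neighbour suffices for a \bceqsw{} edge, while for a \bchvsw{} or \bcvhsw{} edge the $2$-to-$1$ size ratio within the switch cluster supplies the pair of witnesses required by the doubled subformula, thereby accounting for the factor $\lambda_j = 2$. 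The distinguished-point conjuncts of $\lastf{P}$ are then met in Case~1 by the $2\nor{P}+\noi{P}\geq \nok{P}$ appropriately-reflexive points inside $\F^{x_P'z_m}$, and in Case~2 by iteratively generating $\nok{P}$ fresh witnesses through the recursion in $\deltaf_j$. The main obstacle I anticipate is the bookkeeping at each inductive step---tracking which $\Cvar{j}$-marker is being toggled, which successor direction is being used, and the reflexivity pattern of the cluster at hand---with the bad-path inequality $\maxz{z_0} = \noc{P} < \lambda_1\cdots\lambda_m\cdot\minz{z_m}$ ensuring that the traversal multiplicatively accumulates enough ``room'' for $\lastf{P}$ to be satisfiable while still preventing the consequent from being met.
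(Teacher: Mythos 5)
Your overall strategy---satisfy the antecedent at $a_1$ and refute the consequent there using the fact that $\F^{x_Pz_0}$ is exactly $\{a_1,\dots,a_{\noc{P}}\}$---is the same as the paper's, and your valuations of $\avarr$, $\bvarr$ and (up to restricting to column $x_P$) $\avar{i}$ match the ones used there. But there are two concrete problems at the technical heart of your construction. First, your valuation of the $\Cvar{j}$ is upside down: if the \emph{complement} of $\M(\Cvar{j})$ is the whole row (or column) containing $\C_j$, then the conjunct $\Bh\Cvar{j}$ (resp.\ $\Bv\Cvar{j}$) of $\ctp{j}$ fails at every point of that row, since every switch \cluster{} has at least two points and hence every point of the row has an $\Rh$-successor inside the row, i.e.\ outside $\M(\Cvar{j})$. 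With your valuation $\pathf{P}$ is unsatisfiable and the antecedent cannot be forced. What is needed is $\M(\Cvar{j})=\{w: c_j\Rh w\}$ (or $\{w: c_j\Rv w\}$, or their union for type (\bceqsw)): then $\neg\Cvar{j}\land\Bh\Cvar{j}$ holds at the distinguished point $c_j$ and essentially nowhere else. Second, your account of the $\lambda_j=2$ case misses the key trick. The doubled subformula $\Dv\bigl(\ctp{j}\land\betaf_{j-1}\land\Dv(\ctp{j}\land\betaf_{j-1})\bigr)$ is \emph{not} witnessed by two distinct points of the switch \cluster{} (there need not be two, and with the correct valuation only $c_j$ satisfies $\ctp{j}$); it is witnessed by $c_j$ alone, which for a (\bcvhsw) \cluster{} is $\Rv$-\emph{reflexive}, so $c_j\Rv c_j$ supplies both the outer and the inner occurrence. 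This ``one reflexive point counted twice'' is precisely what makes the antecedent satisfiable in $\F$ while forcing two genuinely distinct coordinates in a product frame; there is no ``$2$-to-$1$ size ratio within the cluster'' to appeal to, and without the reflexivity observation your induction on $\betaf_j$ does not go through.

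Two smaller points. Your worry about $\Rh$-reflexive $a_i$'s in the (\bcvstrict) case is an artefact of reading the $\Bh$ in $\firstf{P}$ literally: the companion validity proof (see \eqref{strictas} and the pigeonhole step in Claim~\ref{c:pathconsok}) only works if that conjunct is $\Dv^+(\neg\avar{i}\land\Bv\avar{i})$, under which $a_i$ is always a legitimate witness because the $a_i$ are $\Rv$-irreflexive. Your proposed repair (moving the witness elsewhere in the column and ``tailoring'' $\M(\avar{i})$) would in any case be fatal, since $\M(\avar{i})$ must exclude $a_i$ itself for the consequent to fail at $a_1$. Finally, the bad-path inequality $\maxz{z_0}<\lambda_1\cdots\lambda_m\cdot\minz{z_m}$ plays no role in this direction of the argument---it is what makes $\badpathf{P}$ valid in products (Lemma~\ref{l:badclinprodgen}); here one only needs that the \clusters{} named in the formula actually occur in~$\F$.
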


\begin{proof}
We use the notation introduced in the definition of $\badpathf{P}$.
We define a model $\M$ on $\F$ by taking
\begin{align*}
\M(\avarr) & =  \{w\in W: a_1\Rv^+ w\},\\
\M(\avar{i}) & =  \{w\in W: a_i\Rv w\},\quad\mbox{for $i=1,\dots,\noc{P}$},\\
\M(\bvarr) & =  \left\{
\begin{array}{ll}
\{w\in W: c_1\Rh w\}, &\mbox{if $m>0$},\\[3pt]
\displaystyle
\bigcup_{j=1}^{\nor{P}}\{w\in W: b_j^\circ\Rh w\}\cup\bigcup_{s=1}^{\noi{P}}\{w\in W: b_j^\bullet\Rh w\},\!\!
&\mbox{in Case 1, if $m=0$},\\[3pt]
\{w\in W: c\Rh w\},   &\mbox{in Case 2, if $m=0$},
\end{array}
\right.\\[5pt]
\M(\Cvar{j}) & =  \left\{
\begin{array}{ll}
\{w\in W: c_j\Rh w\}, &\mbox{if $\C_j$ is (\bcvhsw)},\\[3pt]
\{w\in W: c_j\Rv w\},  &\mbox{if $\C_j$ is (\bchvsw)},\\[3pt]
\{w\in W: c\Rh w\},  &\mbox{if $\C_j$ is (\bceqsw),\quad for $j=1,\dots,m$},
\end{array}
\right.
\end{align*}
then in Case 1, take
\begin{align*}
& \M(\bvar{j}^\circ) = \{b_j^\circ\},\quad\mbox{for $j=1,\dots,\nor{P}$},\\
& \M(\bvar{s}^\bullet) = \{b_s^\bullet\},\quad\mbox{for $s=1,\dots,\noi{P}$},
\end{align*}
and in Case 2, take
\[
\M(\cvarr) =  \{w\in W: c\Rh w\},\quad\mbox{and}\quad
\M(\dvarr)  =  \{w\in W: d\Rv w\}.
\]
It is straightforward to check that $\M,a_1\models\firstf{P}$. 
Further, it is easy to see that
%
in Case~1, $\M,b_j^\circ\models\pathf{P}$ for all $j=1,\dots,\nor{P}$ and
$\M,b_s^\bullet\models\pathf{P}$ for all $s=1,\dots,\noi{P}$,
and in Case~2, $\M,c\models\pathf{P}$.
%
%
Using these, it is not hard to check that $\lastf{P}$ is satisfied in $\M$ in both cases.

On the other hand, suppose $w\in W$ is such that $a_1\Rv^+ w$ and
$\M,w\models\bvarr$. We claim that $a_1\Rh^+ w$ 
%
%
follows. Indeed, if $m=0$ then this is because we have some $w'$ in $\F^{x_P'z_0}$ with $w'\Rh^+a_1$ and $w'\Rh w$, 
and if $m>0$ then because of $a_1\Rh c_1$ and $c_1\Rh w$.
Thus, $w$ must be in $\F^{x_Pz_0}$ by the definition of  \grids. 
If $\M,w\models\bigwedge_{i=1}^{\noc{P}}\avar{i}$ held as well, then $w$ should be different from all the
$a_i$, contradicting $\F^{x_Pz_0}=\{a_1,\dots,a_{\noc{P}}\}$.
\end{proof}

\begin{lemma}\label{l:badclinprodgen}
$\badpathf{P}$ is valid in every product of difference frames.
\end{lemma}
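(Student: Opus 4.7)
Let $\M$ be a model over a product of difference frames $(U, \drel{U})\mprod(V, \drel{V})$ and assume $\M, (u_0, v_0) \models \firstf{P} \land \Dh^+\Dv^+\lastf{P}$. The overall strategy will mirror the pigeonhole argument from Lemma~\ref{l:badclinprod}: the first conjunct of $\firstf{P}$ will restrict the ``bad'' $v$-coordinates in the row $\{u_0\}\times V$ to a set of size at most $\noc{P}$, while the second conjunct, together with the badness $\noc{P} < \lambda_1\cdots\lambda_m\cdot\nok{P}$ of $P$, will produce strictly more than $\noc{P}$ pairwise distinct ``good'' $v$-coordinates in the same row, yielding the required witness for the consequent $\Dv^+\bigl(\bvarr \land \bigwedge_{i=1}^{\noc{P}}\avar{i}\bigr)$.

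First I would unpack $\firstf{P}$ at $(u_0, v_0)$: $\Bv^+\avarr$ forces $(u_0, v) \models \avarr$ for every $v \in V$, and each conjunct $\Dv^+(\neg\avar{i} \land \Bh\avar{i})$ yields a witness $v_i \in V$ (possibly $v_0$) at which $\neg\avar{i} \land \Bh\avar{i}$ holds at $(u_0, v_i)$. Let $V^\dagger = \{v_1, \dots, v_{\noc{P}}\}$, of cardinality at most $\noc{P}$. By exploiting the $\Bh\avar{i}$-clauses together with the symmetry of $\drel{U}$ in the product, I would argue that for any $v^\# \in V \setminus V^\dagger$ with $(u_0, v^\#) \models \bvarr$, one already has $(u_0, v^\#) \models \avar{i}$ for every $i$, so such a $v^\#$ witnesses the consequent and the problem reduces to producing one.

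The bulk of the proof is a counting argument extracting at least $\lambda_1\cdots\lambda_m\cdot\nok{P}$ pairwise distinct candidates $v^\#$ from $\Dh^+\Dv^+\lastf{P}$. The central ingredient will be an induction on $j = 0, \dots, m$ showing the following: if $\betaf_j$ is satisfied at $(u^*, v^*)$ in the product, then the nested $\Dh$/$\Dv$ pattern of $\betaf_j$ reaches the base formula $\betaf_0 = \neg\avarr \land \Bh\bvarr$ along $\lambda_1\cdots\lambda_j$ pairwise distinct $v$-coordinates. At each such endpoint, $\neg\avarr$ together with $\Bv^+\avarr$ at $(u_0, v_0)$ forces the associated $u$-coordinate to differ from $u_0$, after which the $\Bh\bvarr$-conjunct transfers $\bvarr$ to $(u_0, v^\#)$. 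The doubling in each (\bchvsw)/(\bcvhsw)-step is provided by the nested pattern $\Dh\bigl(\ctp{j} \land \betaf_{j-1} \land \Dh(\ctp{j} \land \betaf_{j-1})\bigr)$ (or its $\Dv$-analog), where the $\ctp{j}$-clauses, by forcing $\Cvar{j}$ to hold throughout a row or column except at a unique point, keep the two branches from coalescing. The outer envelope $\lastf{P}$ contributes the factor $\nok{P}$: in Case~1 the pairwise exclusive markers encoded by $\btp{j}^\circ$ and $\btp{s}^\bullet$ yield $2\nor{P} + \noi{P} \ge \nok{P}$ distinct entry points into $\pathf{P}$, the factor~$2$ coming from the extra $\Dv(\btp{j}^\circ \land \pathf{P})$ conjunct; in Case~2 the inductive construction $\deltaf_1, \dots, \deltaf_{\nok{P}}$ achieves the same count via the $\cctp, \ddtp$ blocks and alternating $\Dh, \Dv$ layers. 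Multiplying the two bounds yields $\lambda_1\cdots\lambda_m\cdot\nok{P} > \noc{P} = |V^\dagger|$, and pigeonhole delivers a $v^\#$ outside $V^\dagger$.

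The main obstacle will be the pairwise distinctness bookkeeping in the counting claim. Because $\drel{V}$ is only pseudo-transitive (symmetric and irreflexive but not transitive), a pattern $\Dv(\phi \land \Dv\phi)$ does not by itself guarantee three distinct $v$-coordinates, and across the $m$ levels of the $\pathf{P}$-recursion combined with the $\nor{P}, \noi{P}$ or $\nok{P}$ outer branches of $\lastf{P}$, the coordinates generated could in principle collide either within a single branch or across different branches. It is precisely the $\ctp{j}$-clauses together with the mutually exclusive propositional variables $\bvar{j}^\circ, \bvar{s}^\bullet, \cvarr, \dvarr$ that rule out these collisions, and carefully propagating the resulting exclusivity constraints through both recursions---so that the final $\lambda_1\cdots\lambda_m\cdot\nok{P}$ many $v$-coordinates are genuinely pairwise distinct in $V$---will be the most delicate part of the proof.
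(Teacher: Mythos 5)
Your plan matches the paper's proof essentially step for step: the same pigeonhole strategy, the same extraction of $\nok{P}$ entry points with distinct $V$-coordinates from $\lastf{P}$ (the paper's Claim~\ref{c:kpoints}, with the same case split and the same induction on the $\deltaf_j$ in Case~2), the same unfolding of the $\pathf{P}$-recursion multiplying the count by $\lambda_1\cdots\lambda_m$ with collisions excluded by the $\ctp{j}$-clauses, and the same final transfer of $\bvarr$ to the row of $u_0$ via $\betaf_0$ and $\Bv^+\avarr$ (the paper's Claim~\ref{c:pathconsok}). The cross-branch distinctness bookkeeping you rightly flag as the delicate point is handled in the paper not branch by branch but by carrying sets $\wset{z_j}$, $\hset{z_j}$ of all witnesses simultaneously down the path from $z_m$ to $z_0$, which is exactly the propagation of exclusivity constraints your plan calls for.
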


\begin{proof}
Again, we use the notation introduced in the definition of $\badpathf{P}$.
Let $\M$ be a model over a product $(U,\drel{U})\mprod(V,\drel{V})$ of difference frames, and suppose that
$\M,(u,v)\models\firstf{P}$ for some $u,v$.
Then there are distinct points $v_1,\dots,v_{\noc{P}}$ in $V$ such that 
\begin{equation}\label{strictas}
\M,(u,v_i)\models\Bv^+\avarr\land\neg\avar{i}\land\Bv\avar{i}\quad\mbox{for all $i=1,\dots,\noc{P}$.}
\end{equation}
\begin{lclaim}\label{c:kpoints}
If $\lastf{P}$ is satisfied in $\M$, then
there exist points  $u_1,\dots,u_{\nok{P}}\in U$ and distinct points  $w_1,\dots,w_{\nok{P}}\in V$ such that
$\M,(u_j,w_j)\models\pathf{P}$, for all $j=1,\dots,\nok{P}$.
\end{lclaim}

%
%

\begin{proof}
In Case~1 this easily follows from $2\nor{P}+\noi{P}\geq \nok{P}$ and  \eqref{bvarnos2}--\eqref{bvarnos22}. 
In Case~2(a):
We show by induction that, for all $j=1,\dots,\nok{P}$, if $\M,(a,b)\models\deltaf_j$ for some $(a,b)$, 
then there are distinct points $u_1,\dots,u_{j}$ in $U$ and
distinct points $w_1,\dots,w_j$ in $V$ such that 
\begin{itemize}
\item
$w_j=b$, 
\item
$u_s\ne a$ for any $s$ with $1\leq s\leq j$,
\item
$\M,(u_s,w_s)\models\ddtp\land\pathf{P}$ for all $1\leq s\leq j$, and
\item
$\M,(u_s,w_{s-1})\models\cctp$ for all $2\leq s\leq j$.
\end{itemize}
As $\lastf{P}=\deltaf_{\nok{P}}$, Claim~\ref{c:kpoints} will follow. To begin with,
the $j=1$ case is obvious. So suppose inductively that the statement holds for some $j-1$,
and suppose that $\M,(a,b)\models\deltaf_j$. Then there are $a'\in U$, $b_1,b_2\in V$ such that
$a'\ne a$, $b,b_1,b_2$ are all distinct, $\M,(a',b)\models\ddtp\land\pathf{P}$, and
$\M,(a',b_i)\models\deltaf_{j-1}$ for $i=1,2$.
By the IH, for each $i=1,2$, 
there are distinct points $u_1^i,\dots,u_{j-1}^i$ in $U$ and
distinct points $w_1^i,\dots,w_{j-1}^i$ in $V$ such that $w_{j-1}^i=b_i$, $u_s^i\ne a'$ for any $s$, 
$\M,(u_s^i,w_s^i)\models\ddtp\land\pathf{P}$ for all $1\leq s\leq j-1$, and
$\M,(u_s^i,w_{s-1}^i)\models\cctp$ for all $2\leq s\leq j-1$. Thus, for each $i=1,2$, 
\begin{align}
\label{cdchain1}
& \mbox{$\M,(u_s^i,w_s^i)\models\neg\dvarr\land\Bv\dvarr$ for all $1\leq s\leq j-1$, and}\\
\label{cdchain2}
& \mbox{$\M,(u_s^i,w_{s-1}^i)\models\neg\cvarr\land\Bh\cvarr$ for all $2\leq s\leq j-1$.}
\end{align}
As $w_{j-1}^1=b_1\ne b_2=w_{j-1}^2$, \eqref{cdchain1} and \eqref{cdchain2} imply that
all the $u_1^1,\dots,u_{j-1}^1$, $u_1^2,\dots,u_{j-1}^2$ are distinct.
Thus, either $a\notin\{u_1^1,\dots,u_{j-1}^1\}$ or $a\notin\{u_1^2,\dots,u_{j-1}^2\}$.
Let $j$ be such that $a\notin\{u_1^j,\dots,u_{j-1}^j\}$. Then the points $u_s=u_s^j$, $w_s=w_s^j$ for
$s=1,\dots,j-1$, $u_j=a'$, $w_j=b$ are as required. 

Case 2(b) is similar.
\end{proof}

\begin{lclaim}\label{c:pathconsok}
$\displaystyle\M,(u,v)\models\Dv^+\Bigl(\bvarr\land\bigwedge_{i=1}^{\noc{P}}\avar{i}\Bigr)$.
\end{lclaim}

\begin{proof}
We define sets $\wset{z_j}$ and $\hset{z_j}$ inductively, for $j=m,\dots,0$, such that the following hold,
for every $j\leq m$:
\begin{itemize}
\item[(i)]
$\wset{z_j}\subseteq U\mprod V$;
\item[(ii)]
$\M,(a,b)\models\betaf_j$ for every $(a,b)\in\wset{z_j}$;
\item[(iii)]
if $j<m$ then $\M,(a,b)\models\ctp{j+1}$ for every $(a,b)\in\wset{z_j}$;
\item[(iv)]
$\hset{z_j}=\left\{
\begin{array}{ll}
\{ a\in U : (a,b)\in\wset{z_j}\mbox{ for some $b$}\}, & \mbox{ if $z_j\in X$},\\[3pt]
\{ b\in V : (a,b)\in\wset{z_j}\mbox{ for some $a$}\}, & \mbox{ if $z_j\in Y$};
\end{array}
\right.
$

\item[(v)]
$|\hset{z_m}|=\nok{P}$, and if $j<m$ then $|\hset{z_{j}}|=\lambda_{j+1}\cdot |\hset{z_{j+1}}|$.
\end{itemize}
To begin with, we take the points from Claim~\ref{c:kpoints} and let
\begin{align*}
& \wset{z_m}=\bigl\{(u_1,w_1),\dots,(u_{\nok{P}},w_{\nok{P}})\bigr\},\\
& \hset{z_m}=\{w_1,\dots,w_{\nok{P}}\}.
\end{align*}
 Now suppose inductively that (i)--(v) hold for some $j$.
  There are several cases. Suppose first that $z_{j-1}\in X$, and
  $\hset{z_j}=\{b_1,\dots,b_{s_j}\}$ for some $s_j$.
  Take some $a_1,\dots,a_{s_j}\in U$ such that $(a_i,b_i)\in\wset{z_j}$ for all $i=1,\dots,s_j$.
  \begin{itemize}
  \item 
  If $\lambda_j=1$ (that is, $\C_j$ is of type (\bceqsw)),
 then $\betaf_j=\Dh(\ctp{j}\land\betaf_{j-1})$ by \eqref{pathfdef}.
By (ii), there are $a'_1,\dots,a'_{s_j}\in U$ such that $\M,(a'_i,b_i)\models\ctp{j}\land\betaf_{j-1}$,
for all  $i=1,\dots,s_j$. As $\ctp{j}=\neg\Cvar{j}\land\Bh\Cvar{j}\land\Bv\Cvar{j}$ by \eqref{clusterfdef},
all the $a'_i$ are distinct. We let $\hset{z_{j-1}}=\{a'_1,\dots,a'_{s_j}\}$.

  \item 
  If $\lambda_j=2$ (that is, $\C_j$ is of type (\bchvsw)),
 then $\betaf_j=\Dh\bigl(\ctp{j}\land\betaf_{j-1}\land\Dh(\ctp{j}\land\betaf_{j-1})\bigr)$ by \eqref{pathfdef}.
By (ii), there are $a'_1,\dots,a'_{2s_j}\in U$ such that $\M,(a'_i,b_i)\models\ctp{j}\land\betaf_{j-1}$
and $\M,(a'_{s_j+i},b_i)\models\ctp{j}\land\betaf_{j-1}$,
for all  $i=1,\dots,s_j$. As $\ctp{j}=\neg\Cvar{j}\land\Bv\Cvar{j}$ by \eqref{clusterfdef},
all the $2s_j$ many $a'_i$ are distinct. We let $\hset{z_{j-1}}=\{a'_1,\dots,a'_{2s_j}\}$.
\end{itemize}
The cases when $z_{j-1}\in Y$ are similar.

So we have (i)--(v) for all $j=0,\dots,m$, and so $|\hset{z_0}|=\nok{P}\cdot\lambda_m\cdot$ $\dots$ $\cdot\lambda_1$.
As $P$ is a bad path, $|\hset{z_0}|>\noc{P}$, and so by the pigeonhole principle, there is $w\in\hset{z_0}$ such that $w\ne v_i$ for any $1\leq i\leq \noc{P}$.
Therefore, $\M,(u,w)\models\bigwedge_{i=1}^{\noc{P}}\avar{i}$ by \eqref{strictas}.
We claim that $\M,(u,w)\models\bvarr$ also holds. 
Indeed, 
take any $a\in U$ such that $(a,w)\in\wset{z_0}$. As $\betaf_0=\neg\avarr\land\Bh\bvarr$, 
by (ii) we have $\M,(a,w)\models\neg\avarr\land\Bh\bvarr$.
Therefore, $a\ne u$ by \eqref{strictas}, and so $\M,(u,w)\models\bvarr$ follows, as required.
\end{proof}

As by Claim~\ref{c:pathconsok} the consequent of $\badpathf{P}$ holds at $(u,v)$ in $\M$, the proof of 
Lemma~\ref{l:badclinprodgen} is completed.
\end{proof}

As a consequence of Corollary~\ref{co:noimpossible} and Lemmas~\ref{l:bad}, \ref{l:badclinclgen} and \ref{l:badclinprodgen} we also obtain the following `converse' of Lemma~\ref{l:clusterpm}:

\begin{corollary}\label{co:hassolution}
For every countable \grid{} $\F$, if $\F$ is a p-morphic image of a product of difference frames,
then $\F$ contains no impossible \clusters, and $\con{\F}$ has a solution.
\end{corollary}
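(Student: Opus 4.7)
The plan is to derive the corollary by combining the three ingredients cited: Corollary~\ref{co:noimpossible}, Lemma~\ref{l:bad}, and Lemmas~\ref{l:badclinclgen}--\ref{l:badclinprodgen}. The argument is contrapositive in flavour: assuming $\F$ is a p-morphic image of a product of difference frames, we rule out each of the two obstructions to $\con{\F}$ being solvable that were identified in Corollary~\ref{co:bad}.

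First I would handle the impossible \clusters{}: this is immediate from Corollary~\ref{co:noimpossible}, which says that whenever a countable \grid{} $\F$ is a p-morphic image of a product of difference frames, then $\F$ contains no impossible \clusters. So the first conjunct of the corollary is already at hand.

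Next I would show that $\con{\F}$ has a solution. The key observation is that Lemma~\ref{l:bad} provides the desired conclusion as soon as $\F$ has (1) no impossible \clusters{} and (2) no bad path in $\gf$. Condition (1) is secured by the previous step, so it suffices to rule out bad paths. Suppose for contradiction that $P$ is a bad path in $\gf$. By Lemma~\ref{l:badclinprodgen}, the Sahlqvist formula $\badpathf{P}$ is valid in every product of difference frames, and since validity of modal formulas is preserved under p-morphic images, it follows that $\badpathf{P}$ is valid in $\F$. On the other hand, $\F$ contains no impossible \clusters{}, so Lemma~\ref{l:badclinclgen} tells us that $\badpathf{P}$ is \emph{not} valid in $\F$. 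This contradiction shows that no bad path in $\gf$ can exist, and Lemma~\ref{l:bad} then produces a solution of $\con{\F}$.

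The argument is essentially a book-keeping combination of the stated results, so no real obstacle remains: all the work has been done in establishing the three ingredients. The only subtlety worth mentioning in the write-up is the invocation of the standard fact that validity of a modal formula is inherited by p-morphic images of the frame, which is what transfers $\badpathf{P}$ from the preimage product to $\F$ and allows us to play Lemma~\ref{l:badclinprodgen} against Lemma~\ref{l:badclinclgen}.
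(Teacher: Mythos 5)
Your proof is correct and follows exactly the route the paper intends: the first conjunct from Corollary~\ref{co:noimpossible}, and then the contradiction between Lemma~\ref{l:badclinprodgen} (validity of $\badpathf{P}$ transferred to $\F$ via preservation under p-morphic images) and Lemma~\ref{l:badclinclgen}, which rules out bad paths so that Lemma~\ref{l:bad} applies. The paper leaves this combination implicit, citing precisely these four results, so your write-up is just the intended argument made explicit.
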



\section{Infinite canonical axiomatisation for $\dsqxd$}\label{dxdsq}

In this section we prove Theorem~\ref{t:sqax}
using the proof pattern described in \S\ref{proofmethod} (for the class $\Cc$ of all square products of difference frames).
So we will define a recursive
set $\Sigma_{\dsqxd}$ of generalised Sahlqvist formulas containing $\Sigma_{\dxd}$,
 and prove that the following hold:
\begin{enumerate}
\item
All formulas in $\Sigma_{\dsqxd}$ are valid in every square product of difference frames.
\item
For every countable rooted frame $\F$ that is not the p-morphic image of some square product of 
difference frames, there is some $\phi_\F\in\Sigma_{\dsqxd}$ such that $\phi_\F$ is not valid in $\F$.
\end{enumerate}
To begin with, if $\F$ is a countable rooted frame such that it is not the p-morphic image of a product of difference frames at all, then there is some $\phi_\F\in\Sigma_{\dxd}$ such that $\phi_\F$ is not valid in $\F$. So from now on we assume that $\F$ is the p-morphic image of a product of difference frames.
In particular, $\F\models\comm$, and so $\F$ is a \grid{} by Lemma~\ref{l:grid}.
We call a countable \grid{} $\F$ \emph{square-bad} if it is the p-morphic image of a product of difference frames, but 
it is not the p-morphic image of a square product of difference frames.

In \S\ref{sec:sqbad} below we classify square-bad \grids{} into several categories. Then in
\S\ref{sformulassq} we define the generalised Sahlqvist formulas in $\Sigma_{\dsqxd}$,
for each such category.
Finally, in \S\ref{dxdsqsahl} we discuss Corollary~\ref{co:sqS}, that is, why $\dsqxd$ is in fact Sahlqvist axiomatisable.


\subsection{Good \grids{} that are not p-morphic images of squares}\label{sec:sqbad}

Throughout, we suppose that $\F=(W,\Rh,\Rv)$ is square-bad and
represented as a \grid{} as $(X,Y,g)$. 
By Corollary~\ref{co:hassolution}, $\F$ contains no impossible \clusters,  and the set $\con{\F}$ of constraints (as defined in \eqref{conset}) does have a solution. 
By Claim~\ref{c:solpm},
we have 
$\sum_{y\in Y}\xi(y)\ne\sum_{x\in X}\xi(x)$ 
for any solution $\xi$ of $\con{\F}$
(see Figs.~\ref{f:sqbad} and \ref{f:bothways} for some examples).
In particular,
\begin{equation}\label{noinf}
\mbox{there is no solution $\xi$ of $\con{\F}$ such that }\sum_{x\in X}\xi(x)=\sum_{y\in Y}\xi(y)=\infy.
\end{equation}
\begin{figure}
\begin{center}
\setlength{\unitlength}{.035cm}
\begin{picture}(100,100)
\multiput(0,0)(0,50){3}{\line(1,0){100}}
\multiput(0,0)(50,0){3}{\line(0,1){100}}

\multiput(70,3)(0,7.5){6}{\ii}

\put(-12,95){$\F$}
\put(20,77){\ir}
\put(20,65){\rr}

\put(70,65){\rr}
\put(70,77){\rr}

\put(20,27){\ir}
\put(20,15){\rr}

\put(20,-10){$x_1$}
\put(70,-10){$x_2$}
\put(105,20){$y_1$}
\put(105,70){$y_2$}

\end{picture}

\[
\con{\F}=\bigl\{(x_1\geq 3),\,(x_2\geq 4),\,(x_2=6),\,
(y_1=6),\,(y_1\geq 4),\,
(y_2\geq 4),\,(y_1\geq 2x_1),\,(y_2\geq 2x_1)\bigr\}.
\]
%
So, say, $\xi^\ast(x_1)=3$, $\xi^\ast(x_2)=\xi^\ast(y_1)=\xi^\ast(y_2)=6$ is a solution,
but $\xi(y_1)+\xi(y_2)\geq 12> 9=\xi(x_1)+\xi(x_2)$, for every solution $\xi$.
\end{center}
\caption{An example of a square-bad \grid{} $\F$.}\label{f:sqbad}
\end{figure}

%
%
%
%
%
%

%



We begin with introducing some notions that will help us to deal with `upper bound' constraints.
For any $n\in\nNp$ and any $z\in X\cup Y$, we call $z$ $n$-\emph{strict} if 
$(z=n)\in\con{\F}$.
We call $z$ \emph{strict} if it is $n$-strict for some $n\in\nNp$. 
Now recall the digraph $\gf$ from \eqref{gedges}.
We call a $z\in X\cup Y$ \emph{bounded} if there is a path in $\gf$ from some strict $z'$ to $z$,
and \emph{unbounded} otherwise. (In particular, if $z$ is strict then $z$ is bounded.)
Given a bounded $z$ and a path $P$ in $\gf$ of the form
$z_0\to^{\lambda_1} z_1\dots\to^{\lambda_m} z_m$ where $z_m=z$ and $z_0$ is $n$-strict for some $n\in\nNp$, we let
\[
\weight(P)=\left\{
\begin{array}{ll}
n, & \mbox{if $m=0$},\\[3pt]
\bigl\lfloor\frac{n}{\mbox{\scriptsize $\lambda_1\cdot$ $\dots$ $\cdot\lambda_m$}}\bigr\rfloor,
& \mbox{if $m>0$}.
\end{array}
\right.
\]
Then for every bounded $z\in X\cup Y$, we let
\[
\ub{\F}{z}=\min\{\weight(P) : \mbox{$P$ is a path in $\gf$ from some strict node to $z$}\}.
\]
Note that since $\con{\F}$ has a solution, $\ub{\F}{z}=n$ for any $n$-strict node $z$. Also,
it is easy to see that
%
\begin{align}
\label{ubmax}
& \mbox{$ \xi(z)\leq\ub{\F}{z}\in\nNp$, for all bounded $z\in X\cup Y$ and all solutions $\xi$ of $\con{\F}$;}\\
\label{ubmin}
& \mbox{there is a solution $\xi$ of $\con{\F}$ such that $\xi(z)=\infy$ for all unbounded $z\in X\cup Y$.}
\end{align}
%
For any bounded $z$, we choose a simple path $\Pub{\F}{z}$ from $z$ to a strict node such
that $\weight\bigl(\Pub{\F}{z}\bigr)=\ub{\F}{z}$, and if $z$ is strict then $\Pub{\F}{z}$ consists of just $z$.



%
%
%
%
%




\begin{lemma}\label{l:badsqgrid}
Suppose $\F=(X,Y,g)$ is a \grid{} such that $\con{\F}$ is defined and has a solution, but
$\sum_{y\in Y}\xi(y)\ne\sum_{x\in X}\xi(x)$ for any solution $\xi$ of $\con{\F}$.
Then one of the following cases holds:
%
\begin{itemize}
\item[{\rm (i)}]
$X\cup Y$ is finite, and every $z\in X\cup Y$ is bounded.

\item[{\rm (ii)}]
$X\cup Y$ is finite, and
either (a) every $x\in X$ is bounded, there is some unbounded $y^\star\in Y$,
and $\sum_{x\in X }\xi(x) <\sum_{y\in Y}\xi(y)$ for every solution $\xi$ of $\con{\F}$;
or (b) 
every $y\in Y$ is bounded, there is some unbounded $x^\star\in X$,
and $\sum_{y\in Y }\xi(y) <\sum_{x\in X}\xi(x)$ for every solution $\xi$ of $\con{\F}$.

\item[{\rm (iii)}]
Either (a)
$X$ is finite, every $x\in X$ is bounded, $Y$ is infinite, and 
there is a finite subgrid $\F^-=(X,Y^-,g^-)$ of $\F$ such that
$\sum_{x\in X }\xi(x) <\sum_{y\in Y^-}\xi(y)$ for every solution $\xi$ of $\con{\F^-}$;
or (b) 
$Y$ is finite, every $y\in Y$ is bounded, $X$ is infinite, and 
there is a finite subgrid $\F^-=(X^-,Y,g^-)$ of $\F$ such that
$\sum_{y\in Y }\xi(y) <\sum_{x\in X^-}\xi(x)$ for every solution $\xi$ of $\con{\F^-}$.
\end{itemize}
\end{lemma}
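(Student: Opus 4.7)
I dispose of the lemma by a case split on the finiteness of $X$ and $Y$, setting up two preliminary reductions drawn from \eqref{noinf}, \eqref{ubmax} and \eqref{ubmin}. First, unbounded variables cannot occur on both sides: were $x^\star \in X$ and $y^\star \in Y$ both unbounded, \eqref{ubmin} would supply a solution $\xi$ with $\xi(x^\star) = \xi(y^\star) = \infy$, forcing both sums to equal $\infy$ and contradicting \eqref{noinf}. Second, $X$ and $Y$ cannot both be infinite: since every solution gives every variable a value $\geq 1$, this would again force both sums to equal $\infy$. Hence only two configurations remain: (A) $X \cup Y$ finite, or (B) exactly one of $X, Y$ infinite.

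In case (A), if all variables are bounded we obtain (i). Otherwise the unbounded variables lie on exactly one side; by symmetry assume on $Y$, so every $x \in X$ is bounded and some $y^\star \in Y$ is unbounded. Setting $K := \sum_{x \in X} \ub{\F}{x}$, by \eqref{ubmax} every solution $\xi$ satisfies $\sum_{x} \xi(x) \leq K$. The remaining task is to establish the directional inequality $\sum_x \xi(x) < \sum_y \xi(y)$ for every solution, placing us in (ii)(a). I argue by contradiction: suppose some solution $\xi^-$ satisfies $\sum_y \xi^- < \sum_x \xi^- \leq K$ (strict by square-badness). The key structural observation is that any $z'$ with $(z' \geq \lambda y^\star) \in \con{\F}$ must itself be unbounded (otherwise extending a strict-to-$z'$ path by the edge $z' \to^\lambda y^\star$ would bound $y^\star$) and hence lies in $Y$ by the first preliminary reduction. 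Consequently, increases forced by stepping up $\xi(y^\star)$ propagate only among $y$-variables, leaving $\sum_x \xi$ untouched while strictly increasing $\sum_y \xi$. Combined with \eqref{ubmin} (which supplies a solution with $\sum_y = \infy$) and closure of the solution set under pointwise join, this produces a chain of integer-valued solutions whose value of $\sum_y \xi - \sum_x \xi$ starts negative and ends at $+\infy$; a careful choice of minimal step at each stage yields a solution with $\sum_y = \sum_x$, contradicting square-badness. Case (ii)(b) follows by symmetry.

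In case (B), WLOG $Y$ is infinite. Then $\sum_y \xi(y) = \infy$ for every solution, so square-badness rules out $\sum_x \xi(x) = \infy$; hence no $x$ can be unbounded (else \eqref{ubmin} gives a solution with $\sum_x = \infy$) and $X$ must itself be finite (else $\sum_x \geq |X| = \infy$). Setting $K := \sum_{x \in X} \ub{\F}{x}$, I build the finite subgrid $\F^- = (X, Y^-, g^-)$ by choosing $Y^- \subseteq Y$ finite so that (a) $Y^-$ contains every $y$-variable appearing on the chosen witnessing path $\Pub{\F}{x}$ for each $x \in X$, ensuring $\ub{\F^-}{x} \leq \weight(\Pub{\F}{x}) = \ub{\F}{x}$, and (b) $|Y^-| > K$. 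Then for every solution $\xi$ of $\con{\F^-}$ we obtain $\sum_{x \in X} \xi(x) \leq K < |Y^-| \leq \sum_{y \in Y^-} \xi(y)$, yielding (iii)(a); (iii)(b) follows symmetrically.

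The main obstacle is making the directional step in (ii)(a) rigorous: one must show that the integer path of solutions from a hypothetical $\xi^-$ with $\sum_y < \sum_x$ toward a witness with $\sum_y = \infy$ actually passes through the forbidden equality $\sum_y = \sum_x$, rather than leaping over it in a single cascade of forced propagations. Controlling the size of each propagation step --- possibly by choosing at each stage an unbounded $y$-variable whose increment triggers minimal downstream change --- is where the work concentrates; the join-closure of the solution set together with the fact that propagation is confined to $Y$ should be the decisive ingredients.
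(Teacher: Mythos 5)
Your treatment of cases (i) and (iii), and your two preliminary reductions, are correct and essentially identical to the paper's argument. The genuine gap is in case (ii): you yourself flag the directional inequality $\sum_{x\in X}\xi(x)<\sum_{y\in Y}\xi(y)$ as "the main obstacle" and leave it unresolved, proposing a chain of joined solutions that must "pass through" the forbidden equality without actually exhibiting the step that lands on it. As written, this part is not a proof.

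Moreover, the propagation machinery you build this on rests on a misreading of $\con{\F}$: every two-variable constraint $(z\geq\lambda z')$ comes from a single \cluster{} $\F^{xy}$ and therefore always pairs one variable from $X$ with one from $Y$ (this is the observation following \eqref{gedges}). So a constraint of the form $(z'\geq\lambda y^\star)$ with $y^\star\in Y$ would force $z'\in X$; since every $x\in X$ is bounded in this case, extending the strict-to-$z'$ path by the edge $z'\to^\lambda y^\star$ would make $y^\star$ bounded --- hence no such constraint exists at all, and your scenario of increments "propagating among $y$-variables" cannot arise. Once you see this, the paper's one-line argument becomes available: the only constraints mentioning $y^\star$ are lower bounds $(y^\star\geq k)$ and $(y^\star\geq\lambda x)$, so for any solution $\xi$ one may increase $\xi(y^\star)$ by any amount and stay a solution. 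If some solution had $\sum_{x\in X}\xi(x)\geq\sum_{y\in Y}\xi(y)$, both sums would be finite (by \eqref{ubmax} and finiteness of $X$), and bumping $\xi(y^\star)$ by exactly the deficit $\sum_{x\in X}\xi(x)-\sum_{y\in Y}\xi(y)$ would produce a solution with equal sums, contradicting the hypothesis. No joins, chains, or minimal-step analysis are needed.
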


\begin{proof}
Observe that since $\con{\F}$ has a solution, 
at least one of $X$ and $Y$ must be finite by \eqref{noinf}. 
So suppose, say, that $X$ is finite.  Suppose also that there is some
unbounded $x\in X$. Then by \eqref{ubmin} and 
 \eqref{noinf} we obtain that $Y$ is finite and every $y\in Y$ is bounded. So at least one of $X$ or $Y$ must be such that it is finite and all its members
are bounded. 
Suppose, say, that $X$ is finite and every $x\in X$ is bounded. There are three cases:

If $Y$ is finite and every $y\in Y$ is bounded, then we have Case~(i). 

If $Y$ is finite and there is some unbounded $y^\star\in Y$, then (as every $x\in X$ is bounded)
the only constraints about $y^\star$ in $\con{\F}$ are of the form $(y^\ast\geq k)$ or $(y^\ast\geq \lambda x)$,
for some $k\in\nNp$, $\lambda=1,2$ and $x\in X$.
So for any solution $\xi$ of $\con{\F}$, if we keep $\xi(z)$ for all $z\ne y^\ast$, and increase $\xi(y^\ast)$ arbitrarily, we obtain another solution. Therefore,
we cannot have
that $\sum_{x\in X}\xi(x)\geq \sum_{y\in Y}\xi(y)$ for any solution $\xi$,
and so we have Case~(ii). 

Finally, suppose that $Y$ is infinite. Then let
\[
Y'=\{y\in Y : \mbox{$y$ occurs in $\Pub{\F}{x}$ for some $x\in X$}\}.
\]
Clearly, $Y'$ is finite. Now take any finite $Y^-\supseteq Y'$ 
such that $|Y^-|>\sum_{x\in X}\ub{\F}{x}$, and
let $\F^-=\bigl(X,Y^-,g|_{X\mprod Y^-}\bigr)$. By \eqref{ubmax}, 
for every solution $\xi$ of $\con{\F^-}$, we have
\[
\sum_{x\in X}\xi(x)\leq\sum_{x\in X}\ub{\F^-}{x}=\sum_{x\in X}\ub{\F}{x}<|Y^-|\leq\sum_{y\in Y^-}\xi(y),
\]
and so we have Case~(iii). 
\end{proof}




\subsection{Generalised Sahlqvist formulas}\label{sformulassq}

In this subsection, we will eliminate the reasons 
for a countable \grid{} $\F$ being square-bad. 
For each of the cases described in Lemma~\ref{l:badsqgrid}, we use a different generalised Sahlqvist formula $\phi^\textit{sq}_\F$.
Throughout this subsection, we assume that
$\F=(W,\Rh,\Rv)$ is 
represented as a \grid{} as $(X,Y,g)$,
$\con{\F}$ is defined and has a solution, but
$\sum_{y\in Y}\xi(y)\ne\sum_{x\in X}\xi(x)$ for any solution $\xi$ of $\con{\F}$.
In \S\ref{s:casefin} and \S\ref{s:caseinf} we will discuss the cases when $X\cup Y$ is finite and infinite,  respectively.


\subsubsection{$X\cup Y$ is finite}\label{s:casefin}

If $X\cup Y$ is finite then,
by Lemma~\ref{l:badsqgrid}, at least one of $X$ and $Y$ is such that all its members are bounded.
Suppose, say, that every $x\in X$ is bounded, and let $\yub$ be the set of bounded members in $Y$.
 (The case when every $y\in Y$ is bounded is similar.)

We will define a formula $\solf$ that is satisfiable in $\F$ (Lemma~\ref{l:solfsat}),
 and `forces' a solution of $\con{\F}$
when satisfied in a product of difference frames (Lemma~\ref{l:solgen}).
The formula $\solf$ will consist of three conjuncts:
$\antefone$ and $\antefthree$ will describe the respective upper and lower bound constraints on any possible solution, while $\anteftwo$ will describe the interactions among switch \clusters{} in $\F$.
We also want $\solf$ to be a generalised
Sahlqvist antecedent, and so it is a problem that the digraph $\gf=(X\cup Y,E_\F)$ might contain cycles.
The following claim says that we can always take a suitable acyclic subgraph of it:

\begin{claim}\label{c:ubgraph}
If $\F$ is finite and every $x\in X$ is bounded, then
there is an acyclic subgraph
 $\hfub=(X\cup Y,\efub)$ of $\gf=(X\cup Y,E_\F)$ such that the following hold:
 \begin{itemize}
 \item[{\rm (i)}]
 Every initial node in $\hfub$ is either strict or belongs to $Y-\yub$.
  \item[{\rm (ii)}]
 $\efub$ contains all the $\to^2$-edges in $E_\F$.
 \item[{\rm (iii)}]
For every edge $z\to^1z'$ in $E_\F$,
 there is an undirected path
 %
 %
  between $z$ and $z'$ in $\hfub$ such that all edges in the path are $\to^1$ edges.
 \end{itemize}
\end{claim}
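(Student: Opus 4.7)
The plan is to build $\hfub$ in two stages: first include every $\to^2$-edge of $\gf$, then for each undirected connected component $K$ of the $\to^1$-subgraph of $\gf$ add an undirected spanning tree $T_K$ of $K$ oriented away from a carefully chosen root $a_K$. Because $\to^1$ is symmetric in $E_\F$, each tree edge can be placed in either orientation. I rely on two preliminary observations: (a)~the subgraph of $\to^2$-edges is already acyclic, since any directed $\to^2$-cycle $z_0\to^2\cdots\to^2 z_0$ of length $m$ would force $\xi(z_0)\geq 2^m\xi(z_0)$ for every solution $\xi$ of $\con{\F}$, hence $\xi(z_0)=\infy$, yet any such cycle must pass through some $x\in X$ (edges alternate between $X$ and $Y$), contradicting $\xi(x)\leq\ub{\F}{x}<\infy$ given by \eqref{ubmax}; and (b)~every $y\in Y-\yub$ is isolated from the bounded nodes in $\gf$ (with respect to both $\to^1$ and $\to^2$), because any edge from a bounded $x$ to $y$ would extend a witness path from a strict node to $y$ and force $y\in\yub$.

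The crux is a combinatorial lemma: every undirected $\to^1$-component $K$ of bounded nodes contains either a strict node or a node with an incoming $\to^2$-edge from outside $K$. To see this, pick any $z\in K$ and a witness path from some strict $z^\star$ to $z$ in $\gf$; either $z^\star\in K$ and we are done, or the earliest edge of the path entering $K$ cannot be a $\to^1$-edge (for then its source would already lie in $K$), so it is a $\to^2$-edge and its endpoint in $K$ witnesses the claim. Designate such a witness as the root $a_K$ of a spanning tree $T_K$ of $K$; singleton components (including each $y\in Y-\yub$) contribute no edges and the lone node serves as its own root.

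The three conditions and acyclicity can then be verified. Conditions (ii) and (iii) follow directly from the construction. For (i), an initial node of $\hfub$ must be a root $a_K$ with no incoming $\to^2$-edge, which by the choice of $a_K$ forces either $a_K$ to be strict or $K$ to be the singleton component of some $y\in Y-\yub$. For acyclicity, any directed cycle in $\hfub$ containing a $\to^2$-edge would again force an infinite $\xi$-value on a bounded node, while a cycle built entirely from $\to^1$-edges would lie inside a single rooted tree $T_K$, which is impossible. The main obstacle is the combinatorial lemma, which is precisely where the blanket assumption that every $x\in X$ is bounded is used.
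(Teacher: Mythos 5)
Your construction is essentially the paper's: take all $\to^2$-edges together with, for each $\to^1$-component (these are exactly the strongly connected components of $\gf$), a spanning tree oriented away from a root chosen to be either strict or to have an incoming $\to^2$-edge; your ``combinatorial lemma'' makes explicit the existence of such a root, which the paper's proof leaves implicit. One small misstatement: a node $y\in Y-\yub$ need not be isolated, since it may have \emph{outgoing} $\to^2$-edges into $X$ (the paper notes this explicitly), but your argument only uses the correct half --- that no edge enters such a $y$ and no $\to^1$-edge is incident to it --- so nothing breaks.
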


\begin{proof}
Observe that by \eqref{noinf}, \eqref{nulabel} and \eqref{minS},
no strongly connected component $\scc$ in $\gf$ contain any $\to^2$ edge.
So all cycles in $\gf$ consists of $\to^1$ edges only.
Observe also that if $y\in Y-\yub$, then (as all $x\in X$ are bounded) there are no edges in $E_\F$ of the form $x\to^\lambda y$ for any $x\in X$.
So either $y$ is an isolated node in $\gf$,
or there is an edge $y\to^2 x$ in $E_\F$ for some (maybe several) $x\in X$.
In any case, 
the strongly connected component $y$ belongs to consists of just $y$ alone.

We give an algorithm for how to construct $\efub$ from $E_\F$.
For every strongly connected component $\scc=(S,E_\scc)$ containing only bounded nodes, we define 
step-by-step a subset $E_\scc^-$ of $E_\scc$ such that $(S,E_\scc^-)$ is acyclic.
First, we choose a node $z_\scc$ in $\scc$ as follows. If there is a strict node in $\scc$, then 
let $z_\scc$ be any of the strict nodes in $\scc$. If there is no strict node in $\scc$,
then let $z_\scc$ be any node in $\scc$ such that there is a $\to^2$ edge in $\gf$ starting at some bounded node and ending at $z_\scc$. Let $X_0=\{z_\scc\}$ and $E_0=\emptyset$.
In the inductive step,
take some $z\in S-X_n$, and consider any path $P$ within $\scc$ from some node in $X_n$ to $z$ such that no other node in $P$ is in $X_n$.
Let $E_{n+1}$ consist of the edges in $E_n$ plus the edges in $P$, and let $X_{n+1}$ be obtained from $X_n$ by adding all the nodes in $P$. Clearly,
if $(X_n,E_n)$ is acyclic, then $(X_{n+1},E_{n+1})$ is acyclic as well. We do this until
$X_i=\scc$ for some $i$, and let $E_\scc^-=E_i$.

Now let $\efub$ consist of the edges in $E_\scc^-$ for each $\scc$, plus all 
the $\to^2$-edges in $E_\F$. It is easy to check that $\hfub$ is as required.
\end{proof}


\paragraph{The formula $\antefone$}
We will describe the `bounded' part of $\hfub$, while also
keeping track of the connections with the unbounded nodes in $Y-\yub$.

%
%
%
%
To begin with,
we need to describe that the rows and columns of the grid-structure are pairwise disjoint.
So for every $x\in X$ and every $y\in Y$, we introduce respective fresh propositional variables $\xvar$ and
$\yvar$, and define the formulas
\begin{equation}
\label{xybar}
\xvarbar: \quad\Bv^+\xvar\land\bigwedge_{\substack{x'\in X\\ x'\ne x}}\neg\xvar'
\qquad\mbox{and}\qquad
\yvarbar: \quad\Bh^+\yvar\land\bigwedge_{\substack{y'\in Y\\ y'\ne y}}\neg\yvar'.
\end{equation} 

Next, we need to describe strict nodes in $X\cup\yub$.
Observe that, for every $n\in\nNp$ and every $n$-strict $x\in X$, there exist some $y'\in Y$ 
and distinct $\Rh$-irreflexive points $a_1^x,\dots,a_n^x$ in $\F^{xy'}$.
Similarly, for every $n$-strict $y\in\yub$, there exist some $x'\in X$ 
and distinct $\Rv$-irreflexive points $a_1^y,\dots,a_n^y$ in $\F^{x'y}$.
Thus,
for every $n\in\nNp$ and every $n$-strict $z\in X\cup\yub$, 
we introduce fresh propositional variables $\hvar{z}{i}$ for $i=1,\dots,n$.

We also need to describe the switch \clusters{} in $\hfub$. To simplify notation, for all $z,z'\in X\cup Y$,
we will write
\[
\mbox{$z\to z'$\quad iff\quad $z\to^\lambda z'$ is an edge in $\hfub$ for some $\lambda$,}
\]
and  let 
\[
\swall=\{(x,y)\in X\mprod Y : 
\mbox{ either $x\to y$ or $y\to x$}\}.
\]
Observe that for every $(x,y)\in\swall$, $\F^{xy}$ is a switch \cluster. Therefore, 
$\F^{xy}$ contains a point $c^{xy}$ such that 
(i) $c^{xy}$ is $\ \ri$ when $\F^{xy}$ is of type (\bchvsw) (that is, $x\to^2 y$ is an edge in $\hfub$);
(ii) $c^{xy}$ is $\ \ir$ when $\F^{xy}$ is of type (\bcvhsw) (that is, $y\to^2 x$ is an edge in $\hfub$);
and (iii) $c^{xy}$ is $\ \ii$ when $\F^{xy}$ is of type (\bceqsw) (that is, either $x\to^1y$ or $y\to^1 x$ is an edge in $\hfub$).
Thus, for every $(x,y)\in\swall$, we introduce a fresh propositional variable $\cvar{x}{y}$, and 
define the formula
\begin{equation}\label{cformula}
\ctpp{x}{y}:\ \left\{
\begin{array}{ll}
\displaystyle
\xvarbar\land\yvarbar\land
\neg\cvar{x}{y}\land\Bv\cvar{x}{y}\,\land 
\Dh(\xvarbar\land\yvarbar\land\neg\cvar{x}{y}\land\Bv\cvar{x}{y}),
& \mbox{if $x\to^2 y$ is an edge in $\hfub$},\\[5pt]
\displaystyle
\xvarbar\land\yvarbar\land
\neg\cvar{x}{y}\land\Bh\cvar{x}{y}\,\land \Dv(\xvarbar\land\yvarbar\land\neg\cvar{x}{y}\land\Bh\cvar{x}{y}),
& \mbox{if $y\to^2 x$ is an edge in $\hfub$},\\[5pt]
\displaystyle
\xvarbar\land\yvarbar\land
\neg\cvar{x}{y}\land\Bh\cvar{x}{y}\land\Bv\cvar{x}{y},
& \mbox{if $x\to^1 y$ or $y\to^1 x$}\\
& \hspace*{1.6cm}\mbox{is an edge in $\hfub$}.
\end{array}
\right.
\end{equation}

Let $\hfubm$ be the induced subgraph of $\hfub$ on its bounded nodes, that is, 
on node set $X\cup\yub$.
Starting at each strict node as root, we unravel $\hfubm$ into a forest 
(a disjoint union of directed rooted trees)
 $\tfub$, where each branch of each tree is continued until it reaches either a strict node different from the root or, if there is no such on the branch, a final node in $\hfubm$.
 So for each node $q$ in $\tfub$ there is a unique $z\in X\cup\yub$ such that $q$ is a(n unravelled) copy of $z$. (Each $z\in X\cup\yub$ might have many different copies.)
 For every node $q$ in $\tfub$, we let $\treec{q}$ denote the set of its children in $\tfub$.
 
For every node $q$ in $\tfub$, now we define a formula $\treef{q}$ by induction on the structure of $\tfub$
starting at its leaves:

\begin{itemize}
\item
If $q$ is not a root in $\tfub$, then there is a unique $q^\ast$ with $q\in\treec{q^\ast}$.
There are two cases:

If $q$ is a copy of $x\in X$ and $q^\ast$ is a copy of $y^\ast\in\yub$, then let
%
\[
\treef{q}:\quad\ctpp{x}{y^\ast}\land \bigwedge_{q'\in\treec{q}}\Dh\treef{q'}
\land\bigwedge_{\substack{y'\in\yub,\;y'\ne y^\ast\\ y'\to x}}\Dv\neg\cvar{x}{y'}
\land\bigwedge_{\substack{y'\notin\yub\\ y'\to x}}
\Dv\ctpp{x}{y'}.
\]

If $q$ is a copy of $y\in\yub$ and $q^\ast$ is a copy of $x^\ast\in X$, then let
\[
\treef{q}:\quad\ctpp{x^\ast}{y}\land \bigwedge_{q'\in\treec{q}}\Dv\treef{q'}
\land\bigwedge_{\substack{x'\ne x^\ast\\ x'\to y}}\Dh\neg\cvar{x'}{y}.
\]

%

\item
If $q$ is a root in $\tfub$, then $q$ is a copy of some $n$-strict $z\in X\cup\yub$ for some $n\in\nNp$.
Again, there are two cases:

If $z$ is some $x\in X$, then for each $i=1,\dots,n$, let
%
\[
\treefi{i}{q}:\quad\xvarbar\land\neg\hvar{x}{i}\land\Bh\hvar{x}{i}\land
\bigwedge_{q'\in\treec{q}}\Dv\treef{q'}\land
\bigwedge_{\substack{y'\in\yub\\ y'\to x}}\Dv\neg\cvar{x}{y'}
\land\bigwedge_{\substack{y'\notin\yub\\ y'\to x}}
\Dv\ctpp{x}{y'},
\]
%
and then let
\[
\treef{q}:\quad\bigwedge_{i=1}^{n}\Dh^+\treefi{i}{q}.
\]

If $z$ is some $y\in\yub$, then for each $i=1,\dots,n$, let
\[
\treefi{i}{q}:\quad
\yvarbar\land\neg\hvar{y}{i}\land\Bv\hvar{y}{i}\land
\bigwedge_{q'\in\treec{q}}\Dh\treef{q'}\land
\bigwedge_{\substack{x'\\ x'\to y}}\Dh\neg\cvar{x'}{y},
\]
and then let
\[
\treef{q}:\quad\bigwedge_{i=1}^{n}\Dv^+\treefi{i}{q}.
\]
%
%
%
%
\end{itemize}
Finally,
let $\antefone$ be the conjunction of $\Dh^+\Dv^+\treef{q}$ for all roots $q$ in the forest
$\tfub$ (see Example~\ref{e:sqbad} below).

%
%
%

\paragraph{The `interaction' formula $\anteftwo$} 
%
%
We use the variables introduced for the formula $\antefone$.
For every $x\in X$ and every $y\in\yub$, we define the respective formulas
\begin{align}
\label{swfx}
& \swf{x}:\left\{
\begin{array}{ll}
\displaystyle
\Bh^+\Bv^+\Bigl(\bigwedge_{i=1}^{n}\hvar{x}{i}\to\bigwedge_{\substack{y'\\ x\to y'\ \mbox{\scriptsize or}\ y'\to x}}\Bv^+\cvar{x}{y'}\Bigr), & \mbox{if $x$ is $n$-strict},\\[5pt]
\displaystyle
\Bh^+\Bv^+\bigwedge_{\substack{y',y''\\\ y'\to x\to y''}}\Bigl(\Bv^+\cvar{x}{y'}\to\Bv^+\cvar{x}{y''}\Bigr),
& \mbox{if $x$ is not strict},\\
\end{array}
\right.
\end{align}

\begin{align}
\label{swfy}
& \swf{y}:\left\{
\begin{array}{ll}
\displaystyle
\Bh^+\Bv^+\Bigl(\bigwedge_{i=1}^{n}\hvar{y}{i}\to\bigwedge_{\substack{x'\\ y\to x'\ \mbox{\scriptsize or}\ x'\to y}}\Bh^+\cvar{x'}{y}\Bigr), & \mbox{if $y$ is $n$-strict},\\
\displaystyle
\Bh^+\Bv^+\bigwedge_{\substack{x',x''\\\ x'\to y\to x''}}\Bigl(\Bh^+\cvar{x'}{y}\to\Bh^+\cvar{x''}{y}\Bigr),
& \mbox{if $y$ is not strict},\\
\end{array}
\right.
\end{align}
and let $\anteftwo$ be the conjunction of $\swf{z}$, for all $z\in X\cup\yub$ (see Example~\ref{e:sqbad} below).

%

\paragraph{The formula $\antefthree$} 
We use the $\cvar{x}{y}$ variables introduced for  $\antefone$, and will also introduce some fresh variables.

Observe that by \eqref{minz} and \eqref{noinf}, 
for every 
$z\in X\cup Y$, we have $\minz{z}\in\nNp$ and either \mbox{$\bigl(z= \minz{z}\bigr)$}$\in\con{\F}$ or
$\bigl(z\geq \minz{z}\bigr)\in\con{\F}$. So if $x\in X$, then there is $y_x\in Y$ such that 
$\sizeh\bigl(\F^{xy_x}\bigr)=\minz{x}$, and so there are $\Rh$-reflexive points 
$b_1^\circ(x)$, $\dots$, $b_{\nor{x}}^\circ(x)$ and
$\Rh$-irreflexive points $b_1^\bullet(x)$, $\dots$, $b_{\noi{x}}^\bullet(x)$ in $\F^{xy_x}$ such that
$2\nor{x}+\noi{x} =\minz{x}$.
Similarly, 
if $y\in Y$, then there is $x_y\in X$ such that $\sizev\bigl(\F^{x_yy}\bigr)=\minz{y}$,
and so there are $\Rv$-reflexive points $b_1^\circ(y),\dots,b_{\nor{y}}^\circ(y)$ and
$\Rv$-irreflexive points $b_1^\bullet(y),\dots,b_{\noi{y}}^\bullet(y)$ in $\F^{x_yy}$ such that
$2\nor{y}+\noi{y} = \minz{y}$.

Now recall the function $\smin$ from \eqref{lbdef}. As $\smin$ is a solution of $\con{\F}$ by
Claim~\ref{c:lbsolution}, it follows from \eqref{noinf} that  $\smin(z)\in\nNp$, for every $z\in X\cup Y$.
We define
\begin{align*}
 \xlb & =\bigl\{x\in X : \mbox{$x$ is non-strict and } \minz{x}=\smin(x)\bigr\},\\
 \ylb & =\bigl\{y\in Y : \mbox{$y$ is non-strict and }\minz{y}=\smin(y)\bigr\}.
\end{align*}
%
%
For every $z\in\xlb\cup\ylb\cup(Y-\yub)$,
we introduce fresh propositional variables $\bvar{j}^\circ(z)$ for $j=1,\dots,\nor{z}$, and
$\bvar{s}^\bullet(z)$ for $s=1,\dots,\noi{z}$,
and define the formulas
\begin{align}
\label{bvarnosp2}
\btp{j}^\circ(z) :\quad  & \zvarbar\land\bvar{j}^\circ(z)\land\bigwedge_{\substack{t=1\\ t\ne j}}^{\nor{z}}\neg\bvar{t}^\circ(z)
\land\bigwedge_{t=1}^{\noi{z}}\neg\bvar{t}^\bullet(z),\quad\mbox{for all $j=1,\dots,\nor{z}$},\\[5pt]
\label{bvarnosp22}
\btp{s}^\bullet(z) :\quad & \zvarbar\land\bvar{s}^\bullet(z)\land\bigwedge_{\substack{t=1\\ t\ne s}}^{\noi{z}}\neg\bvar{t}^\bullet(z)
\land\bigwedge_{t=1}^{\nor{z}}\neg\bvar{t}^\circ(z),\quad\mbox{for all $s=1,\dots,\noi{z}$}.
\end{align}
For every $x\in\xlb$, we define
\begin{multline*}
\lbf{x}:\ 
\bigwedge_{j=1}^{\nor{x}}\Dh^+\Bigl[\btp{j}^\circ(x)\land\bigwedge_{\substack{y'\in\yub\\ y'\to x}}\Dv^+\neg\cvar{x}{y'}\land
\Dh\Bigl(\btp{j}^\circ(x)\land\bigwedge_{\substack{y'\in\yub\\ y'\to x}}\Dv^+\neg\cvar{x}{y'}\Bigr)\Bigr]\\[5pt]
\land\bigwedge_{s=1}^{\noi{x}}\Dh^+\Bigl(\btp{s}^\bullet(x)\land\bigwedge_{\substack{y'\in\yub\\ y'\to x}}\Dv^+\neg\cvar{x}{y'}\Bigr),
\end{multline*}
and for every $y\in\ylb\cup(Y-\yub)$, we define
\begin{multline*}
\lbf{y}:\
\bigwedge_{j=1}^{\nor{y}}\Dv^+\Bigl[\btp{j}^\circ(y)\land\bigwedge_{\substack{x'\\ x'\to y}}\Dh^+\neg\cvar{x'}{y}\land
\Dv\Bigl(\btp{j}^\circ(y)\land\bigwedge_{\substack{x'\\ x'\to y}}\Dh^+\neg\cvar{x'}{y}\Bigr)\Bigr]\\[5pt]
\land\bigwedge_{s=1}^{\noi{y}}\Dv^+\Bigl(\btp{s}^\bullet(y)\land\bigwedge_{\substack{x'\\ x'\to y}}\Dh^+\neg\cvar{x'}{y}\Bigr).
\end{multline*}
Let $\antefthree$ be the conjunction of $\Dh^+\Dv^+\lbf{z}$, for all $z\in\xlb\cup\ylb\cup(Y-\yub)$
(see Example~\ref{e:sqbad} below).

\paragraph{The formula $\solf$}
We let
\[
\solf:\quad\antefone\land\anteftwo\land\antefthree.
\]

\begin{lemma}\label{l:solutiondec}
It is decidable whether a bimodal formula is of the form $\solf$ for some finite \grid{} $\F=(X,Y,g)$ 
for which every $X\cup Y$ is bounded,
$\con{\F}$ is defined and has a solution, but $\sum_{y\in Y}\xi(y)\ne\sum_{x\in X}\xi(x)$ for any solution $\xi$ of $\con{\F}$.

It is also decidable whether a bimodal formula is of the form $\solf$ for some finite \grid{} $\F=(X,Y,g)$ 
for which every $x\in X$ is bounded, there is some unbounded $y^\star\in Y$,
$\con{\F}$ is defined and has a solution, but 
$\sum_{x\in X }\xi(x)$ $<\sum_{y\in Y}\xi(y)$ for every solution $\xi$ of $\con{\F}$.
\end{lemma}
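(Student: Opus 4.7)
The plan is to argue, following the pattern of Lemmas~\ref{l:impossibledec} and \ref{l:badpathdec}, that $\solf$ depends only on a finite amount of data, so that deciding whether a bimodal formula $\phi$ has the form $\solf$ reduces to a finite syntactic check followed by a decidable consistency test.

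Concretely, since $\F=(X,Y,g)$ is finite in both cases, $\solf=\antefone\land\anteftwo\land\antefthree$ is entirely determined by: the finite sets $X$ and $Y$ and the subset $\yub\subseteq Y$; the labelled edge set of the acyclic digraph $\hfub$ from Claim~\ref{c:ubgraph} together with the strict values of nodes in $X\cup\yub$; for each $(x,y)\in\swall$ the type of $\F^{xy}$ among (\bchvsw), (\bcvhsw), (\bceqsw), as recorded by the shape of $\ctpp{x}{y}$ in \eqref{cformula}; and the numbers $\nor{z}$ and $\noi{z}$ for each $z\in\xlb\cup\ylb\cup(Y-\yub)$. Each of these ingredients corresponds to a distinct group of fresh propositional variables occurring in a syntactically identifiable position of $\solf$ (via $\xvar,\yvar$, then $\hvar{z}{i}$, $\cvar{x}{y}$, and $\bvar{j}^\circ(z),\bvar{s}^\bullet(z)$), and the nested $\Dh$/$\Dv$ pattern of the formulas $\treef{q}$ encodes the forest $\tfub$, from which $\hfub$ is recovered by folding copies of strict nodes back to their originals. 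Hence, given $\phi$, if it has the overall conjunctive shape of some $\solf$, these parameters can be extracted uniquely up to renaming of propositional variables.

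Given the extracted parameter list, the decision procedure then checks whether there exists a finite \grid{} $\F$ realising it and satisfying the hypotheses of the case at hand. This reduces to: fixing $X$, $Y$, $\yub$, the strict values, the edges of $\hfub$, and the types of $\F^{xy}$ for $(x,y)\in\swall$ as dictated by $\phi$; then testing whether the remaining \cluster{} types at $(x,y)\notin\swall$ can be chosen so that $\con{\F}$ has a solution but no solution $\xi$ satisfies $\sum_{x\in X}\xi(x)=\sum_{y\in Y}\xi(y)$ (respectively $\sum_{x\in X}\xi(x)\geq\sum_{y\in Y}\xi(y)$). The main obstacle I anticipate is this last test, but it is decidable: by the effective construction of $\smin$ (Lemma~\ref{l:bad}) and of the upper bounds $\ub{\F}{z}$, any solution $\xi$ of $\con{\F}$ satisfies $\smin(z)\leq\xi(z)\leq\ub{\F}{z}$ on bounded nodes, so the balance condition reduces to a finite comparison, while the finitely many possible completions of the remaining \cluster{} types can be enumerated using only bounded-size witnesses.
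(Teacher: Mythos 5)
Your proposal is correct and follows essentially the same route as the paper: $\solf$ is determined by a finite parameter list (the acyclic digraph $\hfub$ with its edge labels and corresponding cluster types, the strict nodes, and the $\minz{z}$, $\nor{z}$, $\noi{z}$ data), which can be read off the formula, after which one decides whether that list is realisable by a finite \grid{} with the stated properties. The paper states this more tersely; your additional detail on bounding solutions between $\smin(z)$ and $\ub{\F}{z}$ on bounded nodes just fills in what the paper leaves implicit.
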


\begin{proof}
It is not hard to check that $\solf$ only depends on
\begin{itemize}
\item
the finite acyclic digraph $\hfub$ and the types of \clusters{} corresponding to its edges,
\item
the values $\minz{z}\in\nNp$ for all nodes $z$ in $\hfub$, and
\item
which nodes in $\hfub$ are strict.
\end{itemize}
An inspection of the proof of Claim~\ref{c:ubgraph} shows that it is decidable whether $\hfub$ is
obtained from some finite edge-labelled digraph $\mathcal{G}$ with designated strict nodes and
$\minz{z}$ values.
 And it is clearly decidable whether
such a $\mathcal{G}$ can be obtained as $\gf$ for some finite \grid{} $\F=(X,Y,g)$ as described.
\end{proof}

  
%

\begin{lemma}\label{l:gensahl}
$\solf$ is a generalised Sahlqvist antecedent.
\end{lemma}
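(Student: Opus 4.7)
The plan is to verify the two defining conditions for a generalised Sahlqvist antecedent: (a) $\solf$ can be built from $\top$, $\bot$, boxed formulas, and negative formulas using $\lor$, $\land$, $\Dh$ and $\Dv$; and (b) its dependency digraph $\mathcal{D}(\solf)$ is acyclic. I would treat $\solf=\antefone\land\anteftwo\land\antefthree$ conjunct by conjunct.

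For condition~(a), the conjuncts $\antefone$ and $\antefthree$ are handled routinely: inspection of the definitions of $\xvarbar$, $\yvarbar$, $\ctpp{x}{y}$, $\treef{q}$, $\btp{j}^\circ(z)$, $\btp{s}^\bullet(z)$ and $\lbf{z}$ shows that the only `atomic' ingredients are boxed atoms (positive literals together with $\Bh p$ and $\Bv p$) and negative subformulas such as $\neg\hvar{x}{i}$, $\neg\cvar{x}{y}$ or $\neg\bvar{t}^\circ(z)$, combined using $\land$, $\Dh$, $\Dv$ and the derived $\Dh^+,\Dv^+$; since every boxed atom is a boxed formula (the case $n=0$ in the definition), this fits the grammar of a potential generalised Sahlqvist antecedent directly. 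The conjunct $\anteftwo$ needs a short normalisation: applying $\Box^+(\alpha\land\beta)\equiv\Box^+\alpha\land\Box^+\beta$ and $\alpha\to\bigwedge_i\chi_i\equiv\bigwedge_i(\alpha\to\chi_i)$ to $\swf{x}$ and $\swf{y}$, each rewrites as a conjunction of formulas of the form $\Box^1\bigl(\psi\to\Box^0 p\bigr)$ where $p$ is a $\cvar{x}{y}$-variable and $\psi$ is a positive conjunction of boxed atoms (the relevant $\hvar{x}{i}$'s in the strict case, or $\cvar{x}{y'}\land\Bv\cvar{x}{y'}$-type conjuncts in the non-strict case); this is the boxed-formula shape.

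For condition~(b), I would first read off the edges of $\mathcal{D}(\solf)$ from the normal form. Boxed atoms contribute nodes but no edges, so all nontrivial edges come from $\anteftwo$ and fall into four families: $\hvar{x}{i}\Rrightarrow\cvar{x}{y'}$ (from $\swf{x}$ with $x$ strict), $\cvar{x}{y'}\Rrightarrow\cvar{x}{y''}$ (from $\swf{x}$ with $x$ non-strict, whenever $y'\to x\to y''$ in $\hfub$), together with the symmetric families $\hvar{y}{i}\Rrightarrow\cvar{x'}{y}$ and $\cvar{x'}{y}\Rrightarrow\cvar{x''}{y}$ coming from $\swf{y}$. Crucially, no edge terminates at any $\hvar{z}{i}$-node, so any cycle must live entirely among the $\cvar{\cdot}{\cdot}$-nodes. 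Invoking Claim~\ref{c:ubgraph}, which supplies the acyclicity of $\hfub$, I would fix a topological order $\rho:X\cup Y\to\nN$ with $\rho(z)<\rho(z')$ whenever $z\to z'$ in $\hfub$, and define $f(\cvar{x}{y}):=\rho(x)+\rho(y)$. For each of the two $\cvar{\cdot}{\cdot}$-to-$\cvar{\cdot}{\cdot}$ edge types, the length-two $\hfub$-path $y'\to x\to y''$ (resp.\ $x'\to y\to x''$) forces $\rho(y')<\rho(y'')$ (resp.\ $\rho(x')<\rho(x'')$), so $f$ strictly increases along every such edge; hence no cycle exists.

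The main obstacle is really the bookkeeping in condition~(a): one must confirm that after distributive rewriting of $\anteftwo$, every surviving implication antecedent really is positive (so that it qualifies as a $\psi_i$ in a boxed formula) and that the head variable ends up at the innermost position under a pure box sequence. Once this shape-check is done, condition~(b) is essentially forced by the acyclicity built into $\hfub$ via Claim~\ref{c:ubgraph}, together with the rank function $\rho(x)+\rho(y)$.
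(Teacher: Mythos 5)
Your proof is correct, and for the verification that $\solf$ is a \emph{potential} generalised Sahlqvist antecedent it coincides with the paper's, which simply declares this part ``straightforward to check''; your distributive normalisation of $\anteftwo$ into implications of the form $\Box^1(\psi\to\Box^0\cvar{x}{y})$ with positive $\psi$ is exactly the bookkeeping being elided there. Where you genuinely diverge is in establishing acyclicity of $\mathcal{D}(\solf)$. The paper proves, by induction on the length of a path $Q$ in $\mathcal{D}(\solf)$ from $\cvar{x}{y}$ to $\cvar{x'}{y'}$, that $Q$ projects to a positive-length path in $\hfub$ either from $x$ to $x'$ or from $y$ to $y'$, and then turns a hypothetical cycle in $\mathcal{D}(\solf)$ into a cycle in $\hfub$, contradicting Claim~\ref{c:ubgraph}. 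You instead fix a topological order $\rho$ of the finite acyclic digraph $\hfub$ and observe that the potential $f(\cvar{x}{y})=\rho(x)+\rho(y)$ strictly increases along both edge families \eqref{xedge} and \eqref{yedge}, since each such edge sits over a length-two path $y'\to x\to y''$ (respectively $x'\to y\to x''$) in $\hfub$; your explicit remark that no edge of $\mathcal{D}(\solf)$ terminates at an $\hvar{z}{i}$-node (boxed atoms have no inessential variables) correctly confines any would-be cycle to the $\cvar{\cdot}{\cdot}$-nodes, a point the paper passes over by asserting that the relevant nodes are among the $\cvar{x}{y}$ variables. Both arguments rest on the same key input, Claim~\ref{c:ubgraph}; yours buys a shorter, non-inductive verification via an explicit strictly increasing rank function, while the paper's projection lemma yields slightly more structural information about how dependency paths trace out paths in $\hfub$.
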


\begin{proof}
It is straightforward to check that $\solf$ is a potential generalised Sahlqvist antecedent.
We show that  the dependency digraph $\mathcal{D}(\solf)$ of $\solf$ is acyclic. To this end, observe that
the 
nodes of $\mathcal{D}(\solf)$ are among the $\cvar{x}{y}$ variables occurring in $\anteftwo$, and
we have the following edges $\Rrightarrow$ in $\mathcal{D}(\solf)$:
%
\begin{align}
\label{xedge}
& \mbox{$\cvar{x}{y'}\Rrightarrow\cvar{x}{y''}$,\quad if $x\in X$, $y',y''\in Y$,  $y'\to x$ and $x\to y''$;}\\
\label{yedge}
& \mbox{$\cvar{x'}{y}\Rrightarrow\cvar{x''}{y}$,\quad if $y\in \yub$, $x',x''\in X$,  $x'\to y$ and $y\to x''$.}
\end{align}
We claim that if $Q$ is a path of length $>0$ in $\mathcal{D}(\solf)$
from some $\cvar{x}{y}$ to some $\cvar{x'}{y'}$, 
then either there is a path of length $>0$ in $\hfub$ from $x$ to $x'$,
or there is a path of length $>0$ in $\hfub$ from $y$ to $y'$. Indeed, we show this
 by induction on the length $\ell$ of $Q$. If $\ell=1$ then this follows from 
 \eqref{xedge}--\eqref{yedge}. So suppose $\ell>0$ and $Q$ is $Q^-$ followed by an edge
 of the form, say, $\cvar{x'}{y''}\Rrightarrow\cvar{x'}{y'}$ for some $y''\in Y$. (The other case
 is similar.) 
 By the IH, there are two cases: (i) either there is a path of length $>0$ in $\hfub$ from $x$ to $x'$, in which case we are done, (ii) or there is a path of length $>0$ in $\hfub$ from $y$ to $y''$. 
 As we also have $y''\to x'\to y'$ by \eqref{xedge}, we have a path of length $>0$ in $\hfub$ from $y$ to $y'$, as required. 

Now suppose indirectly that there is a cycle in the dependency digraph of $\solf$.
Choose an arbitrary edge in this cycle of the form, say, $\cvar{x}{y}\Rrightarrow\cvar{x'}{y}$ for some $x,x'\in
X$, $y\in Y$. (The other case
 is similar.) Then $x\to y\to x'$ holds by \eqref{yedge}. Also, either there is a path of length $>0$ in $\hfub$ from $x'$ to $x$,
 or there is a path of length $>0$ in $\hfub$ from $y$ to $y$. In both cases, we have a cycle
 in $\hfub$, contradicting that it is acyclic by Claim~\ref{c:ubgraph}.
 \end{proof}

\begin{lemma}\label{l:solfsat}
If a \grid{} $\F_1=(X_1,Y_1,g_1)$ contains $\F$ as a subgrid,
then $\solf$ is satisfiable in $\F_1$.
\end{lemma}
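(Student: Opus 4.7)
The plan is to construct a model $\M$ on $\F_1$ whose valuation reflects the cellular structure of the subgrid $\F$, and then to verify that $\solf$ is satisfied at an arbitrary point of $\F_1$.

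For the valuation, set $\M(\xvar)$ to be the entire $x$-column of $\F_1$ for each $x\in X$, and $\M(\yvar)$ to the entire $y$-row for each $y\in Y$; then $\xvarbar$ and $\yvarbar$ hold precisely on column $x$ and row $y$ respectively, so $\xvarbar\land\yvarbar$ isolates the cell $(x,y)$. For each $(x,y)\in\swall$, set $\M(\cvar{x}{y})$ to be the complement of $\{c^{xy}\}$ in $\F_1$, so that $\neg\cvar{x}{y}$ identifies the designated switch-cluster point $c^{xy}$. For each $n$-strict $x\in X$, fix a witnessing row $y_x'$ for which $\F^{xy_x'}$ is of type (\bchstrict) or (\bchvstrict) (so it consists of exactly $n$ $\Rh$-irreflexive points $a_1^x,\dots,a_n^x$), and set $\M(\hvar{x}{i})=\{w:a_i^x\Rh w\}$; symmetrically for $n$-strict $y\in\yub$. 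Finally, let $\M(\bvar{j}^\circ(z))$ and $\M(\bvar{s}^\bullet(z))$ be singletons containing the chosen $b$-witnesses.

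The core pointwise observations are: (i) $\ctpp{x}{y}$ holds at $c^{xy}$, since $c^{xy}$ lies in cell $(x,y)$ and is the unique falsifier of $\cvar{x}{y}$, while its $\Rh$- or $\Rv$-irreflexivity (in the direction matching the switch-cluster type) yields the $\Bh$/$\Bv$ conjuncts and its orthogonal reflexivity witnesses the $\Dh$/$\Dv$ inside $\ctpp{x}{y}$; (ii) $\bigwedge_{i=1}^n\hvar{x}{i}(w)$ forces $w$ into row $y_x'$ but outside $\F^{xy_x'}$, hence outside column $x$; and (iii) $\Bv^+\cvar{x}{y'}(w)$ holds iff $w$ lies outside column $x$, because any two distinct points in the same column are $\Rv$-related, so $c^{xy'}$ is a $\Rv$-successor of every other column-$x$ point. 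To verify $\antefone$ and $\antefthree$ at an arbitrary $u$, observe that $\Dh^+\Dv^+$ reaches every cell of $\F_1$, so it suffices to check each $\treef{q}$ and $\lbf{z}$ at a designated witness point: proceed by induction on the structure of $\tfub$, using $c^{xy}$ as the evaluation point at each non-root node and the $a_i^x$, $b_j^\circ(z)$, $b_s^\bullet(z)$ as the base-case witnesses; the side conjuncts $\Dv\neg\cvar{x}{y'}$ and $\Dh\neg\cvar{x'}{y}$ are witnessed by $c^{xy'}$ and $c^{x'y}$ in the appropriate column or row, and $\Dv\ctpp{x}{y'}$ for $y'\notin\yub$ is witnessed by $c^{xy'}$ (which exists in $\F$ hence in $\F_1$) reachable by a $\Rv$-step from $c^{xy^*}$.

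The conjunct $\anteftwo$ is verified at each point reachable from $u$ in at most two steps via $\Rh$ and $\Rv$. For non-strict $x$, the implication $\Bv^+\cvar{x}{y'}\to\Bv^+\cvar{x}{y''}$ in $\swf{x}$ is trivial since, by observation (iii), both sides are equivalent to ``$w$ lies outside column $x$''; for $n$-strict $x$, observation (ii) forces any $w$ satisfying the antecedent outside column $x$, making the conclusion true via (iii); the argument for $\swf{y}$ is symmetric. The main technical obstacle is bookkeeping the induction for $\antefone$: one must carefully match the $\Dh$/$\Dv$ pattern dictated by $\tfub$ (which depends on whether the current node is a copy of an $X$- or $\yub$-element) with the corresponding single-step cell transitions in $\F$ inside $\F_1$, verifying that every child and side conjunct can be discharged using the existing switch-cluster witnesses $c^{xy}$. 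A secondary subtlety is the treatment of nodes $y'\in Y-\yub$, whose associated cells are still present in $\F$ and whose $c^{xy'}$ witnesses are reachable by a single $\Rv$-step from any other point in column $x$, ensuring that the unbounded side conjuncts in $\antefone$ remain satisfiable.
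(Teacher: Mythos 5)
Your proposal is correct and follows essentially the same route as the paper: the paper's proof consists precisely of writing down a valuation that mirrors the cell structure of $\F$ inside $\F_1$ (columns for $\xvar$, rows for $\yvar$, the complement of the designated switch point for $\cvar{x}{y}$, singletons for the $\bvar{}$'s) and then asserting that $\solf$ can be checked conjunct by conjunct; your witnesses ($c^{xy}$ at non-root tree nodes, the $a_i^z$ and $b$-points at the base cases) are exactly the intended ones.

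The one substantive divergence is the valuation of the strictness variables: the paper sets $\M(\hvar{z}{i})=\{w : w\ne a^z_i\}$, whereas you take the set of $\Rh$- (resp.\ $\Rv$-) successors of $a^z_i$. Your choice is the one that actually makes the verification go through. Both choices satisfy $\neg\hvar{z}{i}\land\Box\hvar{z}{i}$ at $a^z_i$, but they differ on $\anteftwo$: with the complement-of-a-singleton valuation, the antecedent $\bigwedge_i\hvar{z}{i}$ of $\swf{z}$ holds at every point of the row/column of $z$ lying outside the witnessing strict cell (for instance, at the points of $\F^{x_1y_1}$ in the grid of Fig.~\ref{f:sqbad}, where $y_1$ is $6$-strict with witnessing cell $\F^{x_2y_1}$), and at such points the consequent $\Bh^+\cvar{x'}{z}$ fails. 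Your successor-based valuation instead confines the antecedent to the witnessing row/column minus the strict cell, which is exactly your observation (ii), and this is also what is needed later for $\neg\consf$ in Lemma~\ref{l:finalsat}. So treat this not as a deviation but as the reading the paper must intend; the rest of your verification, including the care you flag about matching the $\Dh$/$\Dv$ alternation of $\tfub$ to single-step cell transitions, is sound.
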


\begin{proof}
We use the notation introduced in the definition of $\solf$.
We define a model $\M$ on $\F_1$ by taking
\begin{align*}
\M(\xvar) & =  \bigcup_{y\in Y_1}\F_1^{xy},\quad\mbox{for $x\in X$},\\[5pt]
\M(\yvar) & = \bigcup_{x\in X_1}\F_1^{xy},\quad\mbox{for $y\in Y$},\\[5pt]
\M(\cvar{x}{y}) & =  \{w : w\ne c^{xy}\},\quad\mbox{for $(x,y)\in\swall$},\\[5pt]
\M(\hvar{z}{i}) & = \{ w : w\ne a^z_i\},\quad\mbox{for $n\in\nNp$, $i=1,\dots,n$, and $n$-strict $z\in X\cup \yub$},\\[3pt]
\M\bigl(\bvar{j}^\circ(z)\bigr) & = \bigl\{b_j^\circ(z)\bigr\},\quad\mbox{for $z\in\xlb\cup\ylb\cup(Y-\yub)$, $j=1,\dots,\nor{z}$,}\\[3pt]
\M\bigl(\bvar{s}^\bullet(z)\bigr) & = \bigl\{b_s^\bullet(z)\bigr\},\quad\mbox{for $z\in\xlb\cup\ylb\cup(Y-\yub)$, $s=1,\dots,\noi{z}$.}
\end{align*}
It is not hard to check that $\solf$ is satisfied in $\M$. 
\end{proof}

\begin{lemma}\label{l:solgen}
Suppose $\M,(u,v)\models\solf$ for some point $(u,v)$ in a model $\M$ over a product frame $(U,\drel{U})\mprod(V,\drel{V})$. Then for every $x\in X$ there is a set $\hset{x}\subseteq U$, and for every $y\in Y$ there is a set $\hset{y}\subseteq V$
such that the following hold, for every $z\in X\cup Y$:
%
\begin{itemize}
\item[{\rm (i)}]
$\hset{z}\cap\hset{z'}=\emptyset$ whenever $z\ne z'$; $z,z'\in X$ or $z,z'\in Y$.

\item[{\rm (ii)}] 
If $z\in X\cup\yub$ then we can `identify' points outside $\hset{z}$ with a positive formula. In particular:

 %
  %
 %
%
If $z=x\in X$ then for all $a\in U-\hset{x}$,
\begin{align*}
& \M,(a,v)\models\Dv^+\bigwedge_{i=1}^{n}\hvar{x}{i},\quad \mbox{whenever  $x$ is $n$-strict, and}\\[5pt]
& \M,(a, v)\models\bigwedge_{\substack{y\in\yub\\ y\to x}}\Bv^+\cvar{x}{y},\quad \mbox{whenever $x$ is non-strict.}
\end{align*}

If $z=y\in\yub$ then for all $b\in V-\hset{y}$,
\begin{align*}
& \M,(u,b)\models\Dh^+\bigwedge_{i=1}^{n}\hvar{y}{i},\quad \mbox{whenever  $y$ is $n$-strict, and}\\[5pt]
& \M,(u,b)\models\bigwedge_{\substack{x\\ x\to y}}\Bh^+\cvar{x}{y},\quad \mbox{whenever $y$ is non-strict.}
\end{align*}
%

\item[{\rm (iii)}]
$\xi$ is a solution of $\con{\F}$, where $\xi(z)=|\hset{z}|$, for $z\in X\cup Y$.
\end{itemize}
\end{lemma}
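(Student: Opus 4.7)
The plan is to construct the sets $\hset{z}$ by extracting witnesses from the three conjuncts of $\solf$ and to verify (i)--(iii) in turn. A natural definition is $\hset{x} = \{a \in U : \M,(a,b) \models \xvarbar \text{ for some } b\}$ for $x \in X$, and analogously $\hset{y} = \{b \in V : \M,(a,b) \models \yvarbar \text{ for some } a\}$ for $y \in Y$. Property (i) is then immediate: if $a \in \hset{x}\cap\hset{x'}$ with $x\ne x'$, witnesses $(a,b)$ and $(a,b')$ for $\xvarbar$ and $\xvarbar'$ respectively give $\Bv^+\xvar$ from the first and $\Bv^+\neg\xvar$ from the second, producing a contradiction at $(a,b)$.

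For (ii) and (iii), I would unpack $\Dh^+\Dv^+\treef{q}$ recursively over the forest $\tfub$. Each root $q$ that is a copy of an $n$-strict $z$ produces $n$ witnesses $a_1,\dots,a_n$ (or symmetrically in $V$) satisfying $\xvarbar \land \neg\hvar{z}{i} \land \Bh\hvar{z}{i}$, placing them in $\hset{z}$ and forcing $\Dv^+\bigwedge_{i=1}^n\hvar{z}{i}$ at every $(a,v)$ with $a \notin \{a_1,\dots,a_n\}$ --- the strict case of (ii). Each descendant $q'$ in $\tfub$, attached to its parent via $\ctpp{x}{y^\star}$, contributes further witnesses: a $\to^2$-labelled edge yields two distinct witnesses in a single row (or column) because of the additional $\Dh$ (or $\Dv$) inside $\ctpp{x}{y^\star}$, while a $\to^1$-labelled edge yields exactly one. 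Aggregated along the paths of $\tfub$, this multiplicativity produces the constraints $(z\geq\lambda z') \in \con{\F}$. The non-strict part of (ii) is handled by combining the $\Bv\cvar{x}{y'}$ (or $\Bh\cvar{x}{y'}$) clause in $\ctpp{x}{y'}$ with the propagation formula $\swf{x}$ (a conjunct of $\anteftwo$), which equates $\Bv^+\cvar{x}{y'}$ and $\Bv^+\cvar{x}{y''}$ uniformly across $y',y''$ adjacent to $x$ in $\hfub$. Lower-bound witnesses for $z \in \xlb \cup \ylb \cup (Y-\yub)$ come from $\antefthree$: the pairwise exclusive propositional markers $\bvar{j}^\circ(z)$ and $\bvar{s}^\bullet(z)$, together with the nested $\Dh$ (or $\Dv$) structure that doubles the reflexive witnesses, inject $2\nor{z}+\noi{z}=\minz{z}$ distinct elements into $\hset{z}$.

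The main obstacle is the consistency of the construction: a single node $z$ may appear as multiple copies in $\tfub$ (reached from different strict roots, or as distinct descendants in several branches), and the corresponding witnesses must merge coherently into a single $\hset{z}$. Here $\anteftwo$ is indispensable --- the $\swf{z}$ formulas propagate the labels $\cvar{x}{y}$ uniformly along rows and columns, so that witnesses generated independently along disjoint tree branches land on the same elements of $U$ or $V$. Beyond this merging, one must verify that the resulting cardinalities simultaneously satisfy all upper-bound constraints (inherited along the paths $\Pub{\F}{z}$ from strict roots, ensuring $|\hset{z}| \leq \ub{\F}{z}$) and lower-bound constraints from $\antefthree$, which forms the bulk of the constraint-by-constraint verification of (iii).
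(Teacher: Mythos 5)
Your overall strategy matches the paper's: extract witness sets by unravelling $\antefone$ along the forest $\tfub$, use $\anteftwo$ to merge the witnesses produced by different copies of the same node, and use $\antefthree$ for the lower bounds. But your opening definition of the sets is wrong, and the error is not cosmetic. You set $\hset{x}=\{a\in U:\M,(a,b)\models\xvarbar\mbox{ for some }b\}$. The formula $\xvarbar$ constrains only the variables $\xvar,\xvar'$, and $\solf$ places no upper bound on where the valuation may make them true: starting from a model satisfying $\solf$, enlarge $\M(\xvar)$ by a full column $\{a^\ast\}\times V$ with $a^\ast$ outside every witness set, keeping the other variables off that column. No conjunct of $\solf$ is falsified (all negative occurrences of $\xvar$ are evaluated at witness points, whose first coordinates differ from $a^\ast$), yet your $\hset{x}$ now contains the spurious element $a^\ast$. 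For an $n$-strict $x$ this destroys the constraint $(x=n)$ in item (iii), and it also breaks the exact equalities $|\hset{z}|=|\hset{z'}|$ required along $\to^1$ edges; both are essential downstream, where $\sum_{x\in X}|\hset{x}|\ne\sum_{y\in Y}|\hset{y}|$ is derived from the non-existence of a balanced solution. The paper instead defines $\hset{z}$ as \emph{exactly} the set of extracted witnesses $\hset{q}$ for a copy $q$ of $z$ (and, for unbounded $y\in Y-\yub$, as a separately built union of the lower-bound witnesses with the sets $\hsetx{x}{y}$ needed for the constraints $y\geq 2x$), so membership is fixed by the construction rather than by a formula.

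Once the definition is repaired, the point you flag as the ``main obstacle'' --- that different copies of $z$ in $\tfub$ must yield the same set --- becomes a statement that genuinely has to be proved (Claim~\ref{c:psets}), and your sketch only asserts it. Its proof is intertwined with item (ii): one first shows, by induction down the trees, that every point \emph{outside} $\hset{q}$ satisfies the relevant positive formula (Claim~\ref{c:out}), using the $\swf{z}$ conjuncts to transfer this information across switch \clusters; only then can one argue that a witness produced in one branch cannot lie outside the witness set of another copy. So item (ii) is not a by-product of the construction but the engine of the coherence argument, and your outline inverts that dependency. Finally, note that $\antefthree$ supplies direct lower-bound witnesses only for $z\in\xlb\cup\ylb\cup(Y-\yub)$; for the remaining nodes the bound $|\hset{z}|\geq\smin(z)$ must be propagated through the $\to^2$ edges by an induction on the condensation of $\gf$ (Claim~\ref{c:zok}~(iv)), a step your sketch leaves out.
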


\begin{proof}
The argument uses a series of claims.
To begin with,
for every node $q$ in $\tfub$ we will define, inductively on the height of $q$ in $\tfub$, a set $\wset{q}\subseteq U\mprod V$ 
such that, for every $q$, and every $(a,b)\in\wset{q}$, 
\begin{align}
\label{genihx}
& \M,(a,b)\models\xvarbar\land\bigwedge_{q'\in\treec{q}}\Dv\treef{q'},\quad\mbox{if $q$ is a copy of some $x\in X$,}\\[5pt]
\label{genihy}
& \M,(a,b)\models\yvarbar\land\bigwedge_{q'\in\treec{q}}\Dh\treef{q'},\quad\mbox{if $q$ is a copy of some $y\in\yub$.}
\end{align}
%
%
\begin{itemize}
\item
If $q$ is a root and a copy of some $n$-strict $z\in X\cup\yub$, then $\Dh^+\Dv^+\treef{q}$ is a conjunct of $\antefone$, and so
there is $(a,b)$ with $\M,(a,b)\models\treef{q}$. 

So if $z=x\in X$, then there are distinct $a_1,\dots,a_n\in U$  such that
$\M,(a_i,b)\models\treefi{i}{q}$ for $i=1,\dots,n$.
Let $\wset{q}=\{(a_1,b),\dots,(a_n,b)\}$. Then \eqref{genihx} holds for  every $(a,b)\in\wset{q}$.
Observe that
%
\begin{multline}
\label{gentreefi}
\mbox{if $q$ is a root and it is a copy of some $n$-strict $x\in X$, then there is $b$ such that}\\
\mbox{for every $1\leq i\leq n$ there is $(a_i,b)\in\wset{q}$ with $\M,(a_i,b)\models\treefi{i}{q}$.}
\end{multline}

Also,
%
\begin{multline}
\label{thereisi}
\mbox{if $q$ is a root and it is a copy of some $n$-strict $x\in X$, then}\\
\mbox{for every $(a,b)\in\wset{q}$ there exist $1\leq i\leq n$ such that $\M,(a,b)\models\treefi{i}{q}$.}
\end{multline}

Similarly,  if $z=y\in\yub$, then 
there are distinct $b_1,\dots,b_n\in V$ such that
$\M,(a,b_i)\models\treefi{i}{u}$ for $i=1,\dots,n$.
Let $\wset{q}=\{(a,b_1),\dots,(a,b_n)\}$. Then \eqref{genihy} holds for  every $(a,b)\in\wset{q}$.

\item
If $q\in\treec{q^\ast}$, $q$ is a copy of some $x\in X$ and $q^\ast$ is a copy of some $y^\ast\in\yub$
then, by \eqref{genihy} of the IH, we have $\M,(a,b)\models\Dh\treef{q}$ for every $(a,b)\in\wset{q^\ast}$. 
Thus, for every $(a,b)\in\wset{q^\ast}$, there is some $a'\in U$ with $\M,(a',b)\models\treef{q}$.
Let $\wset{q}=\{(a',b) : (a,b)\in\wset{q^\ast}\}$. Then \eqref{genihx} holds for  every $(a,b)\in\wset{q}$.
Observe that 
\begin{align}
\nonumber
& \mbox{if $q\in\treec{q^\ast}$ and $q^\ast$ is a copy of some $y^\ast\in\yub$, then}\\
\label{gennext}
& \hspace*{.5cm}\mbox{for every $(a,b)\in\wset{q^\ast}$ there is $a'$ with $(a',b)\in\wset{q}$,}\\
\label{genprev}
& \hspace*{.5cm}\mbox{and for every $(a',b)\in\wset{q}$ there is $a$ with $(a,b)\in\wset{q^\ast}$.}
\end{align}

Similarly,
if $q\in\treec{q^\ast}$, $q$ is a copy of some $y\in\yub$ and $q^\ast$ is a copy of some $x^\ast\in X$
then, by \eqref{genihx} of the IH, we have $\M,(a,b)\models\Dv\treef{q}$ for every $(a,b)\in\wset{q^\ast}$. 
Thus, for every $(a,b)\in\wset{q^\ast}$, there is some $b'\in V$ with $\M,(a,b')\models\treef{q}$.
Let $\wset{q}=\{(a,b') : (a,b)\in\wset{q^\ast}\}$. Then \eqref{genihy} holds for  every $(a,b)\in\wset{q}$.

Observe that 
%
\begin{equation}
\label{gentreef}
\mbox{if $q$ is not a root, then  $\M,(a,b)\models\treef{q}$ for every $(a,b)\in\wset{q}$.}
\end{equation}
%

\end{itemize}
Next, for every node $q$ in $\tfub$ that is a copy of some $x\in X$, we let
\begin{equation}\label{hsetdef}
\hset{q}=\{ a\in U : (a,b)\in\wset{q}\mbox{ for some $b$}\},\mbox{ and}
\end{equation}
for every node $q$ that is a copy of some $y\in\yub$, we let
\begin{equation}\label{vsetdef}
\hset{q}=\{ b\in V : (a,b)\in\wset{q}\mbox{ for some $a$}\}.
\end{equation}
Observe that 
%
\begin{equation}
\label{strictwset} 
\mbox{if $q$ is a root in $\tfub$ and it is a copy of some $n$-strict $z\in X\cup Y$, then $|\hset{q}|=n$.}
\end{equation}
%

\begin{lclaim}\label{c:out}
For every node $q$ in $\tfub$, the following hold:
\begin{itemize}
\item[{\rm (i)}]
 If $q$ is a root and a copy of some $n$-strict $x\in X$,
 then $\M,(a^\ast,v)\models\Dv^+\bigwedge_{i=1}^n\hvar{x}{i}$ 
 for all $a^\ast\in U-\hset{q}$.

 If $q$ is a root and a copy of some $n$-strict $y\in\yub$, then
 $\M,(u,b^\ast)\models\Dh^+\bigwedge_{i=1}^n\hvar{y}{i}$
 for all $b^\ast\in V-\hset{q}$.

 \item[{\rm (ii)}]
If $q$ is a copy of some $x\in X$, $q\in\treec{q'}$,  and $q'$ is a copy of some $y'\in\yub$, then
$\M,(a^\ast, v)\models\Bv^+\cvar{x}{y'}$
for all $a^\ast\in U-\hset{q}$.

If $q$ is a copy of some $y\in\yub$, $q\in\treec{q'}$,  and $q'$ is a copy of some $x'\in X$, then
$\M,(u,b^\ast)\models\Bh^+\cvar{x'}{y}$
for all $b^\ast\in V-\hset{q}$.
\end{itemize}
%
\end{lclaim}

\begin{proof}
We prove the claim by induction on the height of $q$ in $\tfub$.

(i): Suppose $q$ is a root and a copy of some $n$-strict $x\in X$, and take some $a^\ast\in U-\hset{q}$.
By \eqref{gentreefi}, there is $b$ such that for every $1\leq i\leq n$ there is $(a_i,b)\in\wset{q}$ with
$\M,(a_i,b)\models\treefi{i}{q}$. As $\Bh\hvar{x}{i}$ is a conjunct of $\treefi{i}{q}$,
we have $\M,(a_i,b)\models\Bh\hvar{x}{i}$ for every $1\leq i\leq n$.
As by \eqref{hsetdef} $a_i\in\hset{q}$ for every $1\leq i\leq n$, we have that $a^\ast\ne a_i$ for any $i$,
and so $\M,(a^\ast,b)\models\bigwedge_{i=1}^n\hvar{x}{i}$.
The case when  $q$ is a copy of some $n$-strict $y\in\yub$ is similar.

(ii):
Suppose $q\in\treec{q'}$, $q$ is a copy of some $x\in X$, $q'$ is a copy of some $y'\in\yub$.
(The case when $q\in\treec{q'}$, $q$ is a copy of some $y\in\yub$ and $q'$ is a copy of some $x'\in X$ is similar.)
Suppose inductively that we have (i)--(ii) for $q'$, and take some $a^\ast\in U-\hset{q}$. 
We claim that 
\begin{equation}\label{ceverywhere}
\M,(a^\ast,b^\ast)\models\cvar{x}{y'},\quad\mbox{for every $b^\ast\in V$,}
\end{equation}
implying $\M,(a^\ast, v)\models\Bv^+\cvar{x}{y'}$, as required.
Indeed, take some $b^\ast\in V$. There are two cases, either $b^\ast\in\hset{q'}$ or $b^\ast\notin\hset{q'}$.
If $b^\ast\in\hset{q'}$ then there is some $a$ such that $(a,b^\ast)\in\wset{q'}$ by \eqref{vsetdef}.
Thus, there is $a'$ such that $(a',b^\ast)\in\wset{q}$ by \eqref{gennext}, and so 
$\M,(a',b^\ast)\models\treef{q}$ by \eqref{gentreef}.  As $\ctpp{x}{y'}$ is a conjunct of $\treef{q}$, we also have
$\M,(a',b^\ast)\models\ctpp{x}{y'}$. 
As $q\in\treec{q'}$, we have $y'\to x$. Thus,
$\Bh\cvar{x}{y'}$ is a conjunct of $\ctpp{x}{y'}$, and so 
$\M,(a',b^\ast)\models \Bh\cvar{x}{y'}$ as well. As $a^\ast\notin\hset{q}$ but $a'\in\hset{q}$ by  \eqref{hsetdef}, it follows that $a'\ne a^\ast$, and so \eqref{ceverywhere} holds.

If $b^\ast\notin\hset{q'}$ then suppose first that $q'$ is a root, that is, $y'$ is $n$-strict for some $n\in\nNp$. By item (i) of the IH, $\M,(u,b^\ast)\models\Dh^+\bigwedge_{i=1}^{n}\hvar{y'}{i}$.
As $y'\to x$ holds,
$\M,(u,b^\ast)\models\Bh^+\cvar{x}{y'}$ follows by \eqref{swfy},
 and so \eqref{ceverywhere} holds.
Finally, suppose that $q'$ is not a root.
 Let $q''$ be such that $q'\in\treec{q''}$, and suppose that $q''$ is a copy of some $x''\in X$.
Then $\M,(u,b^\ast)\models\Bh^+\cvar{x''}{y'}$ by item (ii) of the IH.
As $x''\to y'\to x$ holds, 
$\M,(u,b^\ast)\models\Bh^+\cvar{x}{y'}$ follows by \eqref{swfy}.
Thus, \eqref{ceverywhere} holds in this case as well.
\end{proof}

\begin{lclaim}\label{c:countok}
For all nodes $q,q'$ in $\tfub$, the following hold:
 \begin{itemize}
\item[{\rm (i)}]
If $q'\to^2 q$ is an edge in $\tfub$ then $|\hset{q'}|\geq 2\cdot |\hset{q}|$.

 \item[{\rm (ii)}]
If $q'\to^1 q$ is an edge in $\tfub$ then $|\hset{q}|= |\hset{q'}|$.
\end{itemize}
%
\end{lclaim}

\begin{proof}
Suppose $q'\to^\lambda q$ is an edge in $\tfub$ for some $\lambda$,
$q$ is a copy of some $x\in X$, $q'$ is a copy of some $y'\in\yub$.
(The case when $q$ is a copy of some $y\in\yub$ and $q'$ is a copy of some $x\in X$ is similar.)
By \eqref{hsetdef}, for every $a\in\hset{q}$ there is $b_a$ with $(a,b_a)\in\wset{q}$.
Then $\M,(a,b_a)\models\treef{q}$ by \eqref{gentreef}. As $\ctpp{x}{y'}$ is a conjunct of $\treef{q}$, we also have
$\M,(a,b_a)\models\ctpp{x}{y'}$. 
By  \eqref{vsetdef}, we have
\begin{equation}\label{firstin}
\mbox{$b_a\in\hset{q'}$ for every $a\in\hset{q}$.}
\end{equation}

(i):
As $q'\to^2 q$ is an edge in $\tfub$, $y'\to^2 x$ is an edge in $\hfub$. Thus, $\ctpp{x}{y'}$ implies
$\neg\cvar{x}{y'}\land\Bh\cvar{x}{y'}\land\Dv(\neg\cvar{x}{y'}\land\Bh\cvar{x}{y'})$ (see \eqref{cformula}), and so $\M,(a,b_a)\models\neg\cvar{x}{y'}\land\Bh\cvar{x}{y'}\land\Dv(\neg\cvar{x}{y'}\land\Bh\cvar{x}{y'})$. So for every $a\in\hset{q}$, there is also a $b'_a\ne b_a$
with $\M,(a,b'_a)\models\neg\cvar{x}{y'}\land\Bh\cvar{x}{y'}$.
Therefore
%
\begin{equation}
\label{allfour}
\mbox{if $a_1,a_2\in\hset{q}$ and $a_1\ne a_2$, then $b_{a_1}$, $b'_{a_1}$, $b_{a_2}$, and
$b'_{a_2}$ are four distinct points.}
\end{equation}
%
We claim that
 \begin{equation}\label{ezisin}
 b'_a\in\hset{q'}\ \mbox{for every $a\in\hset{q}$}.
 \end{equation}
 Indeed,
 suppose indirectly that $b'_a\notin\hset{q'}$ for some $a\in\hset{q}$. 
 There are two cases: If $q'$ is a root and $y'$ is $n$-strict for some $n\in\nNp$, then 
 $\M,(u,b'_a)\models\Dh^+\bigwedge_{i=1}^n\hvar{y'}{i}$
 by Claim~\ref{c:out}~(i). As $y'\to x$ holds, 
$\M,(u,b_a')\models\Bh^+\cvar{x}{y'}$ follows by \eqref{swfy},
contradicting $\M,(a,b'_a)\models\neg\cvar{x}{y'}$.
If $q'\in\treec{q''}$ for some $q''$ and $q''$ is a copy of some $x''\in X$, then by Claim~\ref{c:out}~(ii)
 we have that $\M,(u,b'_a)\models\Bh^+\cvar{x''}{y'}$. As $x''\to y'\to x$ holds,
 $\M,(u,b'_a)\models\Bh^+\cvar{x}{y'}$ follows by \eqref{swfy}, a contradiction again, proving \eqref{ezisin}.
 Now $|\hset{q'}|\geq 2\cdot |\hset{q}|$ follows by \eqref{firstin},
\eqref{allfour} and \eqref{ezisin}.
 
 (ii):
 As $q'\to^1 q$ is an edge in $\tfub$, $y'\to^1 x$ is an edge in $\hfub$.
So $\neg\cvar{x}{y'}\land\Bh\cvar{x}{y'}$ is a conjunct of $\ctpp{x}{y'}$,
and so $\M,(a,b_a)\models\neg\cvar{x}{y'}\land\Bh\cvar{x}{y'}$ as well.
So if $a_1\ne a_2\in\hset{q}$ then $b_{a_1}\ne b_{a_2}$ must hold, and so
 $|\hset{q}|\leq |\hset{q'}|$ by \eqref{firstin}.
 On the other hand, 
 by \eqref{vsetdef} and \eqref{hsetdef}, for every $b\in\hset{q'}$ there is $a_b\in\hset{q}$ such that 
 $(a_b,b)\in\wset{q}$, and so $\M,(a_b,b)\models\treef{q}$ by \eqref{gentreef}.  
 As $\ctpp{x}{y'}$ is a conjunct of $\treef{q}$, we also have
$\M,(a_b,b)\models\ctpp{x}{y'}$. 
As $\neg\cvar{x}{y'}\land\Bv\cvar{x}{y'}$ is a conjunct of $\ctpp{x}{y'}$, we have
 $\M,(a_b,b)\models\neg\cvar{x}{y'}\land\Bv\cvar{x}{y'}$ as well.
 So if $b_1\ne b_2\in\hset{q'}$ then $a_{b_1}\ne a_{b_2}$ must hold, and so
 $|\hset{q'}|\leq |\hset{q}|$.
\end{proof}

\begin{lclaim}\label{c:psets}
For all nodes $q_1,q_2$ in $\tfub$, 
if $q_1$ and $q_2$ are both copies of the same $z\in X\cup\yub$, then $\hset{q_1}= \hset{q_2}$.
\end{lclaim}

\begin{proof}
Clearly,
it is enough to show that $\hset{q_1}\subseteq \hset{q_2}$.
Suppose that $q_1\ne q_2$ are both copies of the same $x\in X$, and there is some 
$a\in\hset{q_1}-\hset{q_2}$. (The case when $q_1\ne q_2$ are both copies of the same $y\in\yub$ is similar.)
As $a\in\hset{q_1}$, there is $b$ with $(a,b)\in\wset{q_1}$ by \eqref{hsetdef}.
It cannot be that both $q_1$ and $q_2$ are roots in $\tfub$, so there are three cases:

If $q_1$ is a root and $q_2$ is not a root. Then suppose $x$ is $n$-strict for some $n\in\nNp$, 
$q_2\in\treec{q_2'}$, and $q_2'$ is a copy of some $y''\in\yub$.
  So by \eqref{thereisi} there exist $1\leq i\leq n$ such that $\M,(a,b)\models\treefi{i}{q_1}$.
As $y''\to x$ holds, $\Dv\neg\cvar{x}{y''}$ is a conjunct of $\treefi{i}{q_1}$, and so
$\M,(a,b)\models\Dv\neg\cvar{x}{y''}$. On the other hand, as $a\notin\hset{q_2}$, by Claim~\ref{c:out}~(ii)
 we have $(a,v)\models\Bv^+\cvar{x}{y''}$,
a contradiction.

If $q_2$ is a root and $q_1$ is not a root. Again, suppose $x$ is $n$-strict for some $n\in\nNp$, 
$q_1\in\treec{q_1'}$, and $q_1'$ is a copy of some $y'\in\yub$. 
We have  $\M,(a,b)\models\treef{q_1}$ by \eqref{gentreef}.
As $\ctpp{x}{y'}$ is a conjunct of $\treef{q_1}$, we have $\M,(a,b)\models\ctpp{x}{y'}$.
As $\neg\cvar{x}{y'}$  is a conjunct of $\ctpp{x}{y'}$,  we have $\M,(a,b)\models\neg\cvar{x}{y'}$.
On the other hand, as $a\notin\hset{q_2}$, by Claim~\ref{c:out}~(i) we have 
$\M,(a,v)\models\Dv^+\bigwedge_{i=1}^n\hvar{x}{i}$.
As $y'\to x$ holds,
$\M,(a,v)\models\Bv^+\cvar{x}{y'}$ by \eqref{swfx},
a contradiction.

If neither $q_1$ nor $q_2$ is a root. Then suppose $q_1\in\treec{q_1'}$, $q_2\in\treec{q_2'}$,
$q_1'$ is a copy of $y'\in\yub$, and $q_2'$ is a copy of $y''\in\yub$. 
We have $\M,(a,b)\models\treef{q_1}$ by \eqref{gentreef}.
As $\ctpp{x}{y'}$ is a conjunct of $\treef{q_1}$, we have $\M,(a,b)\models\ctpp{x}{y'}$.
As $\neg\cvar{x}{y'}$  is a conjunct of $\ctpp{x}{y'}$,  we have $\M,(a,b)\models\neg\cvar{x}{y'}$.
On the other hand, as $a\notin\hset{q_2}$, by Claim~\ref{c:out}~(ii)
we have $\M,(a,v)\models\Bv^+\cvar{x}{y''}$. If $y'=y''$, this is a contradiction.
If $y'\ne y''$ then $\Dv\neg\cvar{x}{y''}$ is a conjunct of $\treef{q_1}$, and so 
$\M,(a,b)\models\Dv\neg\cvar{x}{y''}$,
a contradiction again.
\end{proof}

Next, for every $z\in X\cup Y$, we will define $\hset{z}$. There are two cases: 
\begin{itemize}
\item
If $z\in X\cup\yub$, then let
\begin{equation}\label{boundedsetdef}
\mbox{$\hset{z}=\hset{q}$ for some (any) copy $q$ of $z$.}
\end{equation}
This is well-defined, as
$\tfub$ contains some copy of every $z\in X\cup\yub$,
and the definition does not depend on the choice of the particular copy by Claim~\ref{c:psets}.

\item
If $y\in Y-\yub$ then
$\Dh^+\Dv^+\lbf{y}$ is a conjunct of $\antefthree$, and so there exist $a\in U$ and distinct $b_1,\dots,b_{2\nor{y}},b_1',\dots,b_{\noi{y}}'\in V$ such that 
\begin{align}
\label{lbgen1}
& \mbox{$\M,(a,b_j)\models\btp{j}^\circ(y)$ and $\M,(a,b_{\nor{y}+j})\models\btp{j}^\circ(y)$, for all $1\leq j\leq \nor{y}$,}\\
\label{lbgen2}
& \mbox{$\M,(a,b_s')\models\btp{s}^\bullet(y)$ for all $1\leq s\leq \noi{y}$.}
\end{align}
 We let
\begin{equation}\label{lbsetdef}
\hsetlb{y}=\{b_1,\dots,b_{2\nor{y}},b_1',\dots,b_{\noi{y}}'\}.
\end{equation}
As $2\nor{y}+\noi{y}=\smin(y)$, we have that
\begin{equation}\label{lbsetok}
|\hsetlb{y}|=\smin(y).
\end{equation}

Next, for each $x\in X$ such that $y\to^2 x$ is an edge in $\gf$, we will define a
set $\hsetx{x}{y}$ such that
\begin{equation}\label{hsetxok}
|\hsetx{x}{y}|\geq 2\cdot |\hset{x}|.
\end{equation}
To this end, first we claim that
\begin{multline}
\label{xyconjunct}
\mbox{there is a copy $q$ of $x$ in $\tfub$ such that}\\
\mbox{$\M,(a,b)\models\Dv\ctpp{x}{y}$,
for every $(a,b)\in\wset{q}$.}
\end{multline}
Indeed, there are two cases.
If $x$ is $n$-strict for some $n\in\nNp$, then choose a copy $q$ of $x$ that is a root in $\tfub$.
Then by \eqref{gentreefi} and \eqref{thereisi},
there are distinct $a_1,\dots,a_k\in U$ and $b\in V$ such that
$(a_i,b)\in\wset{q}$, and so $\M,(a_i,b)\models\treefi{i}{q}$, for all $1\leq i\leq n$.
As $\Dv\ctpp{x}{y}$
is a conjunct of $\treefi{i}{q}$ for every $1\leq i\leq n$, \eqref{xyconjunct} follows.
If $x$ is not strict, then choose a copy $q$ of $x$ that is not a root in $\tfub$.
By \eqref{gentreef}, $\M,(a,b)\models\treef{q}$ for  every $(a,b)\in\wset{q}$.
As 
$\Dv\ctpp{x}{y}$
is a conjunct of $\treef{q}$, again we have \eqref{xyconjunct}.

As $y\to^2 x$ is an edge in $\gf$, $\ctpp{x}{y}$ implies
$\Dv\bigl(\yvarbar\land\neg\cvar{x}{y}\land\Bh\cvar{x}{y}\land\Dv(\yvarbar\land\neg\cvar{x}{y}\land\Bh\cvar{x}{y})\bigr)$
(see \eqref{cformula}).
So by \eqref{xyconjunct}, \eqref{hsetdef} and \eqref{boundedsetdef}, for  every $a\in\hset{x}$
there are $b_{a}\ne b_{a}'\in V$ such
that
\begin{equation}\label{xyset}
\mbox{$\M,(a,b_{a})\models\yvarbar\land\neg\cvar{x}{y}\land\Bh\cvar{x}{y}$\  and\ 
$\M,(a,b'_{a})\models\yvarbar\land\neg\cvar{x}{y}\land\Bh\cvar{x}{y}$.}
\end{equation}
 We let
\begin{equation}\label{xysetdef}
\hsetx{x}{y}=\{b_{a},b'_{a} : a\in\hset{x}\}.
\end{equation}
Clearly, if $a_1,a_2\in\hset{x}$ and $a_1\ne a_2$, then $b_{a_1}$, $b_{a_1}'$, $b_{a_2}$, and $b_{a_2}'$ are four distinct points, and so \eqref{hsetxok} follows, as required.

Finally, let
%
\[
\hset{y}=\hsetlb{y}\cup\bigcup_{\substack{x\\ y\to x}}\hsetx{x}{y}.
\]
%
Thus, by \eqref{lbsetok} and \eqref{hsetxok}, respectively, we obtain that 
\begin{align}
\label{hsetokub1}
& |\hset{y}|\geq\smin(y),\ \mbox{ and}\\[3pt]
\label{hsetokub2}
& |\hset{y}|\geq 2\cdot |\hset{x}|,\
\mbox{ for every $x\in X$ such that $y\to^2x$ is an edge in $\gf$.}
\end{align}
\end{itemize}

\begin{lclaim}\label{c:zok}
For all $z,z'\in X\cup Y$, the following hold:
 \begin{itemize}
  \item[{\rm (i)}]
 If $z$ is $n$-strict for some $n\in\nNp$, then $|\hset{z}|=n$.

 \item[{\rm (ii)}]
If $z'\to^2 z$ is an edge in $\gf$, then $|\hset{z'}|\geq 2\cdot |\hset{z}|$.

 \item[{\rm (iii)}]
 If $z$ and $z'$ are in the same strongly connected component of $\gf$, then
 $|\hset{z}|= |\hset{z'}|$.
 
   \item[{\rm (iv)}]
$|\hset{z}|\geq\smin(z)$. 
\end{itemize}
\end{lclaim}

\begin{proof}
Item (i) is by \eqref{strictwset}.

(ii):
If $z,z'\in X\cup\yub$, then
 there are nodes $q$ and $q'$ in $\tfub$
such that $q$ is a copy of $z$, $q'$ is a copy of $z'$, and $q'\to^2 q$ is an edge in $\tfub$.
So $|\hset{z'}|\geq 2\cdot |\hset{z}|$ follows by Claim~\ref{c:countok}~(i).
If $z'=y\in Y-\yub$ and $z=x\in X$, then $y\to^2 x$ is an edge in $\gf$. So 
$|\hset{y}|\geq 2\cdot |\hset{x}|$ follows by \eqref{hsetokub2}.

(iii):
Suppose $z\ne z'$. Then $z,z'\in X\cup\yub$, and by Claim~\ref{c:ubgraph}~(iii),
there is an undirected path $P$ in $\hfub$ between $z$ and $z'$  such that all edges in the path are $\to^1$ edges.
We can break $P$ up to a union of directed paths in $\hfub$ (each of which has copies in the unravelling $\tfub$), 
and then $|\hset{z}|=|\hset{z'}|$ follows by (possibly repeated applications of) Claim~\ref{c:countok}~(ii).

(iv):
It is enough to show that for every strongly connected component $\scc$ in $\gf$,
\begin{equation}\label{allzins}
|\hset{z}|\geq\numin(\scc),\quad\mbox{for every $z$ in $\scc$.}
\end{equation}
To this end, observe that for every $\scc$, we have $\numin(\scc)\in\nNp$
by \eqref{noinf}, \eqref{lbdef} and Claim~\ref{c:lbsolution}, and so
by \eqref{nulabel}, \eqref{minS} and \eqref{minz},
\begin{align}
\label{numininN}
& \numin(\scc)=\max\bigl(\{2\cdot\numin(\scc') : \scc\Rightarrow \scc'\}\cup\{\minS{\scc}\}\bigr)\in\nNp,\\
\nonumber
& \minS{\scc}=\max\{\minz{z} : z\in\scc\}\in\nNp,\\
\label{mininN}
& \minz{z}=\max\bigl\{ k : \mbox{either }(z= k)\in\con{\F}\mbox{ or }(z\geq k)\in\con{\F}\bigr\}\in\nNp,\
\mbox{for every $z\in\scc$}.
\end{align}
Therefore,
\[
\minS{\scc}=\max\bigl\{ k : \mbox{either }(z= k)\in\con{\F}\mbox{ or }(z\geq k)\in\con{\F}\mbox{ for some $z\in\scc$}\bigr\}\in\nNp.
\]
Let $z^\ast\in\scc$ be such that $\minS{\scc}=\minz{z^\ast}$.
%

We prove \eqref{allzins} by induction on $\rank(\scc)$. Suppose $\rank(\scc)=0$.
Then $\smin(z^\ast)=\numin(\scc)=\minS{\scc}=\minz{z^\ast}$.
There are two cases:
\begin{itemize}
\item[(a)]
$z^\ast$ is $n$-strict for some $n$.
As $\con{\F}$ has a solution, $n=\minz{z^\ast}$ must hold by \eqref{mininN}.
Thus,
$|\hset{z^\ast}|=\minz{z^\ast}$ by item (i), 
and so \eqref{allzins} follows by item~(iii). 

\item[(b)]
$z^\ast$ is non-strict.
%
If $z^\ast\in Y-\yub$ in $\scc$, then $\scc=\{z^\ast\}$ and so  \eqref{allzins} follows by \eqref{hsetokub1}.
If $z^\ast=x^\ast\in\xlb$, then 
$y\to x^\ast$ holds for some $y\in\yub$. Also,
$\Dh^+\Dv^+\lbf{x^\ast}$ is a conjunct of $\antefthree$, and so there exist $b\in V$ and distinct 
$a_1,\dots,a_{\minz{x^\ast}}\in U$ such that 
$\M,(a_i,b)\models\Dv^+\neg\cvar{x}{y}$ for every $1\leq i\leq\minz{x^\ast}$.
By Claim~\ref{c:out}~(ii), $a_i\in\hset{x^\ast}$ for every $1\leq i\leq\minz{x^\ast}$. Thus 
$|\hset{x^\ast}|\geq\minz{x^\ast}=\numin(\scc)$,
and so \eqref{allzins} follows by item (iii).
(The case when $z^\ast\in (\ylb\cap\yub)$ is similar.)
 \end{itemize}
Now take some $\scc$ with $\rank(\scc)>0$, and suppose inductively that  \eqref{allzins} holds for every $\scc'$ with $\rank(\scc')<\rank(\scc)$.
There are two cases: If $\numin(\scc)=\minS{\scc}$, then \eqref{allzins} can be shown as in (a)--(b) above,
using $z^\ast\in\scc$,
Otherwise, by \eqref{numininN} there is $\scc'$ such that $\numin(\scc)=2\cdot\numin(\scc')$ and
$\scc\Rightarrow\scc'$. Then there exist $z_1$ in $\scc$ and $z_2$ in $\scc'$ such that $z_1^\ast\to^2 z_2$
is an edge in $\gf$. Therefore, by item (ii) and the IH, we have
$|\hset{z_1}|\geq 2\cdot |\hset{z_2}|\geq 2\cdot\numin(\scc')=\numin(\scc)$, and so
 \eqref{allzins} follows by item (iii).
\end{proof}


Finally, we can complete the proof of Lemma~\ref{l:solgen}:

Item (i): We claim that
\begin{equation}\label{yvarok}
\mbox{for every $y\in Y$ and every $b\in\hset{y}$ there is $a$ such that
$\M,(a,b)\models\yvarbar$.}
\end{equation}
Indeed, there are three cases.
If $y\in\yub$, then $b\in\hset{q}$ for some copy $q$ of $y$ by \eqref{boundedsetdef}.
So there is $a$ such that $(a,b)\in\wset{q}$ by \eqref{vsetdef}, and so \eqref{yvarok} follows
by \eqref{genihy}.
If $y\in Y-\yub$ and $b\in\hsetx{x}{y}$ for some $x$ with $y\to x$, then  \eqref{yvarok} follows
by \eqref{xyset} and \eqref{xysetdef}.
If $y\in Y-\yub$ and $b\in\hsetlb{y}$, then  \eqref{yvarok}  follows from \eqref{lbgen1}--\eqref{lbsetdef}
and from the fact that $\yvarbar$ is a conjunct of each $\btp{j}^\circ(y)$ and $\btp{s}^\bullet(y)$
(see \eqref{bvarnosp2}--\eqref{bvarnosp22}).

Now suppose indirectly that $y\ne y'\in Y$ and there is some $b\in\hset{y}\cap\hset{y'}$. 
By \eqref{yvarok}, there are $a$ and $a'$ such that 
$\M,(a,b)\models\yvarbar$ and $\M,(a',b)\models\yvarbar'$, and so 
$\M,(a,b)\models\Bh^+\yvar$ and $\M, (a',b)\models\neg\yvar$ by \eqref{xybar}, a contradiction.
(The case of $x,x'\in X$ is similar, using the $\xvar$ variables.)

Item (ii) follows from Claim~\ref{c:out}.

Item (iii):
Constraints of the form $(z=n)\in\con{\F}$
hold by Claim~\ref{c:zok}~(i).
Constraints of the form $(z'\geq 2 z)\in\con{\F}$
hold by Claim~\ref{c:zok}~(ii).
Constraints of the form $(z'\geq z)\in\con{\F}$
hold by Claim~\ref{c:zok}~(iii).
 Finally, consider a constraint of the form $(z\geq k)\in\con{\F}$, for some $k\in\nNp$.
 As $|\hset{z}|\geq\smin(z)$ by Claim~\ref{c:zok}~(iv), we have $|\hset{z}|\geq k$ by \eqref{lbdef} and Claim~\ref{c:lbsolution}.
\end{proof}


\paragraph{The consequent $\consf$ of the generalised Sahlqvist implication}
We will use the positive formulas given in Lemma~\ref{l:solgen}~(ii).
%
For every $x\in X$, we let
\begin{equation}\label{lfinal}
\finalf{x}:\quad\left\{
\begin{array}{ll}
\displaystyle\Dv^+\bigwedge_{i=1}^{n}\hvar{x}{i}, & \mbox{if $x$ is $n$-strict for some $n\in\nNp$},\\[5pt]
\displaystyle
\bigwedge_{\substack{y'\in\yub\\ y'\to x}} \Bv^+\cvar{x}{y'}, & \mbox{if $x$ is non-strict}.
\end{array}
\right.
\end{equation}
Similarly, 
for every $y\in\yub$, we let
%
\[
\finalf{y}:\quad\left\{
\begin{array}{ll}
\displaystyle\Dh^+\bigwedge_{i=1}^{n}\hvar{y}{i}, & \mbox{if $y$ is $n$-strict for some $n\in\nNp$},\\[5pt]
\displaystyle
\bigwedge_{\substack{x'\\ x'\to y}} \Bh^+\cvar{x'}{y}, & \mbox{if $y$ is non-strict}.
\end{array}
\right.
\]
Then we let
\begin{align*}
\consf:\quad & \left\{
\begin{array}{ll}
\displaystyle
\Dh^+\bigwedge_{x\in X}\finalf{x}\lor\Dv^+\bigwedge_{y\in\yub}\finalf{y}, & \mbox{if $\yub=Y$},\\[5pt]
\displaystyle
\Dh^+\bigwedge_{x\in X}\finalf{x}, & \mbox{if $Y-\yub\ne\emptyset$.}
\end{array}
\right.\\[5pt]
\badsqf:\quad & \solf\to\consf
\end{align*}
(see Example~\ref{e:sqbad} below).
Using Lemma~\ref{l:gensahl},
it is straightforward to check that $\badsqf$ is a generalised Sahlqvist formula.
Also, by Lemma~\ref{l:solutiondec}, it is easy to see the following:

\begin{lemma}\label{l:badsqdec}
It is decidable whether a bimodal formula is of the form $\badsqf$ for some finite \grid{} $\F=(X,Y,g)$ 
for which every $z\in X\cup Y$ is bounded,
$\con{\F}$ is defined and has a solution, but $\sum_{y\in Y}\xi(y)\ne\sum_{x\in X}\xi(x)$ for any solution $\xi$ of $\con{\F}$.

It is also decidable whether a bimodal formula is of the form $\badsqf$ for some finite \grid{} $\F=(X,Y,g)$ 
for which every $x\in X$ is bounded, there is some unbounded $y^\star\in Y$,
$\con{\F}$ is defined and has a solution, but 
$\sum_{x\in X }\xi(x) <\sum_{y\in Y}\xi(y)$ for every solution $\xi$ of $\con{\F}$.
\end{lemma}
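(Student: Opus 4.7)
The plan is to reduce the decidability question to Lemma~\ref{l:solutiondec} together with a routine analysis of the consequent formula $\consf$. Given a candidate bimodal formula $\phi$, one first checks syntactically that $\phi$ has the shape of an implication $\alpha \to \beta$; if not, reject. Otherwise, by Lemma~\ref{l:solutiondec} it is decidable whether $\alpha$ equals $\solf$ for some finite \grid{} $\F=(X,Y,g)$ falling under the first case of the present statement (every $z\in X\cup Y$ is bounded), and likewise whether $\alpha$ equals $\solf$ for some finite \grid{} falling under the second case (every $x\in X$ is bounded with some unbounded $y^\star\in Y$).

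The proof of Lemma~\ref{l:solutiondec} already observes that $\solf$ is determined by a finite bundle of combinatorial data: the finite acyclic digraph $\hfub$, the types of the switch \clusters{} corresponding to its edges, the $\minz{z}$ values for its nodes, the set of strict nodes together with their $n$-values, and the partition of bounded nodes into $X$ and $\yub$. All of these can be effectively read off from $\alpha$ itself, since the fresh propositional variables $\xvar$, $\yvar$, $\Cvar{j}$, $\hvar{z}{i}$, $\bvar{j}^\circ(z)$, $\bvar{s}^\bullet(z)$ are indexed literally by the corresponding grid data, and the pattern of $\Dh$, $\Dv$, $\Bh$, $\Bv$ occurrences within $\treef{q}$, $\swf{z}$ and $\lbf{z}$ encodes $\hfub$ together with the cluster types. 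The case distinction between $\yub=Y$ and $Y-\yub\ne\emptyset$ is likewise recoverable from $\alpha$, as in the second case $\antefthree$ carries additional conjuncts $\Dh^+\Dv^+\lbf{y}$ for each unbounded $y \in Y - \yub$.

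Once that data has been extracted, the consequent $\consf$ is entirely determined by it: its outer shape is either $\Dh^+\bigwedge_{x \in X}\finalf{x} \lor \Dv^+\bigwedge_{y \in \yub}\finalf{y}$ or $\Dh^+\bigwedge_{x \in X}\finalf{x}$ depending on the case, and each $\finalf{z}$ is determined by whether $z$ is $n$-strict and by the set of $\to$-edges in $\hfub$ ending at $z$, both of which have already been recovered. The algorithm therefore constructs the expected $\consf$ from this data, forms $\solf \to \consf$, and checks whether $\phi$ matches it syntactically.

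The only genuine subtlety is confirming that the correspondence $\F \mapsto \badsqf$ is effectively injective on the restricted class of grids under consideration, so that no nontrivial search is required: but this is immediate from the fact that the variable indices and the arities of the conjunctions in $\badsqf$ are in bijection with the nodes of $\hfub$, the edges of $\hfub$, and the $\minz{z}$ values, which are all bounded by the size of $\phi$. Hence no unbounded enumeration is needed and the whole procedure terminates, establishing decidability in both cases.
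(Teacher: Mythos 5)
Your proposal is correct and follows essentially the same route as the paper, which proves this lemma simply by appeal to Lemma~\ref{l:solutiondec} together with the observation that $\consf$ is determined by the same finite data (the acyclic digraph $\hfub$ with its cluster types, the $\minz{z}$ values, the strict nodes, and the partition into $X$ and $\yub$) that determines $\solf$, all of which is bounded by the size of the input formula. Your elaboration of how that data is read off the antecedent and how the expected consequent is then reconstructed and matched is a faithful filling-in of the paper's one-line argument.
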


\begin{lemma}\label{l:finalsat}
$\badsqf$ is not valid in $\F$.
\end{lemma}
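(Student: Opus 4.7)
My plan to prove Lemma~\ref{l:finalsat} is to construct a model $\M$ on $\F$ and a point $w_0$ at which $\solf$ holds but $\consf$ fails, thereby refuting $\badsqf = \solf \to \consf$. The natural candidate for $\M$ is the valuation used in the proof of Lemma~\ref{l:solfsat} applied with $\F_1 = \F$; since $\F$ trivially contains itself as a subgrid, this already ensures $\solf$ is satisfiable in $\F$.

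First, I will verify that $\solf$ is in fact satisfied at every point of $\F$ under this valuation. Each top-level conjunct of $\solf$ has outer shape $\Dh^+\Dv^+\psi$ or $\Bh^+\Bv^+\psi$, and in any rooted grid of commuting pseudo-equivalence relations the composed relation $\Rh^+\Rv^+$ is total --- any two points are linked via the intermediate cell supplied by $g$ --- so satisfaction at one point lifts to satisfaction everywhere. It therefore suffices to exhibit one point where $\consf$ fails.

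Second, I will unfold $\consf$ using the explicit valuation, rewriting each subformula $\finalf{z}$ as a combinatorial condition on the column (respectively row) of the evaluated point: for non-strict $z$, the relevant $\Bv^+\cvar{z}{y'}$ (or $\Bh^+\cvar{x'}{z}$) clauses forbid the evaluated point's column (row) from being certain designated $\hfub$-predecessors of $z$; for strict $z$, the $\Dv^+\bigwedge_i \hvar{z}{i}$ clause demands a suitable cell outside the witness $\F^{zy^z}$. The outer $\Dh^+$ (resp.\ $\Dv^+$) then transforms $\consf$ into the assertion that $\F$ admits a column $y^\ast \in Y$ whose cell sizes simultaneously satisfy all $\finalf{x}$, together with its symmetric row variant when $\yub = Y$.

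The main step, and the main anticipated obstacle, is ruling out such a column (or row) in $\F$. My plan is to mirror the construction of Lemma~\ref{l:solgen} in reverse inside $\F$: a hypothetical witness $y^\ast$ would let me read off, by propagating the cell sizes $|\F^{xy^\ast}|$ along the edges of $\hfub$ using Claim~\ref{c:ubgraph} and the switch-cluster encoding $\ctpp{xy}$, a solution $\xi$ of $\con{\F}$ with matching sums $\sum_{x\in X}\xi(x) = \sum_{y\in Y}\xi(y)$, contradicting the hypothesis of Lemma~\ref{l:badsqgrid} in each of the three subcases (i), (ii)(a), and (ii)(b) handled in this section. I expect this arithmetic bookkeeping, rather than any subtle modal reasoning, to be where the bulk of the work lies.
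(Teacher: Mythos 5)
Your choice of countermodel is the paper's: take the valuation $\M$ from the proof of Lemma~\ref{l:solfsat} with $\F_1=\F$, and your observation that the $\Dh^+\Dv^+$ and $\Bh^+\Bv^+$ prefixes make the conjuncts of $\solf$ ``global'' over a rooted grid is fine. The gap is in your main step. You propose to refute $\consf$ by assuming a witness and extracting from it a solution $\xi$ of $\con{\F}$ with $\sum_{x\in X}\xi(x)=\sum_{y\in Y}\xi(y)$. That is the argument for the \emph{other} direction (Lemma~\ref{l:finalfinsq}, validity on square products), where the sets $\hset{z}$ supplied by Lemma~\ref{l:solgen} really do form a solution of $\con{\F}$ and the arithmetic $|U|=|V|$ forces a point outside $\bigcup_{x\in X}\hset{x}$ or $\bigcup_{y\in Y}\hset{y}$. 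Inside the grid $\F$ no such arithmetic is available or needed: a witness for $\Dh^+\bigwedge_{x\in X}\finalf{x}$ is not ``an extra column $y^\ast\in Y$'' (note also that $Y$ indexes rows here) but simply a point $w$ of $\F$, and every point of $\F$ already lies in one of the columns $g(x_0,\cdot)$ with $x_0\in X$. Moreover $\con{\F}$ \emph{does} have solutions in this setting (Corollary~\ref{co:hassolution}); only equal-sum ones are excluded, so merely producing ``a solution'' from a witness is no contradiction, and I do not see how the cell sizes $|\F^{xy}|$ would ever assemble into an equal-sum one.

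What is actually needed is a direct, local verification that the designated valuation falsifies $\consf$ everywhere: for each $x_0\in X$, the conjunct $\finalf{x_0}$ must fail at every point $w$ of column $x_0$, because the falsifying data live in that very column and are $\Rv^+$-accessible from $w$ --- for non-strict $x_0$ the point $c^{x_0y'}$ refutes $\Bv^+\cvar{x_0}{y'}$, while for $n$-strict $x_0$ the valuation of the $\hvar{x_0}{i}$ has to be arranged so that $\bigwedge_{i=1}^{n}\hvar{x_0}{i}$ fails throughout column $x_0$ (this is the one genuinely delicate point, since column $x_0$ may contain clusters besides the strict witness cluster). The remaining conjuncts $\finalf{x}$ for $x\ne x_0$ hold automatically at $w$, so $\bigwedge_{x\in X}\finalf{x}$ fails at every point of $\F$, hence so does $\Dh^+\bigwedge_{x\in X}\finalf{x}$; the disjunct $\Dv^+\bigwedge_{y\in\yub}\finalf{y}$ is handled symmetrically. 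No bookkeeping with solutions of $\con{\F}$ enters at all, so the bulk of the work is modal/combinatorial rather than arithmetic, contrary to what your plan anticipates.
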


\begin{proof}
Let $\F_1=\F$ and take the model $\M$ on $\F_1$ from the proof of Lemma~\ref{l:solfsat} satisfying $\solf$.
It is easy to see that $\neg\consf$ is satisfied in $\M$ as well.
\end{proof}

\begin{lemma}\label{l:finalfinsq}
$\badsqf$ is valid in every square product of difference frames.
\end{lemma}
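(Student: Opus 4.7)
The plan is to combine Lemma~\ref{l:solgen} with the squareness hypothesis $|U|=|V|$. Let $\M$ be a model on a square product $(U,\drel{U})\mprod(V,\drel{V})$ and suppose $\M,(u,v)\models\solf$ for some point $(u,v)$. Applying Lemma~\ref{l:solgen}, I extract pairwise disjoint families $\{\hset{x} : x\in X\}$ of subsets of $U$ and $\{\hset{y} : y\in Y\}$ of subsets of $V$ such that the assignment $\xi(z):=|\hset{z}|$ is a solution of $\con{\F}$, and such that the positive ``identification'' formulas of clause~(ii) hold outside each $\hset{z}$ with $z\in X\cup\yub$.

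Setting $N_X=\sum_{x\in X}\xi(x)$ and $N_Y=\sum_{y\in Y}\xi(y)$, disjointness gives $N_X\leq|U|$ and $N_Y\leq|V|$. Since $\F$ is square-bad we have $N_X\neq N_Y$ for every solution, and the squareness identity $|U|=|V|$ therefore forces at least one of the strict inequalities $N_X<|U|$ or $N_Y<|V|$ to hold.

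I then case-split. If $N_X<|U|$, I pick any $a^*\in U\setminus\bigcup_{x\in X}\hset{x}$. Then clause~(ii) of Lemma~\ref{l:solgen} together with the definition \eqref{lfinal} of $\finalf{x}$ immediately gives $\M,(a^*,v)\models\finalf{x}$ for every $x\in X$, whence $\M,(u,v)\models\Dh^+\bigwedge_{x\in X}\finalf{x}$; this implies $\consf$ in either of its two forms. If instead $N_Y<|V|$, clause~(ii) only supplies identification for $y\in\yub$, so I invoke Lemma~\ref{l:badsqgrid}(ii)(a) to rule out the subcase $Y-\yub\neq\emptyset$: there $N_X<N_Y$ holds for every solution, forcing $N_X<|U|$ and reducing to the previous case. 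Hence $N_Y<|V|$ is only used when $\yub=Y$, and then picking $b^*\in V\setminus\bigcup_{y\in Y}\hset{y}$ symmetrically yields $\M,(u,v)\models\Dv^+\bigwedge_{y\in Y}\finalf{y}$, which is the second disjunct of $\consf$ available in that case. Either way $\consf$ holds at $(u,v)$, so $\badsqf=\solf\to\consf$ is satisfied there, and since $(u,v)$ was arbitrary the formula is valid.

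The proof is essentially bookkeeping once Lemma~\ref{l:solgen} has done the heavy lifting. The only subtlety that requires care is matching the asymmetric form of $\consf$ (which carries the $\Dv^+$ disjunct only when $\yub=Y$) with the direction of the sum-inequality produced by the square-badness case analysis: Lemma~\ref{l:badsqgrid}(ii)(a) ensures that whenever positive identification is unavailable for some $y\in Y-\yub$, the inequality $N_X<N_Y$ guarantees we are in the direction where $\Dh^+\bigwedge_{x\in X}\finalf{x}$ suffices.
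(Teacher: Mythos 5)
Your proof is correct and follows essentially the same route as the paper's: both apply Lemma~\ref{l:solgen} to extract the disjoint sets $\hset{z}$ whose cardinalities form a solution of $\con{\F}$, use $|U|=|V|$ together with square-badness to find a point of $U$ (or $V$) outside all the $\hset{x}$ (resp.\ $\hset{y}$), and conclude via clause~(ii); your appeal to Lemma~\ref{l:badsqgrid}(ii)(a) when $Y-\yub\neq\emptyset$ is exactly the paper's second case. The only difference is organizational (you case-split on which strict inequality holds rather than on whether $\yub=Y$), which changes nothing of substance.
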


\begin{proof}
Suppose $\M$ is a model on $(U,\drel{U})\mprod(V,\drel{V})$ for some $U,V$ with $|U|=|V|>0$, and
$\M,(u,v)\models\solf$.
For every $z\in X\cup Y$, take the set $\hset{z}$ from Lemma~\ref{l:solgen}.
There are two cases:
\begin{itemize}
\item 
If $\yub=Y$, then 
Lemma~\ref{l:solgen}~(i) and (iii) imply that
$\sum_{x\in X}|\hset{x}|\ne \sum_{y\in Y}|\hset{y}|$.
As $|U|=|V|$, 
either there is $a\in U-\bigcup_{x\in X}\hset{x}$,
or there is $b\in V-\bigcup_{y\in Y}\hset{y}$.

\item
If $Y-\yub\ne\emptyset$, then 
Lemmas~\ref{l:badsqgrid} and \ref{l:solgen}~(i),(iii)
imply that
$\sum_{x\in X}|\hset{x}|< \sum_{y\in Y}|\hset{y}|$.
As $|U|=|V|$, there is $a\in U-\bigcup_{x\in X}\hset{x}$.
\end{itemize}
In both cases,
$\M,(u,v)\models\consf$ follows by Lemma~\ref{l:solgen}~(ii).
\end{proof}

Note
 that when $X\cup Y$ is finite and every $z\in X\cup Y$ is bounded (cf.\ case (i) in Lemma~\ref{l:badsqgrid}),
  then it can happen that 
 $\sum_{x\in X}\xi(x)\ne\sum_{y\in Y}\xi(y)$ for any solution of $\con{\F}$, but
  there are solutions $\xi_1$ and $\xi_2$ of $\con{\F}$
such that $\sum_{x\in X}\xi_1(x)<\sum_{y\in Y}\xi_1(y)$ and $\sum_{x\in X}\xi_2(x)>\sum_{y\in Y}\xi_2(y)$;
see Fig.~\ref{f:bothways} for an example.

\begin{example}\label{e:sqbad}
{\rm
Take the square-bad \grid{} $\F$ in Fig.~\ref{f:sqbad}.
We describe the formulas $\antefone$, $\anteftwo$, $\antefthree$, and $\consf$.

To begin with, we have $X=\{x_1,x_2\}$ and $\yub=\{y_1\}$ (so $\F$ belongs to case (ii)(a) in Lemma~\ref{l:badsqgrid}).
Also, $\gf=\hfub$ has an isolated node $x_2$ and two edges: $y_1\to^2 x_1$ and $y_2\to^2 x_1$. Thus,
the unravelling $\tfub$ of the bounded part $\hfubm$ of $\hfub$ has two roots, $y_1$ and $x_2$ (both
are $6$-strict), and one edge: $y_1\to^2 x_1$. Therefore, we have:
\begin{multline*}
\antefone:\quad
\Dh^+\Dv^+\Bigl(\bigwedge_{i=1}^{6}\Dv^+(\yvarbar_1\land\neg\hvar{y_1}{i}\land\Bv\hvar{y_1}{i})
\land\Dh\bigl(\ctpp{x_1}{y_1}\land\Dv\ctpp{x_1}{y_2}\bigr)\Bigr)\\
\land\Dh^+\Dv^+\Bigl(\bigwedge_{j=1}^{6}\Dh^+(\xvarbar_2\land\neg\hvar{x_2}{j}\land\Bh\hvar{x_2}{j})\Bigr),
\end{multline*}
where
\[
\ctpp{x_1}{y_j}:\quad
\xvarbar_1\land\yvarbar_j\land
\neg\cvar{x_1}{y_j}\land\Bh\cvar{x_1}{y_j}\land\Dv(\xvarbar_1\land\yvarbar_j\land\neg\cvar{x_1}{y_j}\land\Bh\cvar{x_1}{y_j}),\quad\mbox{for $j=1,2$.}
\]
We also have
\[
\anteftwo=\swf{y_1}:\quad\Bh^+\Bv^+\Bigl(\bigwedge_{i=1}^{6}\hvar{y_1}{i}\to\Bh^+\cvar{x_1}{y_1}\Bigr).
\]

Next, we compute the solution $\smin$ of $\con{\F}$. Note that all strongly connected components in $\gf$
are singletons, and so $\smin(z)=\numin\bigl(\{z\}\bigr)$ for all $z\in X\cup Y$. So we have:
%
\begin{align*}
& \smin(x_1)=\minz{x_1}=3,\\
& \smin(y_1)=\max\bigl\{2\cdot\smin(x_1),\minz{y_1}\bigr\}=\max\{2\cdot 3,6\}=6,\\
& \smin(y_2)=\max\bigl\{2\cdot\smin(x_1),\minz{y_2}\bigr\}=\max\{2\cdot 3,4\}=6,\\
& \smin(x_2)=\minz{x_2}=6. 
\end{align*}
Thus, $\xlb=\{x_1\}$, $\ylb=\emptyset$ and $\ylb\cup (Y-\yub)=\{y_2\}$.
We choose $\F^{x_1y_1}$ and $\F^{x_2y_2}$ to `witness'  that
$\minz{x_1}=3$ and $\minz{y_2}=4$, respectively, and so we have 
$\nor{x_1}=\noi{x_1}=1$, $\nor{y_2}=2$, $\noi{y_2}=0$, and
\begin{align*}
& \antefthree:\quad  \Dh^+\Dv^+  \Bigl[
\Dh^+\Bigl(\btp{1}^\circ(x_1)\land\Dv^+\neg\cvar{x_1}{y_1}\land\Dh\bigl(\btp{1}^\circ(x_1)\land\Dv^+\neg\cvar{x_1}{y_1}\bigr)  \Bigr)\\
& \hspace*{.7cm}\land\Dh^+\bigl(\btp{1}^\bullet(x_1)\land\Dv^+\neg\cvar{x_1}{y_1}\bigr)\Bigr]
 \land\Dh^+\Dv^+\Bigl[\Dv^+\bigl(\btp{1}^\circ(y_2)\land\Dv\btp{1}^\circ(y_2)\bigr)
\land\Dv^+\bigl(\btp{2}^\circ(y_2)\land\Dv\btp{2}^\circ(y_2)\bigr)\Bigr],
\end{align*}
where
\begin{align*}
& \btp{1}^\circ(x_1):\quad\xvarbar_1\land\bvar{1}^\circ(x_1)\land\neg\bvar{1}^\bullet(x_1),\hspace*{2cm}
\btp{1}^\bullet(x_1):\quad\xvarbar_1\land\bvar{1}^\bullet(x_1)\land\neg\bvar{1}^\circ(x_1),\\
& \btp{1}^\circ(y_2):\quad\yvarbar_2\land\bvar{1}^\circ(y_2)\land\neg\bvar{2}^\circ(y_2),\hspace*{2.1cm}
\btp{2}^\circ(y_2):\quad\yvarbar_2\land\bvar{2}^\circ(y_2)\land\neg\bvar{1}^\circ(y_2).
\end{align*}

Finally, we have:
\[
\consf:\quad
\Dh^+\Bigl(\Bv^+\cvar{x_1}{y_1}\land \Dv^+\bigwedge_{i=1}^{6}\hvar{x_2}{i}\Bigr).
\]
}
\end{example}

\begin{figure}
\begin{center}
\setlength{\unitlength}{.03cm}
\begin{picture}(310,260)
\multiput(0,0)(0,50){6}{\line(1,0){300}}
\multiput(0,0)(50,0){7}{\line(0,1){250}}

\put(12,8){\ii}
\put(12,32){\ii}
\multiput(16,18)(0,4){3}{\circle*{.5}}
\put(25,20){$14$}

\put(20,109){\ri}
\put(20,118){\rr}
\put(20,127){\rr}
\put(20,136){\rr}

\put(70,65){\ii}
\put(70,75){\rr}

\put(70,115){\ii}
\put(70,125){\rr}

\put(112,158){\ii}
\put(112,182){\ii}
\multiput(116,168)(0,4){3}{\circle*{.5}}
\put(125,170){$8$}

\put(170,165){\ir}
\put(170,175){\rr}

\put(170,215){\ii}
\put(170,225){\rr}

\put(220,215){\ii}
\put(220,225){\rr}

\put(270,215){\ii}
\put(270,225){\rr}

\multiput(70,20)(50,0){5}{\rr}
\put(20,70){\rr}
\multiput(120,70)(50,0){4}{\rr}
\multiput(120,120)(50,0){4}{\rr}
\multiput(20,170)(50,0){2}{\rr}
\multiput(220,170)(50,0){2}{\rr}
\multiput(20,220)(50,0){3}{\rr}

\put(20,-10){$x_1$}
\put(70,-10){$x_2$}
\put(120,-10){$x_3$}
\put(170,-10){$x_4$}
\put(220,-10){$x_5$}
\put(270,-10){$x_6$}
\put(305,20){$y_1$}
\put(305,70){$y_2$}
\put(305,120){$y_3$}
\put(305,170){$y_4$}
\put(305,220){$y_5$}

\put(-20,240){$\F$:}
\end{picture}

\begin{picture}(305,170)(0,40)
\put(0,165){$\gf$:}
\put(46,165){${}^{=14}$}
\put(50,160){$y_1$}
\put(96,165){${}^{=14}$}
\put(100,160){$x_1$}
\put(200,165){${}^{=8}$}
\put(200,160){$y_4$}
\put(300,165){${}^{=8}$}
\put(300,160){$x_3$}

\thicklines
\multiput(103,155)(100,0){2}{\vector(0,-1){25}}
\multiput(107,140)(100,0){2}{${}_2$}

\put(100,120){$y_3$}
\put(115,115){${}^{\leq 7}$}
\put(87,115){${}^{7 \leq}$}

\put(200,120){$x_4$}
\put(215,115){${}^{\leq 4}$}
\put(187,115){${}^{3\leq}$}

\multiput(105,90)(100,0){2}{\vector(0,1){25}}
\multiput(107,100)(100,0){2}{${}_1$}
\multiput(95,100)(100,0){2}{${}^1$}
\multiput(101,115)(100,0){2}{\vector(0,-1){25}}

\put(100,80){$x_2$}
\put(115,75){${}^{\geq 3}$}
\put(200,80){$y_5$}
\put(215,75){${}^{\geq 3}$}

\multiput(105,50)(100,0){2}{\vector(0,1){25}}
\multiput(101,75)(100,0){2}{\vector(0,-1){25}}
\multiput(107,60)(100,0){2}{${}_1$}
\multiput(95,60)(100,0){2}{${}^1$}

\put(100,40){$y_2$}
\put(115,35){${}^{\geq 3}$}
\put(200,40){$x_5$}
\put(187,35){${}^{3\leq}$}
\put(235,40){$x_6$}
\put(250,35){${}^{\geq 3}$}

\put(235,50){\vector(-1,1){25}}
\put(215,75){\vector(1,-1){25}}
\put(230,65){${}_1$}
\put(225,50){${}_1$}
\end{picture}
\caption{An example of a square-bad \grid{} $\F$, with $\con{\F}$ having two solutions $\xi_1$, $\xi_2$ such that $\sum_{x\in X}\xi_1(x)<\sum_{y\in Y}\xi_1(y)$ and $\sum_{x\in X}\xi_2(x)>\sum_{y\in Y}\xi_2(y)$.}\label{f:bothways} 
\end{center}
\end{figure}


\subsubsection{$X\cup Y$ is infinite}\label{s:caseinf}

If $X\cup Y$ is infinite then,
by Lemma~\ref{l:badsqgrid}~(iii), there are two cases. We suppose that
$X$ is finite, every $x\in X$ is bounded, $Y$ is infinite, and 
there is a finite subgrid $\F^-=(X,Y^-,g^-)$ of $\F$ such that
\begin{equation}\label{alllarge}
\mbox{$\sum_{x\in X }\xi(x) <\sum_{y\in Y^-}\xi(y)$, for every solution $\xi$ of $\con{\F^-}$.}
\end{equation}
(The other case is similar.)
By \eqref{alllarge} and the finiteness of 
$\F^-$, the formula $\solfsub$ is defined in \S\ref{s:casefin}.
For every $x\in X$, take the formula $\finalf{x}$ from \eqref{lfinal}, 
and let 
\[
\badsqf:\quad \solfsub\to \Dh^+\bigwedge_{x\in X}\finalf{x}.
\]
Then $\badsqf$ is clearly a generalised Sahlqvist formula.
An inspection of the proof of Lemma~\ref{l:badsqgrid} shows that by
Lemma~\ref{l:badsqdec} we have the following:

\begin{lemma}\label{l:badsqinfdec}
It is decidable whether a bimodal formula is of the form $\badsqf$ for some infinite \grid{} $\F=(X,Y,g)$ 
for which $X$ is finite, every $x\in X$ is bounded, and 
$\F^-=(X,Y^-,g^-)$ is a finite subgrid of $\F$ such that $\con{\F^-}$ is defined and has a solution, but 
$\sum_{x\in X }\xi(x) <\sum_{y\in Y^-}\xi(y)$ for every solution $\xi$ of $\con{\F^-}$.
\end{lemma}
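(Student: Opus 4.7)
The plan is to adapt the proofs of Lemmas~\ref{l:solutiondec} and \ref{l:badsqdec}. The key observation is that the formula $\badsqf = \solfsub \to \Dh^+ \bigwedge_{x \in X} \finalf{x}$ defined in \S\ref{s:caseinf} depends only on the finite subgrid $\F^-$, not on the infinite \grid{} $\F$ containing it. Indeed, the antecedent $\solfsub$ is precisely the formula $\solf$ of \S\ref{s:casefin} applied to $\F^-$, and the consequent uses only the propositional variables $\hvar{x}{i}$ and $\cvar{x}{y'}$ already introduced for $\solfsub$, together with the set $X$ and those $y' \in Y^-$ bounded in $\F^-$ with $y' \to x$ in the acyclic subdigraph constructed via Claim~\ref{c:ubgraph} for $\F^-$.

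By the argument of Lemma~\ref{l:solutiondec}, $\badsqf$ is thus determined by the same finite data that determines $\solfsub$: the acyclic subdigraph built for $\F^-$, the types of the clusters corresponding to its edges, the designated strict nodes, and the values $\minz{z} \in \nNp$. For the consequent this is immediate by inspection of \eqref{lfinal}. Hence, as in Lemma~\ref{l:badsqdec}, it is decidable whether a given bimodal formula $\phi$ arises as $\badsqf$ for some such data, and hence for some finite \grid{} $\F^-$.

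Given such a candidate $\F^-$, I would algorithmically verify the remaining conditions from Lemma~\ref{l:badsqgrid}(iii)(a): that every $x \in X$ is bounded in the digraph built for $\F^-$ (decidable by inspecting this finite digraph); that $\con{\F^-}$ is consistent (decidable, since $\con{\F^-}$ is a finite set of linear constraints over $\nNp \cup \{\infy\}$, and one may attempt to compute $\smin$ as in Lemma~\ref{l:bad}); and that $\sum_{x \in X} \xi(x) < \sum_{y \in Y^-} \xi(y)$ for every solution $\xi$ of $\con{\F^-}$ (decidable via finite integer-linear reasoning, since each bounded variable is capped by $\ub{\F^-}{z}$). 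Finally, any such $\F^-$ extends trivially to an infinite \grid{} $\F$ with $X$ still bounded --- for instance, by adjoining countably many fresh rows $y^+$ to $Y^-$ and taking each cluster over $(x,y^+)$ to be an isomorphic copy of $\F^{xy_0}$ for some fixed $y_0 \in Y^-$, which preserves all constraints in $\con{\F^-}$ involving $X$. The main obstacle is purely clerical: matching the syntactic shape of $\badsqf$ with the finite data of $\F^-$, which is essentially handled by the same analysis used in Lemma~\ref{l:badsqdec}.
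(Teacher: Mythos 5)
Your proposal is correct and follows essentially the same route as the paper, whose entire proof is a one-line reduction: ``an inspection of the proof of Lemma~\ref{l:badsqgrid} shows that by Lemma~\ref{l:badsqdec} we have the following.'' You simply make explicit what that inspection amounts to --- that $\badsqf$ depends only on the finite data of $\F^-$ (so Lemma~\ref{l:badsqdec}'s analysis applies), and that any admissible finite $\F^-$ is realisable as a subgrid of an infinite $\F$ with the required properties --- which is a faithful, somewhat more detailed rendering of the paper's argument rather than a different one.
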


\begin{lemma}\label{l:infsat} 
$\badsqf$ is not valid in $\F$.
\end{lemma}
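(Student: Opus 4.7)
The plan is to mirror Lemma~\ref{l:finalsat}'s proof for the finite case. First, since $\F^- = (X, Y^-, g^-)$ is by construction a finite subgrid of $\F = (X, Y, g)$, Lemma~\ref{l:solfsat} applies with $\F_1 := \F$ and $\F^-$ in the role of ``$\F$'' there, yielding the concrete Kripke model $\M$ on $\F$ defined in its proof together with a point $r$ at which $\solfsub$ is satisfied.

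Second, I would verify that $\M, r \not\models \Dh^+ \bigwedge_{x \in X}\finalf{x}$; combined with $\M, r \models \solfsub$ this gives $\M, r \not\models \badsqf$ and hence the non-validity in $\F$. The verification is the analogue of the ``easy to see'' step in Lemma~\ref{l:finalsat}: the conjunction $\bigwedge_{x\in X}\finalf{x}$ expresses (in the sense made precise by Lemma~\ref{l:solgen}(ii) for product frames) that the evaluation point lies outside every row of the grid, and since every point of $\F$ belongs to some row $x(w) \in X$, the conjunction cannot be witnessed at any $\Rh^+$-successor of $r$. Concretely, for each such $w$ one selects the conjunct $\finalf{x(w)}$: when $x(w)$ is non-strict, Claim~\ref{c:ubgraph}(i) (applied with $\F^-$ in place of $\F$) furnishes a bounded predecessor $y' \in \yub$ of $x(w)$, and the $\Rv$-reachability of $c^{x(w)y'}$ from $w$ refutes $\Bv^+\cvar{x(w)}{y'}$; the strict case is handled analogously via $\hvar{x(w)}{i}$ and the witness points $a_i^{x(w)}$.

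The main new subtlety, absent from the finite case, is that $\F$ contains columns $y \in Y - Y^-$ not present in the subgrid underlying $\solfsub$. These should be harmless because the row-indexed refutation depends only on clusters $\F^{x(w)y'}$ with $y' \in Y^-$, and the $\Rh^+$-successors of any $r$ in $\F$ range over rows indexed by $X$ independently of which column $w$ lies in. With this observation the finite-case argument transfers essentially verbatim, and the hard part is not the infinite case itself but the care needed to match each $w$ with a conjunct whose failure is already guaranteed by the bounded-row machinery developed for $\solfsub$.
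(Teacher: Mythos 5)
Your proposal follows the paper's proof exactly: the paper likewise invokes the model constructed in the proof of Lemma~\ref{l:solfsat} (applied to the subgrid $\F^-$ but defined on all of $\F$) to satisfy $\solfsub$, and then observes that, since the $X$-coordinates of $\F^-$ and $\F$ coincide, $\neg\Dh^+\bigwedge_{x\in X}\finalf{x}$ holds there as well. Your elaboration of that last ``easy to see'' step --- that every point of $\F$ lies in some row $x\in X$ and refutes the conjunct $\finalf{x}$ via witnesses ($c^{xy'}$, resp.\ $a^x_i$) living in clusters indexed by $Y^-$, so the extra columns of $\F$ are irrelevant --- is exactly the intended argument.
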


\begin{proof}
As $\F^-$ is a subgrid of $\F$,  the proof of  Lemma~\ref{l:solfsat} gives a model $\M$ on $\F$ satisfying $\solfsub$.
As the `$X$-coordinates' of both $\F^-$ and $\F$ are the same, it is easy to see that
 $\neg\Dh^+\bigwedge_{x\in X}\finalf{x}$
is satisfied in $\M$ as well.
\end{proof}

\begin{lemma}\label{l:finalfinsqinf}
$\badsqf$ is valid in every square product of difference frames.
\end{lemma}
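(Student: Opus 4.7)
The plan is to mimic the proof of Lemma~\ref{l:finalfinsq}, but applied to the finite subgrid $\F^-$ rather than to the (infinite) $\F$. First I would observe that by the construction of $\F^-$ in the proof of Lemma~\ref{l:badsqgrid}~(iii)(a), $Y^-$ was chosen to contain, for every $x\in X$, all nodes of the bounding path $\Pub{\F}{x}$ from $x$ to a strict node; hence every $x\in X$ remains bounded in $\F^-$ with $\ub{\F^-}{x}=\ub{\F}{x}\in\nNp$. Combined with the hypothesis that $\con{\F^-}$ has a solution, this shows that the construction of $\solfsub$ from \S\ref{s:casefin} is legitimate and that Lemma~\ref{l:solgen} is available for~$\F^-$.

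Now let $(U,\drel{U})\mprod(V,\drel{V})$ be an arbitrary square product of difference frames, let $\M$ be a model on it, and suppose $\M,(u,v)\models\solfsub$. I would invoke Lemma~\ref{l:solgen} (applied to $\F^-$) to produce sets $\hset{x}\subseteq U$ for each $x\in X$ and $\hset{y}\subseteq V$ for each $y\in Y^-$ such that (i) the $\hset{x}$ are pairwise disjoint in $U$ and the $\hset{y}$ are pairwise disjoint in $V$; (ii) for every $a\in U-\hset{x}$, $\M,(a,v)\models\finalf{x}$; and (iii) the map $\xi(z)=|\hset{z}|$ is a solution of $\con{\F^-}$.

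Next, I would combine hypothesis~\eqref{alllarge} with disjointness to obtain the cardinality chain
\[
\sum_{x\in X}|\hset{x}|\;<\;\sum_{y\in Y^-}|\hset{y}|\;\leq\;|V|\;=\;|U|.
\]
Since every $x\in X$ is bounded in $\F^-$, each $|\hset{x}|\leq\ub{\F^-}{x}$ is a positive integer, and $X$ is finite, so the left-hand side is a natural number regardless of whether $|U|$ is finite or infinite. Therefore $\bigl|\bigcup_{x\in X}\hset{x}\bigr|=\sum_{x\in X}|\hset{x}|<|U|$, so there exists some $a\in U-\bigcup_{x\in X}\hset{x}$. Applying Lemma~\ref{l:solgen}~(ii) for each $x\in X$ in turn yields $\M,(a,v)\models\finalf{x}$ for every $x\in X$, whence $\M,(a,v)\models\bigwedge_{x\in X}\finalf{x}$ and so $\M,(u,v)\models\Dh^+\bigwedge_{x\in X}\finalf{x}$, as required.

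The only substantive issue is verifying that Lemma~\ref{l:solgen} really does transfer from the ambient $\F$ to the finite subgrid $\F^-$; this reduces to checking that the relevant structural data (boundedness of every $x\in X$, existence of a solution of $\con{\F^-}$, and the fact that unbounded members of $Y^-$ only appear as terminal nodes of $\to^2$-edges out of $X$) are preserved under the subgrid construction. All of these are immediate from the explicit choice of $Y^-$ in Lemma~\ref{l:badsqgrid}~(iii)(a), so the proof is genuinely a direct cardinality argument on top of Lemma~\ref{l:solgen}, and no new satisfiability construction is needed.
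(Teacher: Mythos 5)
Your proof is correct and follows essentially the same route as the paper's: apply Lemma~\ref{l:solgen} to the finite subgrid $\F^-$, use \eqref{alllarge} together with disjointness and $|U|=|V|$ to find a point $a\in U-\bigcup_{x\in X}\hset{x}$, and conclude via Lemma~\ref{l:solgen}~(ii). The extra checks you make (that boundedness and the $\ub{\F}{\cdot}$ data transfer to $\F^-$ because $Y^-$ contains all nodes of the paths $\Pub{\F}{x}$, and that the left-hand sum is a finite natural number) are implicit in the paper's terser argument and are verified correctly.
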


\begin{proof}
Suppose $\M$ is a model on $(U,\drel{U})\mprod(V,\drel{V})$ for some $U,V$ with $|U|=|V|>0$, and
$\M,(u,v)\models\solfsub$.
For every $z\in X\cup Y^-$, take the set $\hset{z}$ from Lemma~\ref{l:solgen}.
By Lemma~\ref{l:solgen}~(i),(iii), we have $\sum_{x\in X}|\hset{x}|< \sum_{y\in Y^-}|\hset{y}|$.
As $|U|=|V|$, there is $a\in U-\bigcup_{x\in X}\hset{x}$, and so 
$\M,(u,v)\models\consf$ follows by Lemma~\ref{l:solgen}~(ii).
\end{proof}


\subsection{Infinite Sahlqvist axiomatisation for $\dsqxd$}\label{dxdsqsahl}

Though in general generalised Sahlqvist formulas are more expressive than Sahlqvist
formulas \cite{gv06}, there are special settings when their axiomatic powers coincide \cite{gv01}.
Our bimodal language only has two \emph{monadic} modalities $\Dh$ and $\Dv$.
So our generalised Sahlqvist formulas (as defined in \S\ref{s:Sahlqvist} above) are special cases of the \emph{PCFs}
of \cite{gv01} (and of the \emph{inductive formulas} of \cite{gv06}). The modalities $\Dh$ and $\Dv$ are
\emph{self-reversive} 
in the sense that the formulas $p\to\Bh\Dh p$ and $p\to\Bv\Dv p$ belong to $\dsqxd$ 
(by \eqref{symmf}, \eqref{prodinsq} and \eqref {dxdcomm}).
Therefore, it follows from \cite[Thm.~4.10]{gv01} that 
there is an infinite axiomatisation for $\dsqxd$ consisting of Sahlqvist formulas. Moreover,
the Sahlqvist axioms can be obtained algorithmically from 
the generalised Sahlqvist formulas they are axiomatically equivalent with.


\section{Discussion}

We have shown that the 2D product logic $\dxd$ is non-finitely axiomatisable, and also given an 
infinite set $\Sigma_{\dxd}$ of Sahlqvist formulas axiomatising $\dxd$.
We have also proved that its `square' version \mbox{$\dsqxd$} (the modal counterpart 
of two-variable substitution and equality free first-order logic with counting to $2$) is 
non-finitely axiomatisable over $\dxd$, but can be axiomatised by adding infinitely
many Sahlqvist axioms to $\dxd$.
Here are some related issues and open problems:
\begin{enumerate}
\item
The two-player p-morphism game we defined for \clusters{} in the proof of Lemma~\ref{l:clusterpm}
can easily be generalised to arbitrary countable \grids{} $\F$ such that the analogue of
Claim~\ref{c:game} still holds for the game $\mathbb{G}(\F)$.  (Algebraically, this is the \emph{complete representation
game} \`a la Hirsch and Hodkinson \cite{Hirsch&Hodkinson97b}, for subdirectly irreducible atomic diagonal-free strict-cylindric algebras.)
So by Theorem~\ref{t:axdxd}, $\F$ validates the Sahqvist axioms in $\Sigma_{\dxd}$ iff
player $\exists$ has a winning strategy in $\mathbb{G}(\F)$. 
However, the precise connection between particular plays of $\mathbb{G}(\F)$ and the axioms is not clear.


\item 
One might also consider the `lopsided' product logics $\sxd$ and 
%
\[
\ssqxd=
\Log\bigl\{(U,\urel{U})\mprod(V,\drel{V}) : \mbox{$U,V$ are sets with $|U|=|V|>0$}\bigr\}.
\]
%
$\sxd$ is not finitely axiomatisable by Theorem~\ref{t:nonfinax}, and 
a proof very similar to that of Theorem~\ref{t:sqnonfinax} shows that $\ssqxd$ is not finitely axiomatisable over $\sxd$.
Further, using the proof pattern in \S\ref{proofmethod}, it is easy to show that $\sxd$ and $\ssqxd$
are axiomatisable by adding the Sahlqvist axiom $\Bh p\to p$  (expressing that
$\Rh$ is reflexive) to $\dxd$ and $\dsqxd$, respectively.

However, much simpler axioms for these logics can be obtained by actually repeating the proofs of
Theorems~\ref{t:axdxd} and \ref{t:sqax}, and using that  \grids{} are much simpler in these cases.
In particular, it can be shown directly (without using the algorithm of \cite{gv01}) that $\ssqxd$ is Sahlqvist axiomatisable:
In case of \grids{} $\F$ with reflexive $\Rh$, there are only `local' reasons for $\F$ not being the
p-morphic image of a product of a universal and a difference frame. Thus, switch \clusters{} play no role in an axiomatisation, 
and so there is no need for $\anteftwo$-like conjuncts in the antecedents of the axioms.

\item
Both $\dxd$ and $\dsqxd$ are elementarily generated modal logics 
(by Corollary~\ref{co:prodelem} or Theorem~\ref{t:axdxd}, and Theorem~\ref{t:sqax}, respectively).
Hodkinson \cite{Hodkinson06b} `synthesises' modal axioms for such logics from the first-order defining formulas via hybrid logic formulas.
It would be interesting to consider the connections between our axioms and the axioms obtained 
in this way.
Note that we did not use (or even compute) the first-order correspondents of our axioms.

\item
Hirsch and Hodkinson \cite{Hirsch&Hodkinson97,Hirsch&Hodkinson02} give an explicit infinite axiomatisation for (the algebraic counterpart of) the $n$-dimensional product logic $\Sfive^n$, for any $n\in\nNp$. The axioms are obtained by first expressing `universally' the winning strategy for $\exists$ in a two-player `representation' game, and then turning these `universal expressions' to modal formulas by using that there is a universal modality in $\Sfive^n$-frames. 
By the negative results of \cite{hv05,BulianH13},
infinitely many of these axioms cannot be Sahlqvist/canonical whenever $n\geq 3$.
It is easy to see that the method of \cite{Hirsch&Hodkinson97}  can also be used to give 
an explicit infinite axiomatisation for $\Diff^n$, for any $n\in\nNp$, so in particular for $\dxd$. 
As $\Sfive^n$ is finitely axiomatisable over $\Diff^n$ by \cite[Thm.~2.14]{Kurucz10},  infinitely many of the axioms obtained by the method of  \cite{Hirsch&Hodkinson97} cannot be Sahlqvist/canonical whenever $n\geq 3$.
But what about the $n=2$ case?
Are the axioms obtained for $\dxd$ this way Sahlqvist/canonical?

\item
Our axiomatisations are connected to solutions of some special kinds of
integer programming problems.
It would be interesting to understand these connections further, and possibly use
some known integer programming solver methods in order to find simpler axioms.
Note that 
Pratt-Hartmann \cite{Pratt-Hartmann10} also connects the type-structures of two-variable first-order logic with counting to integer programming.

\item
Here we considered the axiomatisation problem for the modal counterpart of two-variable first-order
logic with counting to $2$ only, and without equality and substitution/transposition of variables.
It would be interesting to get closer to the full two-variable fragment with counting, and
study richer languages that contain (some of the) modal operators
`simulating' these missing features (that is, cylindric and \mbox{(quasi-)} polyadic algebras with `graded'
cylindrifications corresponding to counting quantifiers); see \cite{Henkinetal85,Segerberg73,Marx&Venema97,FBdeCaro85}.
\end{enumerate}


\section*{Acknowledgements}

S\'ergio Marcelino's research was funded by FCT/MCTES through national funds and when applicable co-funded EU funds under the project UIDB/EEA/50008/2020. We thank the anonymous referees for their careful reading and expert criticism of the manuscript.




\end{document}